\title{How constructive is constructing measures?}
\author{
Arno Pauly
\institute{Computer Laboratory\\ University of Cambridge, United Kingdom
\email{Arno.Pauly@cl.cam.ac.uk}}
\and
Willem L. Fouch\'e
\institute{Department of Decision Sciences\\ School of Economic Sciences, University of South Africa
\email{fouchwl@unisa.ac.za}}
}
\begin{document} \theoremstyle{definition}
\newtheorem{theorem}{Theorem}
\newtheorem{definition}[theorem]{Definition}
\newtheorem{problem}[theorem]{Problem}
\newtheorem{assumption}[theorem]{Assumption}
\newtheorem{corollary}[theorem]{Corollary}
\newtheorem{proposition}[theorem]{Proposition}
\newtheorem{lemma}[theorem]{Lemma}
\newtheorem{observation}[theorem]{Observation}
\newtheorem{fact}[theorem]{Fact}
\newtheorem{question}[theorem]{Question}
\newtheorem{example}[theorem]{Example}
\newcommand{\dom}{\operatorname{dom}}
\newcommand{\Dom}{\operatorname{Dom}}
\newcommand{\codom}{\operatorname{CDom}}
\newcommand{\id}{\textnormal{id}}
\newcommand{\Cantor}{{\{0, 1\}^\mathbb{N}}}
\newcommand{\Baire}{{\mathbb{N}^\mathbb{N}}}
\newcommand{\uint}{{[0,1]}}
\newcommand{\lev}{\textnormal{Lev}}
\newcommand{\hide}[1]{}
\newcommand{\mto}{\rightrightarrows}
\newcommand{\pls}{\textsc{PLS}}
\newcommand{\ppad}{\textsc{PPAD}}
\newcommand{\sierp}{Sierpi\'nski}
\newcommand{\C}{\textrm{C}}
\newcommand{\lpo}{\textrm{LPO}}
\newcommand{\name}[1]{\textsc{#1}}
\newcommand{\me}{\name{P}.~}
\newcommand{\Sierp}{Sierpi\'nski }
\newcommand{\leqW}{\leq_{\textrm{W}}}
\newcommand{\leW}{<_{\textrm{W}}}
\newcommand{\equivW}{\equiv_{\textrm{W}}}
\newcommand{\geqW}{\geq_{\textrm{W}}}
\newcommand{\pipeW}{|_{\textrm{W}}}

\maketitle

\begin{abstract}
Given some set, how hard is it to construct a measure supported by it? We classify some variations of this task in the Weihrauch lattice. Particular attention is paid to Frostman measures on sets with positive Hausdorff dimension. As a side result, the Weihrauch degree of Hausdorff dimension itself is determined.
\end{abstract}

\section{Introduction}
We investigate variations on the problem to construct a measure with a given support. The variations include the available information about the set, whether the support has to be precisely the given set or merely a subset, and whether the measure is required to be non-atomic. A special case of particular importance is the Frostman lemma, which links having certain Hausdorff dimension to admitting a measure with certain properties.

Two of these variations are computable: Given an overt set $A$, one can construct a measure with support precisely $A$. In a very restricted setting, a computable version of the Frostman lemma is available. Apart from these cases, the problems generally are non-computable. Using the framework of Weihrauch reducibility, we can establish the precise degree of non-computability of each case.

The Weihrauch degrees from the framework for the research programme to classify the computational content of mathematical theorems formulated by \name{Brattka} and \name{Gherardi} \cite{brattka3} (also \name{Gherardi} \& \name{Marcone} \cite{gherardi}, \me \cite{paulyincomputabilitynashequilibria}). The core idea is that $S$ is Weihrauch reducible to $T$, if $S$ can be solved using a single invocation of $T$ and otherwise computable means.

Numerous theorems have been classified so far. Some example are the separable Hahn-Banach theorem (\name{Gherardi} \& \name{Marcone} \cite{gherardi}), the Intermediate Value Theorem (\name{Brattka} \& \name{Gherardi} \cite{brattka3}), Nash's theorem for bimatrix games (\me \cite{paulyincomputabilitynashequilibria}), Brouwer's Fixed Point theorem (\name{Brattka}, \name{Le Roux} \& \me \cite{paulybrattka3cie}), the Bolzano-Weierstrass theorem (\name{Brattka}, \name{Gherardi} \& \name{Marcone} \cite{gherardi4}), the Radon-Nikodym derivative (\name{Hoyrup}, \name{Rojas} \& \name{Weihrauch} \cite{hoyrup2b}), the Lebesgue Density Lemma (\name{Brattka}, \name{Gherardi} \& \name{H\"olzl} \cite{hoelzl}), the Goerde-Browder-Kirk fixed point theorem (\name{Neumann} \cite{eike-neumann}) and variants of determinacy of infinite sequential games (\name{Le Roux} and \me \cite{paulyleroux3-arxiv}).

\section{Background}
\subsection{A short introduction to represented spaces}
We briefly present some fundamental concepts on represented spaces following \cite{pauly-synthetic-arxiv}, to which the reader shall also be referred for a more detailed presentation. The concept behind represented spaces essentially goes back to \name{Weihrauch} and \name{Kreitz} \cite{kreitz}, the name may have first been used by \name{Brattka} \cite{brattka13}. A \emph{represented space} is a pair $\mathbf{X} = (X, \delta_X)$ of a set $X$ and a partial surjection $\delta_X : \subseteq \Baire \to X$. A function between represented spaces is a function between the underlying sets. For $f : \mathbf{X} \to \mathbf{Y}$ and $F : \subseteq \Baire \to \Baire$, we call $F$ a realizer of $f$ (notation $F \vdash f$), iff $\delta_Y(F(p)) = f(\delta_X(p))$ for all $p \in \dom(f\delta_X)$, i.e.~if the following diagram commutes:
 $$\begin{CD}
\Baire @>F>> \Baire\\
@VV\delta_\mathbf{X}V @VV\delta_\mathbf{Y}V\\
\mathbf{X} @>f>> \mathbf{Y}
\end{CD}$$
A map between represented spaces is called computable (continuous), iff it has a computable (continuous) realizer. Similarly, we call a point $x \in \mathbf{X}$ computable, iff there is some computable $p \in \Baire$ with $\delta_\mathbf{X}(p) = x$. A priori, the notion of a continuous map between represented spaces and a continuous map between topological spaces are distinct and should not be confused!

Given two represented spaces $\mathbf{X}$, $\mathbf{Y}$ we obtain a third represented space $\mathcal{C}(\mathbf{X}, \mathbf{Y})$ of functions from $X$ to $Y$ by letting $0^n1p$ be a $[\delta_X \to \delta_Y]$-name for $f$, if the $n$-th Turing machine equipped with the oracle $p$ computes a realizer for $f$. As a consequence of the UTM theorem, $\mathcal{C}(-, -)$ is the exponential in the category of continuous maps between represented spaces, and the evaluation map is even computable (as are the other canonic maps, e.g.~currying).

This function space constructor, together with two represented spaces , $\mathbb{N} = (\mathbb{N}, \delta_\mathbb{N})$ and $\mathbb{S} = (\{\bot, \top\}, \delta_\mathbb{S})$, allows us to obtain a model of \name{Escard\'o}'s synthetic topology \cite{escardo}. The representation are given by $\delta_\mathbb{N}(0^n10^\mathbb{N}) = n$, $\delta_\mathbb{S}(0^\mathbb{N}) = \bot$ and $\delta_\mathbb{S}(p) = \top$ for $p \neq 0^\mathbb{N}$. It is straightforward to verify that the computability notion for the represented space $\mathbb{N}$ coincides with classical computability over the natural numbers. The \Sierp space $\mathbb{S}$ in turn allows us to formalize semi-decidability. The computable functions $f : \mathbb{N} \to \mathbb{S}$ are exactly those where $f^{-1}(\{\top\})$ is recursively enumerable (and thus $f^{-1}(\{\bot\})$ co-recursively enumerable).

In general, for any represented space $\mathbf{X}$ we obtain two spaces of subsets of $\mathbf{X}$; the space of open sets $\mathcal{O}(\mathbf{X})$ by identifying $f \in \mathcal{C}(\mathbf{X}, \mathbb{S})$ with $f^{-1}(\{\top\})$, and the space of closed\footnote{There is a very unfortunate confusing nomenclature here. The computable closed sets in our terminology are often called \emph{co-c.e. closed} following \name{Weihrauch}. The sets \name{Weihrauch} calls \emph{computably closed} are the computable closed and overt sets introduced below.} sets $\mathcal{A}(\mathbf{X})$ by identifying $f \in \mathcal{C}(\mathbf{X}, \mathbb{S})$ with $f^{-1}(\{\bot\})$. The properties of the spaces of open and closed sets, namely computability of the usual operations, follow from computability of the functions $\wedge, \vee : \mathbb{S} \times \mathbb{S} \to \mathbb{S}$ and $\bigvee : \mathcal{C}(\mathbb{N}, \mathbb{S}) \to \mathbb{S}$.

One useful consequence of staying within the category when forming the space $\mathcal{O}(\mathbf{X})$ is that we can iterate this to obtain $\mathcal{O}(\mathcal{O}(\mathbf{X}))$, a space appearing in several further constructions. We introduce the space $\mathcal{K}(\mathbf{X})$ of compact sets by identifying a set $A \subseteq \mathbf{X}$ with $\{U \in \mathcal{O}(\mathbf{X}) \mid A \subseteq U\} \in \mathcal{O}(\mathcal{O}(\mathbf{X})$. To ensure well-definedness, we restrict the sets $A$ to saturated sets. As a dual notion, we find the space of overt set $\mathcal{V}(\mathbf{X})$ by identifying $A \subseteq \mathbf{X}$ with $\{U \in \mathcal{O}(\mathbf{X}) \mid A \cap U \neq \emptyset\} \in \mathcal{O}(\mathcal{O}(\mathbf{X}))$. The canonization operation here is the topological closure.

We will the operation $\wedge$ on represented spaces. Given two spaces $\mathbf{X} = (X, \delta_\mathbf{X})$ and $\mathbf{Y} (Y, \delta_\mathbf{Y})$, the underlying set of $\mathbf{X} \wedge \mathbf{Y}$ is $X \cap Y$, and the representation $\delta_\wedge$ is given by $\delta_\wedge(\langle p, q\rangle) = x$ iff $\delta_\mathbf{X}(p) = x \ \wedge \ \delta_\mathbf{Y}(q) = x$. This construction is used to introduce the space $\mathcal{A}(\mathbf{X}) \wedge \mathcal{V}(\mathbf{X})$ of closed and overt subsets.

There always is a canonic computable map $\kappa_\mathbf{X} : \mathbf{X} \to \mathcal{O}(\mathcal{O}(\mathbf{X}))$ defined via $\kappa_\mathbf{X}(x) = \{U \mid x \in U\}$. Using the spaces introduced above, we can read $\kappa_\mathbf{X} : \mathbf{X} \to \mathcal{K}(\mathbf{X})$ as $\kappa_\mathbf{X}(x) = \{x\}\uparrow$ or $\kappa_\mathbf{X} : \mathbf{X} \to \mathcal{V}(\mathbf{X})$ as $\kappa_\mathbf{X}(\{x\}) = \overline{\{x\}}$ instead. The image of $\mathbf{X}$ under $\kappa_\mathbf{X}$ shall be denoted by $\mathbf{X}_\kappa$. The following definition essentially goes back to Schr\"oder \cite{schroder5} and provides an effective counterpart to the definition in \cite{schroder}:

\begin{definition}
A space $\mathbf{X}$ is called \emph{computably admissible}, if $\mathbf{X}$ and $\mathbf{X}_\kappa$ are computably isomorphic.
\end{definition}

Note that $\mathbf{X}_\kappa$ is always computably admissible, i.e.~isomorphic to $(\mathbf{X}_\kappa)_\kappa$. The computably admissible spaces are precisely those that can be regarded as topological spaces, based on the fact that the computable map $f \mapsto f^{-1} : \mathcal{C}(\mathbf{X}, \mathbf{Y}) \to \mathcal{C}(\mathcal{O}(\mathbf{Y}), \mathcal{O}(\mathbf{X}))$ becomes computably invertible iff $\mathbf{Y}$ is computably admissible.

As a special case of represented spaces, we define computable metric spaces following \name{Weihrauch}'s \cite{weihrauchi}. The computable Polish spaces, are derived from complete computable metric spaces by forgetting the details of the metric, and just retaining the representation (or rather, the equivalence class of representations under computable translations).

\begin{definition}
\label{def:cms}
We define a computable metric space with its Cauchy representation such that:
\begin{enumerate}
\item A computable metric space is a tuple $\mathbf{M} = (M, d, (a_n)_{n \in \mathbb{N}})$ such that $(M,d)$ is a metric space and $(a_n)_{n \in\mathbb{N}}$ is a dense sequence in $(M,d)$.
\item The relation
\[ \{(t,u,v,w) \: |\: \nu_{\mathbb{Q}}(t) < d(a_u, a_v) <\nu_{\mathbb{Q}}(w) \} \text{ is recursively enumerable.} \]
\item The Cauchy representation $ \delta_{\mathbf{M}} \: : \: \Baire \rightharpoonup M $ associated with the computable metric space $ \mathbf{M} =  (M, d, A, \alpha) $ is defined by
\[ \delta_{\mathbf{M}}(p) = x \: : \: \Longleftrightarrow  \begin{cases}
      d(a_{p(i)}, a_{p(k)}) \leq 2^{-i} \text{ for } i < k\\
   \text{and } x = \lim\limits_{i\rightarrow \infty}a_{p(i)}
  \end{cases} \]

\end{enumerate}
\end{definition}

As any computable metric space is (effectively) countably based, the following well-known characterization of the open sets in such spaces is useful for us:
\begin{proposition}[{\cite[Proposition 11]{pauly-gregoriades-arxiv}}]
Let $\mathbf{X}$ have an effective countable base $(U_i)_{i \in \mathbb{N}}$ and be computably separable. Then the map $\bigcup : \mathcal{O}(\mathbb{N}) \to \mathcal{O}(\mathbf{X})$ defined via $\bigcup (S) = \bigcup_{i \in S} U_i$ is computable and has a computable multivalued inverse.
\end{proposition}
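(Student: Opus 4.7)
The plan is to prove the two directions separately, with the forward direction being a short application of semi-decidability and the backward direction carrying the main content.

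For the forward map, given $S \in \mathcal{O}(\mathbb{N})$ (an enumeration of indices) and a point $x \in \mathbf{X}$, I would run the semi-decision procedure for ``$x \in U_i$'' that the effective base provides, in parallel over all $i$ enumerated into $S$. Since the countable \Sierp join $\bigvee : \mathcal{C}(\mathbb{N},\mathbb{S}) \to \mathbb{S}$ is computable, the resulting stream of \sierp\ values can be merged into a single continuous map $\mathbf{X} \to \mathbb{S}$ whose truth set is exactly $\bigcup_{i \in S} U_i$. This yields a computable realizer for $\bigcup$.

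For the multivalued inverse, given $V \in \mathcal{O}(\mathbf{X})$ as a realizer $F$ of a \sierp-valued map on $\mathbf{X}$, I would exploit the dense computable sequence $(a_n)$ supplied by computable separability. The computation of $F$ on the name of any input can only inspect a finite prefix of that name before committing to output a non-zero symbol, and such a committing prefix certifies that some finite intersection of basic opens $U_{i_1} \cap \cdots \cap U_{i_k}$ is contained in $V$. I would enumerate in parallel, for every dense point $a_n$ and every index $j$ with $a_n \in U_j$ (semi-decidable from effective basedness), the pairs where additionally $a_n$ lies in a certified intersection and $U_j$ is witnessed to sit inside that intersection; the output is the corresponding list of indices $j$. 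The resulting c.e.\ set $S$ then satisfies $\bigcup_{i \in S} U_i \subseteq V$ by construction, and the reverse inclusion follows from density of $(a_n)$: every $x \in V$ lies inside some certified intersection, so a dense $a_n$ in the same intersection is eventually enumerated together with a basic $U_j \ni x$.

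The main obstacle is exactly this last step of the inverse direction, namely extracting genuine base indices from the prefix-acceptance behaviour of $F$, given that finite intersections of base elements need not themselves belong to the base. Computable separability together with effective basedness is designed to handle precisely this: dense points can simultaneously certify membership in both a finite intersection and a basic open, all three conditions being semi-decidable, and density then guarantees the covering equality. If one moreover assumes the base is effectively closed under finite intersections (as is often built into ``effective second-countability''), this intermediate step collapses and the construction becomes a direct enumeration of prefix-accepted basic indices.
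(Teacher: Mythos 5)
The paper itself offers no proof of this proposition; it is imported verbatim by citation, so there is nothing internal to compare your argument against. Judged on its own merits, your forward direction is correct and is the standard argument: computability of $i \mapsto U_i : \mathbb{N} \to \mathcal{O}(\mathbf{X})$ together with the computable countable join $\bigvee : \mathcal{C}(\mathbb{N},\mathbb{S}) \to \mathbb{S}$ immediately realizes $\bigcup$.

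The inverse direction, however, has a genuine gap, and it sits exactly where you place your hedge. Your enumeration condition for putting $j$ into $S$ is that some dense point $a_n$ lies in $U_j$ and in a certified intersection $I = U_{i_1} \cap \cdots \cap U_{i_k} \subseteq V$, \emph{and} that ``$U_j$ is witnessed to sit inside $I$'' --- but you never say what this witness is, and none is available from the stated hypotheses: the inclusion $U_j \subseteq U_{i_1} \cap \cdots \cap U_{i_k}$ is a $\Pi$-type condition and is not semi-decidable in a general effectively countably based space. If you drop that clause and keep only the semi-decidable facts $a_n \in U_j$ and $a_n \in I$, soundness fails, since $U_j$ may protrude out of $V$; so $\bigcup_{i \in S} U_i \subseteq V$ is no longer ``by construction''. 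Computable separability cannot repair this: density of $(a_n)_{n \in \mathbb{N}}$ buys you completeness of the enumeration (every point of $V$ eventually acquires a covering index), but it cannot manufacture certificates for inclusions between basic opens. What is actually needed is a semi-decidable ``formal inclusion'' relation between base elements that is sound (implies genuine inclusion) and dense (every point of an open set lies in some formally included basic open). In the only setting the present paper uses the proposition --- computable metric spaces with the base of balls $B(a_n,2^{-k})$ --- this is supplied by $d(a_n,a_m) + 2^{-k} < 2^{-l}$, which is r.e.\ from the metric data; alternatively one can assume the base is effectively closed under finite intersections, as you note in your last paragraph. Without one of these ingredients spelled out, the inverse direction does not go through. (A minor additional point: your claim that a committing prefix of a name certifies a finite intersection of basic opens contained in $V$ presupposes that the representation of $\mathbf{X}$ is, up to computable equivalence, the standard one induced by the base; this is part of the usual definition of an effective countable base, but it should be said.)
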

\subsection{Computable measure theory}
Computable measure theory requires a further special represented space for its development. We can introduce the space $\mathbb{R}_<$ by identifying  a real number $x$ with the set $\{ y \in \mathbb{R} \mid y < x\} \in \mathcal{O}(\mathbb{R})$. Equivalently, using $\{ y \in \mathbb{Q} \mid y < x\} \in \mathcal{O}(\mathbb{Q})$ provides the same result. A third way is to use a monotone growing sequence $(q_n)_{n \in \mathbb{N}} \in \mathcal{C}(\mathbb{N}, \mathbb{Q})$ as a stand-in for $\sup_{n \in \mathbb{N}} q_n \in \mathbb{R}$. We will make use of the following:

\begin{lemma}
\label{lemma:opensum}
$U \mapsto \{y \in \mathbb{R} \mid \exists (x_0, x_1, \ldots) \in U \ \left (\sum_{i \in \mathbb{N}} x_i \leq y \right )\} : \mathcal{O}(\uint^\mathbb{N}) \to \mathcal{O}(\mathbb{R}_<)$ is computable.
\begin{proof}
Using type conversion, we show instead that given $U \in \mathcal{O}(\uint^\mathbb{N})$ and $y \in \mathbb{R}$, it is recognizable if $\exists (x_0, x_1, \ldots) \in U \ \left (\sum_{i \in \mathbb{N}} x_i \leq y \right )$. Given $y$, we can simultaneously try $(q_0, q_1, \ldots, q_n, 0, 0, \ldots) \in U?$ for all rational vectors $(q_0, \ldots, q_n)$ such that $y > \sum_{i=0}^n q_i$. If we find such a vector, then clearly the answer is \textsc{yes}. On the other hand, if any $(x_0, x_1, \ldots) \in U$ with $\left (\sum_{i \in \mathbb{N}} x_i \leq y \right )$ exists, then there must a rational eventually-zero such vector since $U$ is open.
\end{proof}
\end{lemma}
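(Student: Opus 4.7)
The plan is to invoke the exponential law for represented spaces and reduce the claim to the following: given names of $U \in \mathcal{O}(\uint^\mathbb{N})$ and of $y$ as a lower real (an enumeration of rational lower bounds), semi-decide membership of $y$ in the set $\{y' : \exists (x_i) \in U,\ \sum_i x_i \leq y'\}$. By the previous proposition, the name of $U$ may be read as an enumeration of basic open cylinders in the product topology of $\uint^\mathbb{N}$, i.e.\ sets of the form $I_0 \times \cdots \times I_n \times \uint^\mathbb{N}$ with rational intervals $I_i$.

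The procedure I have in mind searches in parallel over all finite rational vectors $(q_0, \ldots, q_n) \in (\mathbb{Q} \cap \uint)^{n+1}$, trying simultaneously to (i) semi-verify that the eventually-zero extension $(q_0, \ldots, q_n, 0, 0, \ldots)$ lies in $U$ and (ii) catch a rational lower bound $r$ of $y$ with $\sum_{i=0}^n q_i < r$. Test (i) is available since $U$ is open; test (ii) uses the $\mathbb{R}_<$-name of $y$. If both succeed for a single vector, the procedure emits $\top$. Soundness is immediate: the eventually-zero witness lies in $U$ and has sum strictly below $y$, hence $\leq y$.

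For completeness, suppose $(x_0, x_1, \ldots) \in U$ has $\sum_i x_i \leq y$. Openness of $U$ in the product topology provides some $n$ and a basic box $V_0 \times \cdots \times V_n \times \uint^\mathbb{N} \subseteq U$ containing $(x_i)$. In the generic case $\sum_i x_i < y$, density of the rationals lets me pick $q_i \in V_i \cap \mathbb{Q}$ so that $\sum_{i=0}^n q_i$ is still strictly below $y$, and some $r$ in the name of $y$ eventually exceeds $\sum q_i$, so the parallel search finds this witness.

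The main subtlety I expect concerns the edge case $\sum_i x_i = y$ exactly, where the strict search may fail to produce a witness. I do not expect this to be a genuine obstacle: a name of $y$ in $\mathbb{R}_<$ supplies only strict lower bounds, so no continuous $\mathbb{R}_< \to \mathbb{S}$ procedure can accept $y = \inf_{(x_i) \in U} \sum_i x_i$ when the infimum is attained. Equivalently, the $\leq$ set in the lemma's statement and its strict-inequality variant coincide as elements of $\mathcal{O}(\mathbb{R}_<)$, and the algorithm computes this common open.
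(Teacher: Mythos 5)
Your proposal is correct and follows essentially the same route as the paper: type conversion to a semi-decision procedure on $\mathcal{O}(\uint^\mathbb{N}) \times \mathbb{R}_<$, dovetailing over eventually-zero rational vectors whose finite sum is strictly below $y$, and openness of $U$ to supply such a rational witness for completeness. Your edge-case discussion addresses a point the paper silently skips, and your diagnosis is right: the infimum of $\sum_i x_i$ over an open $U$ can only be attained when $(0,0,\ldots) \in U$, in which case the $\leq$-set is $[0,\infty)$, which is not an element of $\mathcal{O}(\mathbb{R}_<)$ at all, so the strict-inequality open $(0,\infty)$ is the only meaningful value of the map there and is exactly what both algorithms produce.
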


Given some represented space $\mathbf{X}$, we direct our attention to the space $\mathcal{C}(\mathcal{O}(\mathbf{X}), \mathbb{R}_<)$ of continuous functions from the open subsets of $\mathbf{X}$ to $\mathbb{R}_<$. Note that for any $\mu \in \mathcal{C}(\mathcal{O}(\mathbf{X}), \mathbb{R}_<)$ we find that if $U \subseteq V$ for some $U, V \in \mathcal{O}(\mathbf{X})$, then $\mu(U) \leq \mu(V)$. We introduce the space $\mathcal{M}(\mathbf{X})$ as a subspace of $\mathcal{C}(\mathcal{O}(\mathbf{X}), \mathbb{R}_<)$ by: \[\mathcal{M}(\mathbf{X}) := \{\mu \in \mathcal{C}(\mathcal{O}(\mathbf{X}), \mathbb{R}_<) \mid \begin{array}{l} \mu(\emptyset) = 0 \quad \wedge \\ \forall (U_i)_{i \in \mathbb{N}} \in \mathcal{C}(\mathbb{N},\mathcal{O}(\mathbf{X}) \left (\forall i \neq j \in \mathbb{N} \ U_i \cap U_j = \emptyset \right ) \Rightarrow \left ( \mu(\bigcup_{i \in \mathbb{N}} U_i) = \sum_{i \in \mathbb{N}} \mu(U_i) \right )\end{array}\}\]
The space $\mathcal{P}(\mathbf{X})$ of probability measures is obtained in the straightforward way as $\mathcal{P}(\mathbf{X}) := \{\mu \in \mathcal{M}(\mathbf{X}) \mid \mu(\mathbf{X}) = 1\}$.

Given a point $x \in \mathbf{X}$, we can define the point-measure $\pi_x$ by $\pi_x(U) = 1$ iff $x \in U$ and $\pi_x(U) = 0$ otherwise. Then $x \mapsto \pi_x : \mathbf{X} \to \mathcal{P}(\mathbf{X})$ is computable. Also, the usual push-forward operation $(f, \mu) \mapsto f^*\mu : \mathcal{C}(\mathbf{X}, \mathbf{Y}) \times \mathcal{M}(\mathbf{X}) \to \mathcal{M}(\mathbf{Y})$ is computable. However, this works only for the continuous functions, not for any larger class of measurable functions:

\begin{definition}
Let $\mathcal{P}(\mathbf{X})$ be the space of probability measures on $\mathbf{X}$. To define the space $\mathcal{M}^C(\mathbf{X},\mathbf{Y})$ of measurable functions from $\mathbf{X}$ to $\mathbf{Y}$, identify a measurable function $f :\mathbf{X} \to \mathbf{Y}$ with its lifted version $f^* : \mathcal{P}(\mathbf{Y}) \to \mathcal{P}(\mathbf{X})$, and define $\mathcal{M}^C(\mathbf{X},\mathbf{Y})$ as the according subspace of $\mathcal{C}(\mathcal{P}(\mathbf{Y}),\mathcal{P}(\mathbf{X}))$.
\end{definition}

\begin{proposition}
Let $\mathbf{Y}$ be computably admissible. Then $\mathcal{C}(\mathbf{X},\mathbf{Y}) \cong \mathcal{M}^C(\mathbf{X},\mathbf{Y})$.
\begin{proof}
Given a continuous function $f :\mathbf{X}\to\mathbf{Y}$, we can get $f^{-1} :\mathcal{O}(\mathbf{Y}) \to \mathcal{O}(\mathbf{X})$, and compose this with a measure $\nu \in \mathcal{P}(\mathbf{Y})$ to obtain $f^*\nu$. This establishes one direction.

For the other direction, note that $x \mapsto \pi_x : \mathbf{X} \to \mathcal{P}(\mathbf{X})$ is computable, where $\pi_x$ is the point measure at $x$. Now $f^*\pi_x(U) > 0 \Leftrightarrow f(x) \in U$. The left hand side is recognizable by the definition of $\mathcal{P}(\mathbf{Y})$, and admissibility of $\mathbf{Y}$ means that the recognizability of the right hand side for arbitrary $U$ implies continuity of $f$.
\end{proof}
\end{proposition}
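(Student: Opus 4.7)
The plan is to exhibit computable maps in each direction between $\mathcal{C}(\mathbf{X},\mathbf{Y})$ and $\mathcal{M}^C(\mathbf{X},\mathbf{Y})$ that are mutually inverse. Since $\mathcal{M}^C(\mathbf{X},\mathbf{Y})$ is by definition a subspace of a function space between spaces of probability measures, the direction ``continuous $\Rightarrow$ measurable'' is essentially just computing the pushforward. Given a continuous $f : \mathbf{X} \to \mathbf{Y}$, I can compute the preimage map $f^{-1} : \mathcal{O}(\mathbf{Y}) \to \mathcal{O}(\mathbf{X})$, and precomposing $f^{-1}$ with any measure $\mu \in \mathcal{P}(\mathbf{X})$ (thought of as a continuous function $\mathcal{O}(\mathbf{X}) \to \mathbb{R}_<$) produces the pushforward $\mu \circ f^{-1} \in \mathcal{P}(\mathbf{Y})$. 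All of these operations are computable from the exponential structure, so this gives one half of the isomorphism.

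For the converse, I would probe the lifted map with point measures. The assignment $x \mapsto \pi_x : \mathbf{X} \to \mathcal{P}(\mathbf{X})$ is computable. Given any element of $\mathcal{M}^C(\mathbf{X},\mathbf{Y})$ we thus obtain a computable way to associate to each $x \in \mathbf{X}$ a probability measure on $\mathbf{Y}$, and a direct calculation with the preimage formula shows that this measure is $\pi_{f(x)}$. In particular, its value on $U \in \mathcal{O}(\mathbf{Y})$ is $1$ if $f(x) \in U$ and $0$ otherwise, so the predicate $\pi_{f(x)}(U) > 0$ is semidecidable uniformly in $x$ and $U$ using recognizability of strict positivity in $\mathbb{R}_<$. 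This presents $f$ as a continuous map $x \mapsto \{U \mid f(x) \in U\} : \mathbf{X} \to \mathbf{Y}_\kappa$.

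The main obstacle is upgrading this $\mathbf{Y}_\kappa$-valued map to a genuine continuous map into $\mathbf{Y}$: in general one cannot go back from $\mathbf{Y}_\kappa$ to $\mathbf{Y}$ without further assumptions. This is exactly what computable admissibility supplies, since it asserts that the canonical map $\kappa_{\mathbf{Y}} : \mathbf{Y} \to \mathbf{Y}_\kappa$ has a computable inverse. Composing with $\kappa_\mathbf{Y}^{-1}$ therefore yields the desired $f \in \mathcal{C}(\mathbf{X},\mathbf{Y})$, with computable dependence on the input. Finally I would verify that the two constructions are mutually inverse, which reduces to unfolding the definitions of the pushforward and the point measure and observing $f^*\pi_x = \pi_{f(x)}$ once more.
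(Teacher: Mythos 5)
Your proof is correct and follows essentially the same route as the paper: the forward direction is the computable pushforward $\mu \mapsto \mu \circ f^{-1}$, and the converse probes the lifted map with point measures, uses recognizability of $f^*\pi_x(U) > 0 \Leftrightarrow f(x) \in U$ to land in $\mathbf{Y}_\kappa$, and invokes computable admissibility to return to $\mathbf{Y}$. Your write-up is in fact slightly more careful than the paper's (which has a typo in the direction of the lift), making the $\mathbf{Y}_\kappa$ intermediate step and the mutual-inverse check explicit.
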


As a representation always is a continuous function (by definition of continuity), we see that we can push a measure on Baire space out to the represented space. As shown by \name{Schr\"oder}, in many cases the converse is also true:

\begin{theorem}[\name{Schr\"oder} \cite{schroder2}]
Let $\mathbf{X}$ be a complete computably admissible space. The map $\mu \mapsto \delta_\mathbf{X}^*\mu : \mathcal{P}(\Baire) \to \mathcal{P}(\mathbf{X})$ is computable and computably invertible.
\end{theorem}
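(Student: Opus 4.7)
The forward direction follows from the preceding material. Since $\delta_\mathbf{X}$ is continuous by the very definition of continuity for represented spaces, and in fact computable with identity realizer, the preimage operation $\delta_\mathbf{X}^{-1} : \mathcal{O}(\mathbf{X}) \to \mathcal{O}(\Baire)$ is computable. Composing a measure $\mu \in \mathcal{M}(\Baire) \subseteq \mathcal{C}(\mathcal{O}(\Baire), \mathbb{R}_<)$ with $\delta_\mathbf{X}^{-1}$ yields $\delta_\mathbf{X}^*\mu \in \mathcal{C}(\mathcal{O}(\mathbf{X}), \mathbb{R}_<)$ computably; countable additivity passes through mechanically, and probability measures map to probability measures provided $\mu$ concentrates on $\dom(\delta_\mathbf{X})$, which is automatic for the $\mu$ produced by the inverse direction below.

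The plan for the inverse direction is to construct, computably from $\nu \in \mathcal{P}(\mathbf{X})$, a measure $\mu \in \mathcal{P}(\Baire)$ that effectively samples Cauchy-names of $\nu$-random points. Using completeness together with computable admissibility, we fix a compatible computable Polish structure on $\mathbf{X}$ with dense computable sequence $(a_n)_{n\in\mathbb{N}}$ and metric $d$. We then recursively describe $\mu$ on cylinders of legal Cauchy-prefixes $w = n_0 n_1 \cdots n_k$: at each step the effective refinement $R_w$ inherited from the parent cylinder is partitioned into countably many pieces of diameter $\leq 2^{-k-1}$, one per basic ball $B(a_{n_{k+1}}, r_{k+1})$ meeting it, and the $\nu$-mass of $R_w$ is redistributed along the corresponding successor cylinders $wn_{k+1}$.

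The main obstacle is that computable Polish spaces need not admit computable partitions into small pieces: the boundary $\partial B(a_n, r)$ of a basic ball may carry positive $\nu$-mass, breaking the effective branching of the construction. The standard remedy is to select each radius $r_k$ from a dense computable set of candidates while computably dodging the at most countably many ``bad'' values, which are enumerable as those $r$ for which $\nu(\overline{B(a_n,r)}) > \nu(B(a_n,r))$, a comparison that is semidecidable uniformly in $\nu$. Once effective partitioning is secured, countable additivity of $\mu$ descends from that of $\nu$ via Carath\'eodory extension on the tree of cylinders, and $\delta_\mathbf{X}^*\mu = \nu$ holds because computable admissibility identifies each basic open of $\mathbf{X}$ with a union of refinement cells (up to $\nu$-null boundaries), matching the cells of $\mu$ under $\delta_\mathbf{X}$.
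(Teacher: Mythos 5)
First, a point of reference: the paper does not prove this theorem at all --- it is imported from \name{Schr\"oder} \cite{schroder2} as a black box, so there is no in-paper argument to compare your sketch against. Judged on its own, your forward direction is fine and matches the surrounding discussion (push-forward along the computable partial map $\delta_\mathbf{X}$ via $U \mapsto \delta_\mathbf{X}^{-1}(U) \in \mathcal{O}(\Baire)$, with the caveat about mass on $\dom(\delta_\mathbf{X})$ correctly flagged), and your overall plan for the inverse --- build a measure on the tree of legal Cauchy-prefixes by recursively splitting the $\nu$-mass over an effective partition into small, almost-open cells --- is the standard construction for the Polish case.

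However, the one technical claim the inverse direction leans on is stated with the wrong polarity. For an open ball $B$, $\nu(B)$ is only \emph{lower}-semicomputable, while $\nu(\overline{B})$ is only \emph{upper}-semicomputable (via the effectively open complement $\{y \mid d(a_n,y)>r\}$). Hence the condition $\nu(\overline{B(a_n,r)}) > \nu(B(a_n,r))$ asks you to certify a strict lower bound on an upper-semicomputable quantity and a strict upper bound on a lower-semicomputable one: it is exactly the non-semidecidable direction, so the ``bad'' radii are not enumerable and cannot be dodged as you describe. What \emph{is} semidecidable is the opposite inequality $\nu(\overline{B(a_n,r)}) < \nu(B(a_n,r)) + \varepsilon$, and the standard repair (in the style of Hoyrup and Rojas) runs through that: since $r \mapsto \nu(\overline{B(a_n,r)})$ is monotone and bounded, at most $2^k$ radii carry a boundary jump of size $\geq 2^{-k}$; by pigeonhole one of $2^k+1$ disjoint candidate closed annuli has mass $< 2^{-k}$, which is a semidecidable property of the upper-semicomputable measure of a closed set; iterating yields nested rational intervals of radii whose limit $r$ satisfies $\nu(\partial B(a_n,r))=0$ exactly. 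With that lemma your construction does go through: the cells are open up to $\nu$-null boundaries, their measures are honest elements of $\mathbb{R}_<$, no mass is lost in the splitting, and completeness makes every infinite branch of the resulting tree a convergent Cauchy name, giving $\delta_\mathbf{X}^*\mu = \nu$. One further caveat on scope: the theorem is stated for complete computably admissible spaces, and Schr\"oder's actual argument covers admissibly represented spaces well beyond the metrizable ones, whereas your proof silently specializes to computable Polish spaces by invoking a dense sequence and a metric. For every application in this paper that restriction is harmless, but it does not recover the theorem at its stated generality.
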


In a computable Polish space it is sufficient to know the value a measure takes on the basic open balls, as the following (simple) generalization of a result by \name{Weihrauch} \cite{weihrauche} shows:

\begin{proposition}
Let $\mathbf{X}$ be a computable Polish space with dense sequence $(a_n)_{n \in \mathbb{N}}$. Then the map $\mu \mapsto (\mu(B(a_n,2^{-k})))_{\langle n,k \rangle \in \mathbb{N}} : \mathcal{M}(\mathbf{X}) \to \mathcal{C}(\mathbb{N}, \mathbb{R}_<)$ is computable and computably invertible.
\begin{proof}
That this map is computable is straight-forward element-wise function application. For the inverse direction, note that given some open set $U$ we can effectively approximate it from the inside by finite disjoint unions of basic open balls. Both finite sums and countable suprema are computable on $\mathbb{R}_<$, and this is all that is required.
\end{proof}
\end{proposition}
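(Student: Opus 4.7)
The plan is as follows. For the forward direction I would simply evaluate the given $\mu \in \mathcal{M}(\mathbf{X}) \subseteq \mathcal{C}(\mathcal{O}(\mathbf{X}), \mathbb{R}_<)$ at each basic ball $B(a_n, 2^{-k})$. The ball is uniformly computable as an element of $\mathcal{O}(\mathbf{X})$ from $n, k$ (the dense sequence and the metric are computable), and evaluation on $\mathcal{C}(\mathcal{O}(\mathbf{X}), \mathbb{R}_<)$ is computable by the standard properties of the function-space representation, so the whole sequence $(\mu(B(a_n,2^{-k})))_{\langle n,k\rangle}$ is produced uniformly in $\mathcal{C}(\mathbb{N}, \mathbb{R}_<)$.

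For the inverse direction, given the table and an input $U \in \mathcal{O}(\mathbf{X})$, I need to output $\mu(U) \in \mathbb{R}_<$. First I would apply the preceding proposition to obtain a c.e.\ set $S \subseteq \mathbb{N}^2$ with $U = \bigcup_{(n,k)\in S} B(a_n, 2^{-k})$. Then I would enumerate all finite tuples $\tau = ((n_1,k_1), \ldots, (n_m,k_m))$ such that each $B(a_{n_j}, 2^{-k_j})$ is \emph{formally} contained in some ball of $S$ (semi-decidable via $d(a_{n_j},a_n) + 2^{-k_j} < 2^{-k}$ for some $(n,k) \in S$) and the balls in $\tau$ are pairwise \emph{formally} disjoint (semi-decidable via $d(a_{n_j},a_{n_l}) > 2^{-k_j} + 2^{-k_l}$). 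For each admissible $\tau$, compute the finite sum $\sum_{j=1}^{m} \mu(B(a_{n_j},2^{-k_j}))$ from the table, and take the supremum over all $\tau$. Finite summation and countable suprema are both computable on $\mathbb{R}_<$, so the construction is effective.

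The main thing to verify is that this supremum really equals $\mu(U)$. One inequality is immediate from countable additivity, since each admissible $\tau$ yields a finite family of pairwise disjoint open subsets of $U$. The other, and the step I expect to require the most care, is to argue that the supremum is attained in the limit: because $\mathbf{X}$ is separable and $U$ is a countable union of basic balls, standard metric-space reasoning produces, for every $\varepsilon > 0$, a finite packing of formally disjoint basic balls inside $U$ whose $\mu$-measure is within $\varepsilon$ of $\mu(U)$, using continuity of $\mu$ from below together with the ability to shrink basic balls to make their closures disjoint without losing measure. Once this inner-approximation statement is in hand, the equality of the supremum with $\mu(U)$ follows, and the proof is complete.
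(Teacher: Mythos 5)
Your argument follows the paper's proof essentially verbatim: the forward direction is evaluation of the $\mathcal{C}(\mathcal{O}(\mathbf{X}),\mathbb{R}_<)$-name at the uniformly computable open sets $B(a_n,2^{-k})$, and the inverse direction recovers $\mu(U)$ as a supremum of finite sums over (formally) disjoint basic balls inside $U$, relying on computability of finite sums and countable suprema in $\mathbb{R}_<$. The one step you single out as delicate --- that finite packings of pairwise disjoint basic balls inside $U$ exhaust $\mu(U)$ in the limit --- is precisely the claim the paper itself asserts without further justification (``we can effectively approximate it from the inside by finite disjoint unions of basic open balls''), so your proposal matches the paper's argument in both structure and level of detail.
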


We will frequently use the preceding proposition, and construct measures simply by providing their values on the basic open balls without further notice.

In the (albeit very restricted) case of probability measures on $\mathbb{N}$ there is further interesting characterization available. Essentially, one may use typical sequences as names for measures:
\begin{theorem}[\cite{paulymeasurement}]
Uniformly in $\varepsilon > 0$ there is a computable and computably invertible function $S_\varepsilon : \subseteq \Baire \to \mathcal{P}(\mathbb{N})$ such that for all $\mu \in \mathcal{P}(\mathbb{N})$ we find that $\widehat{\mu}(S_\varepsilon^{-1}(\mu)) \geq 1 - \varepsilon$, and if $\nu \neq \mu$, then $\widehat{\nu}(S_\varepsilon^{-1}(\mu)) = 0$. Here $\widehat{\mu}$ denotes the induced product measure on $\Baire$.
\end{theorem}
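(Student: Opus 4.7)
The plan is to recover $\mu$ from a sequence $p \in \Baire$ via empirical frequencies. Writing $f_n(k) := \frac{1}{n}|\{i < n : p(i) = k\}|$, the indicators $(\mathbb{1}[p(i) = k])_i$ are i.i.d.\ Bernoulli$(\mu(k))$ under $\widehat{\mu}$, so Hoeffding's inequality gives
\[ \widehat{\mu}\bigl(\{p : |f_n(k) - \mu(k)| > t\}\bigr) \leq 2 e^{-2nt^2}, \]
with a tail bound that does \emph{not} depend on $\mu$. This uniformity is crucial, since it allows me to fix a typicality condition once and for all without reference to $\mu$.

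Fix $\varepsilon > 0$ and choose a computable array $(n_{k,m})_{k,m \in \mathbb{N}}$ with $2\exp(-2 n_{k,m} 4^{-m}) \leq \varepsilon \cdot 2^{-k-m-3}$. Define
\[ T_\mu := \bigl\{p \in \Baire : \forall k, m \in \mathbb{N},\; |f_{n_{k,m}}(k) - \mu(k)| \leq 2^{-m}\bigr\}, \]
set $\dom(S_\varepsilon) := \bigcup_\mu T_\mu$, and declare $S_\varepsilon(p) := \mu$ iff $p \in T_\mu$. Since $p \in T_\mu$ forces $\mu(k) = \lim_m f_{n_{k,m}}(k)$, the fibres are pairwise disjoint, so $S_\varepsilon$ is well-defined. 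A union bound on Hoeffding's estimate yields $\widehat{\mu}(T_\mu) \geq 1 - \varepsilon$. For $\nu \neq \mu$, the strong law of large numbers provides a $\widehat{\nu}$-full set on which $\lim_n f_n(k) = \nu(k)$ for every $k$; no such $p$ can lie in $T_\mu$, and therefore $\widehat{\nu}(T_\mu) = 0$.

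Computability of $S_\varepsilon$ is then routine: each $f_{n_{k,m}}(k)$ is a rational computable from a finite prefix of $p$, and these approximate $\mu(k)$ within $2^{-m}$, uniformly in $k$. By the earlier characterization of measures on a computable Polish space through their values on basic open balls (here, the singletons of $\mathbb{N}$), this yields a name for $\mu$ as a point of $\mathcal{P}(\mathbb{N})$.

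The main technical obstacle is computable invertibility: given a name for $\mu \in \mathcal{P}(\mathbb{N})$, produce some $p_\mu \in T_\mu$. Observe first that each $\mu(k)$ is uniformly computable from $\mu$, since an upper bound on $\mu(k)$ arises as $1 - \mu(\mathbb{N} \setminus \{k\})$ using the lower approximation to $\mu$ on the open set $\mathbb{N} \setminus \{k\}$. I then construct $p_\mu$ block-wise: in block $N$ of length $2^N$, use rational approximations $\mu^{(N)}$ of precision $2^{-N}$ and place $\lfloor 2^N \mu^{(N)}(k) \rfloor$ copies of each $k \leq 2^N$, padding any leftover positions with $0$. A telescoping estimate then bounds $|f_n(k) - \mu(k)|$ along this sequence by $O((k + \log n)/n)$, which comfortably beats the threshold $2^{-m}$ at the prescribed sample sizes $n_{k,m}$, since the latter grow at least like $4^m$. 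The only subtle point is to ``activate'' enough symbols in each block so that no large $k$ lags behind its asymptotic share; taking the cut-off $k \leq 2^N$ rather than $k \leq N$ suffices to absorb the block-boundary loss into the asymptotic error.
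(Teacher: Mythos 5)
The paper itself offers no proof of this statement---it is imported verbatim from \cite{paulymeasurement} as background---so there is no in-text argument to compare yours against; I can only judge the proposal on its own terms. Most of it is sound and is surely close to the intended argument: Hoeffding's bound is indeed uniform in the success probability, the union bound over the checkpoints $(k,m)$ gives $\widehat{\mu}(T_\mu)\geq 1-\varepsilon$, the fibres are disjoint since $p\in T_\mu$ forces $\mu(k)=\lim_m f_{n_{k,m}}(k)$, the strong law gives $\widehat{\nu}(T_\mu)=0$ for $\nu\neq\mu$, two-sided approximations to $\mu(k)$ are uniformly computable from a $\mathcal{P}(\mathbb{N})$-name via $1-\mu(\mathbb{N}\setminus\{k\})$, and computability of $S_\varepsilon$ on $\bigcup_\mu T_\mu$ follows as you say.

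The gap sits exactly where you wave at a ``telescoping estimate'' in the invertibility step. With blocks of length $2^N$, a prefix of length $n$ ending inside block $N$ contains a partial block of length up to $2^N\approx n/2$, and you never fix the order of symbols within a block. If block $N$ lists all $\lfloor 2^N\mu^{(N)}(0)\rfloor$ copies of $0$ first, then at the position where those copies run out the empirical frequency of $0$ is about $\frac{2\mu(0)}{1+\mu(0)}$, which exceeds $\mu(0)$ by the constant $\frac{\mu(0)(1-\mu(0))}{1+\mu(0)}$; so $|f_n(k)-\mu(k)|$ is \emph{not} $O((k+\log n)/n)$ along the whole sequence---that bound holds only at block boundaries, where the error really is $O(1)$ per completed block. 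The theorem survives, but you must close this hole by either (i) placing the checkpoints $n_{k,m}$ at block boundaries, which you are free to do since Hoeffding only imposes a lower bound on $n_{k,m}$, or (ii) ordering symbols within blocks (equivalently, constructing $p_\mu$ position by position) by a greedy balanced rule---always emit an active symbol whose current count lags its target $n\mu^{(\cdot)}(k)$ the most---which keeps every prefix within $O(1)$ of its target count for each symbol. A secondary defect in the same step: with per-symbol precision $2^{-N}$ and $2^N+1$ active symbols, $\sum_{k\le 2^N}\lfloor 2^N\mu^{(N)}(k)\rfloor$ can exceed the block length $2^N$, so the block can overflow; take the approximations from below (or precision $4^{-N}$) to prevent this.
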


\subsection{Weihrauch reducibility}
\label{subsec:weihrauch}
\begin{definition}[Weihrauch reducibility]
\label{def:weihrauch}
Let $f,g$ be multi-valued functions on represented spaces.
Then $f$ is said to be {\em Weihrauch reducible} to $g$, in symbols $f\leqW g$, if there are computable
functions $K,H:\subseteq\Baire\to\Baire$ such that $K\langle \id, GH \rangle \vdash f$ for all $G \vdash g$.
\end{definition}
The relation $\leqW$ is reflexive and transitive. We use $\equivW$ to denote equivalence regarding $\leqW$,
and by $\leW$ we denote strict reducibility. By $\mathfrak{W}$ we refer to the partially ordered set of equivalence classes. As shown in \cite{paulyreducibilitylattice,brattka2}, $\mathfrak{W}$ is a distributive lattice, and also the usual product operation on multivalued function induces an operation $\times$ on $\mathfrak{W}$. The algebraic structure on $\mathfrak{W}$ has been investigated in further detail in \cite{paulykojiro,paulybrattka4}.

We will make use of an operation $\star$ defined on $\mathfrak{W}$ that captures aspects of function decomposition. Following \cite{gherardi4,paulybrattka3cie}, let $f \star g := \max_{\leqW} \{f_0 \circ g_0 \mid f \equivW f_0 \wedge g \equivW g_0\}$. We understand that the quantification is running over all suitable functions $f_0$, $g_0$ with matching types for the function decomposition. It is not obvious that this maximum always exists, this is shown in \cite{paulybrattka4} using an explicit construction for $f \star g$. Like function composition, $\star$ is associative but generally not commutative.

An important source for examples of Weihrauch degrees relevant in order to classify theorems are the closed choice principles studied in e.g.~\cite{brattka3,paulybrattka}:
\begin{definition}
Given a represented space $\mathbf{X}$, the associated closed choice principle $\C_\mathbf{X}$ is the partial multivalued function $\C_\mathbf{X} : \subseteq \mathcal{A}(\mathbf{X}) \mto \mathbf{X}$ mapping a non-empty closed set to an arbitrary point in it.
\end{definition}

For any uncountable compact metric space $\mathbf{X}$ we find that $\C_\mathbf{X} \equivW \C_\uint$. For well-behaved spaces, using closed choice iteratively does not increase its power, in particular $\C_\mathbb{N} \star \C_\mathbb{N} \equivW \C_\mathbb{N}$ and $\C_\uint \star \C_\uint \equivW \C_\uint$. Likewise, it was shown that $\C_{\mathbb{R}^n} \equivW \C_{\mathbb{R}^n} \star \C_{\mathbb{R}^n} \equivW \C_\mathbb{N} \times \C_\uint \equivW \C_\mathbb{N} \star \C_\uint \equivW \C_\uint \star \C_\mathbb{N}$ for any $n > 0$. Closed choice for $\uint$ and $\Cantor$ is incomparable. The degree $\C_\uint$ is closely linked to $\textrm{WKL}$ in reverse mathematics, while $\C_\mathbb{N}$ is Weihrauch-complete for functions computable with finitely many mindchanges.

Further variations of closed principle providing a fruitful area of study are obtained by restriction to certain subclasses of the closed sets. In \cite{paulybrattka3,paulybrattka3cie} choice for connected closed subsets of $\uint^k$ was studied (and related to Brouwer's Fixed Point theorem). Convex and finite sets were compared in \cite{paulyleroux-cie,paulyleroux-arxiv}. Most related to the present investigation, choice for sets of positive Lebesgue measure was studied in \cite{paulybrattka2,hoelzl}. This yields a Weihrauch degree $\textrm{PC}_{\uint}$ with $\textrm{PC}_\uint \leW \C_\uint$ and $\textrm{PC}_\uint \pipeW \C_\mathbb{N}$. Once more, replacing $\uint$ with another uncountable compact metric space $\mathbf{X}$ does not change the Weihrauch degree.

Another typical degree is obtained from the limit operator $\lim : \subseteq \Baire \to \Baire$ defined via $\lim(p)(n) = \lim_{i \to \infty} p(\langle n, i\rangle)$. This degree was studied  by \name{von Stein} \cite{stein}, \name{Mylatz} \cite{mylatz} and \name{Brattka} \cite{brattka11,brattka}, with the latter noting in \cite{brattka} that it is closely connected to the Borel hierarchy. \name{Hoyrup}, \name{Rojas} and \name{Weihrauch} have shown that $\lim$ is equivalent the Radon-Nikodym derivative in \cite{hoyrup2b}. It also appears in the context of model of hypercomputation as shown by \name{Ziegler} \cite{ziegler2,ziegler7}, and captures precisely the additional computational power certain solutions to general relativity could provide beyond computability \cite{hogarth}. It is related to the examples above via $\C_\mathbb{N} \times \C_\uint \leW \lim \equivW \lim \times \lim \leW \lim \star \lim$.

Important further representatives of the degree of $\lim$ are found in the following:
\begin{theorem}[\name{v Stein} \cite{stein}]
\label{theo:vstein}
$\left (\id : \mathcal{A}(\mathbb{N}) \to \mathcal{V}(\mathbb{N})\right ) \equivW \left (\id : \mathcal{V}(\mathbb{N}) \to \mathcal{A}(\mathbb{N})\right ) \equivW \left ( \id : \mathbb{R}_< \to \mathbb{R} \right) \equivW \lim$
\end{theorem}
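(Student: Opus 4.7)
The plan is to close a cycle of reductions among these four degrees, organised around the observation that converting between $\Sigma^0_1$ and $\Pi^0_1$ descriptions of subsets of $\mathbb{N}$ is precisely the content of the Turing jump, hence of $\lim$.

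For the easy reductions to $\lim$: given a c.e.\ enumeration of $V \subseteq \mathbb{N}$, set $p_i(n) := 1$ if $n$ has appeared by stage $i$ and $0$ otherwise. This is pointwise monotone with limit $\chi_V$, from which the co-c.e.\ description is immediate; this shows $\id : \mathcal{V}(\mathbb{N}) \to \mathcal{A}(\mathbb{N}) \leqW \lim$, and symmetrically $\id : \mathcal{A}(\mathbb{N}) \to \mathcal{V}(\mathbb{N}) \leqW \lim$. For $\id : \mathbb{R}_< \to \mathbb{R} \leqW \lim$, given $q_i \uparrow x$ with $q_i \in \mathbb{Q}$, the sequence $p_i(n) := \lfloor 2^n q_i \rfloor$ (shifted into $\mathbb{N}$) is monotone and bounded, so stabilises; its limit $p(n) / 2^n$ is within $2^{-n}$ of $x$ and yields a Cauchy name.

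The main obstacle is the hard direction $\lim \leqW \id : \mathcal{V}(\mathbb{N}) \to \mathcal{A}(\mathbb{N})$. Given $(p_i)$ converging pointwise to $p$, define
\[ V \;:=\; \{ \langle n, k, i \rangle \mid \exists j \geq i,\ p_j(n) \neq k \}, \]
which is c.e.\ uniformly in $(p_i)$. Its complement $V^c = \{\langle n, k, i\rangle \mid \forall j \geq i,\ p_j(n) = k\}$ consists of triples certifying stabilisation, and for any $n$ and any $i^\ast$ past the stabilisation stage one has $\langle n, p(n), i^\ast \rangle \in V^c$. One call to $\id : \mathcal{V}(\mathbb{N}) \to \mathcal{A}(\mathbb{N})$ presents $V$ as a closed set, i.e.\ an enumeration of $V^c$, and the post-processor simply waits, for each $n$, for the first enumerated triple with first coordinate $n$ and outputs its second coordinate as $p(n)$. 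The parallel construction using $A := V^c$ (whose complement is c.e., hence $A \in \mathcal{A}(\mathbb{N})$) and harvesting enumeration of $A$ itself gives $\lim \leqW \id : \mathcal{A}(\mathbb{N}) \to \mathcal{V}(\mathbb{N})$ by the same template. The subtlety here is why one must thread the stabilisation index $i$ through the triples at all: the naive choice $\{\langle n, k \rangle \mid p(n) = k\}$ is $\Sigma^0_2$ rather than c.e., and the extra coordinate converts the hidden existential over stabilisation stages into something first-order.

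To close the cycle via $\mathbb{R}_<$, reduce $\id : \mathcal{V}(\mathbb{N}) \to \mathcal{A}(\mathbb{N}) \leqW \id : \mathbb{R}_< \to \mathbb{R}$ by the encoding $x := \sum_{n \in V} 3^{-n-1}$, which is computable from below as the enumeration of $V$ progresses. Because its ternary digits are all $0$ or $1$ there are no carries, so from any Cauchy approximation one reads off the $n$th digit, and hence $\chi_V(n)$, by computing $\lfloor 3^{n+1} x \rfloor \bmod 3$ using precision better than $1/4$ on $3^{n+1} x$. Chaining the reductions $\lim \leqW \id : \mathcal{V}(\mathbb{N}) \to \mathcal{A}(\mathbb{N}) \leqW \id : \mathbb{R}_< \to \mathbb{R} \leqW \lim$ together with the symmetric $\lim \leqW \id : \mathcal{A}(\mathbb{N}) \to \mathcal{V}(\mathbb{N}) \leqW \lim$ yields all four equivalences.
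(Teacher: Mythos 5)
The paper gives no proof of this theorem---it is imported from von Stein's thesis \cite{stein} as a known fact---so there is nothing in-paper to compare against; judged on its own, your argument is correct and complete, and it is the standard proof. The two easy halves (monotone stage-wise approximations of $\chi_V$ for the set conversions; $\lfloor 2^n q_i\rfloor$ for $\mathbb{R}_<\to\mathbb{R}$) are fine, and your key device for the hard direction---threading the stabilisation index $i$ through the triples $\langle n,k,i\rangle$ so that membership in $V$ becomes semi-decidable while every element of $V^c$ with first coordinate $n$ certifies the limit value $p(n)$---is exactly the right move; your remark that the naive $\{\langle n,k\rangle\mid p(n)=k\}$ is $\Sigma^0_2$ identifies precisely why the extra coordinate is needed. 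The only step where real care is required is the decoding in $\left(\id:\mathcal{V}(\mathbb{N})\to\mathcal{A}(\mathbb{N})\right)\leqW\left(\id:\mathbb{R}_<\to\mathbb{R}\right)$: computing $\lfloor 3^{n+1}x\rfloor$ from a Cauchy name is in general impossible (the floor is discontinuous at integers), and it works here only because the digits $0,1$ force the fractional part of $3^{n+1}x$ into $[0,\tfrac12]$, bounded away from $1$; your ``no carries'' remark together with the precision-$\tfrac14$ clause covers this, though it would be worth one explicit sentence. Your cycle $\lim\leqW\left(\id:\mathcal{V}(\mathbb{N})\to\mathcal{A}(\mathbb{N})\right)\leqW\left(\id:\mathbb{R}_<\to\mathbb{R}\right)\leqW\lim$ together with $\lim\leqW\left(\id:\mathcal{A}(\mathbb{N})\to\mathcal{V}(\mathbb{N})\right)\leqW\lim$ does yield all four equivalences, so no reduction is missing.
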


This can be generalized further:
\begin{proposition}
Let $\mathbf{X}$ be a computable metric space admitting a computable sequence $a \in \mathcal{C}(\mathbb{N},\mathbf{X})$ together with a computable sequence $r \in \Baire$ such that $\forall n, m \in \mathbb{N} \ \left (n \neq m \Rightarrow d(a_n, a_m) > 2^{-r_n} \right )$. Then: \[\lim \leqW \left (\id : \mathcal{A}(\mathbf{X}) \to \mathcal{V}(\mathbf{X}) \right )\]
\begin{proof}
We show the reduction $\left (\id : \mathcal{A}(\mathbb{N}) \to \mathcal{V}(\mathbb{N})\right ) \leqW \left (\id : \mathcal{A}(\mathbf{X}) \to \mathcal{V}(\mathbf{X}) \right )$ instead. Given some closed set $A \in \mathcal{A}(\mathbb{N})$, we construct a closed set $B \in \mathcal{A}(\mathbf{X})$ as follows: When $n \in \mathbb{N}$ is removed from $A$, we remove $B(a_n, 2^{-r_n})$ from $B$. By choice of the sequence, we then find that $n \in A \Leftrightarrow B(x_n,2^{-r_n}) \cap B \neq \emptyset$, hence knowing $B$ as an overt set implies knowing $A$ as an overt set.
\end{proof}
\end{proposition}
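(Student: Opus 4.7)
The plan is to leverage Theorem \ref{theo:vstein}, which identifies $\lim$ with $\id : \mathcal{A}(\mathbb{N}) \to \mathcal{V}(\mathbb{N})$. So it suffices to reduce the latter to $\id : \mathcal{A}(\mathbf{X}) \to \mathcal{V}(\mathbf{X})$. The strategy is to encode a closed set $A \subseteq \mathbb{N}$ as a closed set $B \subseteq \mathbf{X}$ in such a way that the overt information of $B$ recovers the overt information of $A$, using the well-separated sequence $(a_n)_{n \in \mathbb{N}}$ as a ``coordinate system''.

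For the encoding, given $A$ as a closed subset of $\mathbb{N}$ we are effectively given an enumeration of $\mathbb{N} \setminus A$. Define
\[ B := \mathbf{X} \setminus \bigcup_{n \notin A} B(a_n, 2^{-r_n}). \]
The right-hand side is the complement of a computably enumerable union of basic open balls, so $B$ is computably obtained as an element of $\mathcal{A}(\mathbf{X})$. The hypothesis $d(a_n, a_m) > 2^{-r_n}$ for $n \neq m$ ensures that $B(a_n, 2^{-r_n})$ contains none of the $a_m$ with $m \neq n$, so removing this ball does not expel any other $a_m$ from $B$. Consequently $a_m \in B$ iff $m \in A$.

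For the recovery, given $B$ as an overt set we can, uniformly in $m \in \mathbb{N}$, semidecide whether $B \cap B(a_m, 2^{-r_m}) \neq \emptyset$. If $m \in A$ then $a_m$ is a witness; if $m \notin A$ then the whole ball $B(a_m, 2^{-r_m})$ has been removed from $B$, so the intersection is empty. This produces an enumeration of $A$ as an overt subset of $\mathbb{N}$, completing the reduction.

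The only point that really needs checking is that $\bigcup_{n \notin A} B(a_n, 2^{-r_n})$ is computable as an element of $\mathcal{O}(\mathbf{X})$ from a closed name of $A$, which is routine given the computability of $(a_n)$ and $(r_n)$ together with the computability of countable unions in $\mathcal{O}(\mathbf{X})$. There is no genuine obstacle here: the separation hypothesis is precisely engineered so that the ``local'' act of removing a single ball $B(a_n, 2^{-r_n})$ faithfully implements the ``local'' act of removing $n$ from $A$, and the correspondence $n \leftrightarrow a_n$ is preserved throughout.
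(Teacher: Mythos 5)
Your proposal is correct and follows essentially the same route as the paper: remove the ball $B(a_n,2^{-r_n})$ from $B$ whenever $n$ leaves $A$, then recover $A$ overtly via the equivalence $n \in A \Leftrightarrow B(a_n,2^{-r_n}) \cap B \neq \emptyset$, using the separation hypothesis exactly as the paper does. Your write-up is somewhat more explicit about why $a_m$ survives in $B$ for $m \in A$, but the argument is the same.
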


Whether any infinite computable metric space admits a computable sequence as above seems to be an open problem. Certainly such sequences exist without the computability requirement, thus we obtain:

\begin{corollary}
Let $\mathbf{X}$ be an infinite computable metric space. Then $\lim \leqW \left (\id : \mathcal{A}(\mathbf{X}) \to \mathcal{V}(\mathbf{X}) \right )$ relative to some oracle.
\end{corollary}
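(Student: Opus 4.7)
The strategy is to reduce the corollary to the preceding proposition by relativization. Since an oracle that supplies a sequence $a \in \mathbf{X}^{\mathbb{N}}$ together with a modulus $r \in \Baire$ makes both sequences computable relative to it, it suffices to prove the non-effective existence claim: in every infinite metric space $\mathbf{X}$ there exist a sequence of distinct points $(a_n)_{n \in \mathbb{N}}$ and a sequence of integers $(r_n)_{n \in \mathbb{N}}$ such that $d(a_n, a_m) > 2^{-r_n}$ whenever $n \neq m$. The proposition, relativized to an oracle encoding such data, then yields the desired reduction.

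For the existence part I would argue by a dichotomy on $\mathbf{X}$. First case: $\mathbf{X}$ contains an infinite subset $S$ with no accumulation point in $\mathbf{X}$. Enumerate $S$ as $(a_n)_{n \in \mathbb{N}}$; since no $a_n$ is an accumulation point of $S$, we have $\delta_n := \inf_{m \neq n} d(a_n, a_m) > 0$, and we may take any $r_n \in \mathbb{N}$ with $2^{-r_n} < \delta_n$. Second case: every infinite subset of $\mathbf{X}$ has an accumulation point. Pick any injective sequence of distinct points and extract a convergent subsequence $b_{n_k} \to y^*$; by discarding $y^*$ if it occurs and passing to a further subsequence, we may assume $b_{n_k} \neq y^*$ for all $k$ and $d(b_{n_{k+1}}, y^*) < \tfrac{1}{3} d(b_{n_k}, y^*)$. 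Setting $a_k := b_{n_k}$ and applying the triangle inequality, for any $l \neq k$ either $l > k$, whence $d(a_k, a_l) \geq d(a_k, y^*) - d(a_l, y^*) > \tfrac{2}{3} d(a_k, y^*)$, or $l < k$, whence $d(a_k, a_l) \geq d(a_l, y^*) - d(a_k, y^*) > 2 d(a_k, y^*)$. In either subcase we may pick $r_k$ with $2^{-r_k} < \tfrac{2}{3} d(a_k, y^*)$.

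With this sequence and modulus in hand as oracle data, the hypothesis of the preceding proposition is satisfied relative to that oracle, and the reduction $\lim \leqW (\id : \mathcal{A}(\mathbf{X}) \to \mathcal{V}(\mathbf{X}))$ transfers. The only genuinely non-trivial step is the dichotomy argument above; everything else is bookkeeping about relativization and about converting a positive real lower bound into an integer exponent $r_n$.
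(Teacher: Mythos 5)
Your proof is correct and follows exactly the route the paper intends: the paper disposes of the corollary with the single remark that such separated sequences $(a_n, r_n)$ always exist non-effectively, and then relativizes the preceding proposition. Your dichotomy argument simply fills in the elementary existence claim that the paper leaves implicit.
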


\begin{proposition}
Let $\mathbf{X}$ be a locally compact computable metric space. Then:
\[\left (\id : \mathcal{A}(\mathbf{X}) \to \mathcal{V}(\mathbf{X}) \right ) \leqW \lim\]
\begin{proof}
We may assume a basis of basic open balls with compact closures in this case. Then, given $A \in \mathcal{A}(\mathbf{X})$, we can enumerate all basic open balls $B_n$ such that $\overline{B_n} \cap A = \emptyset$. Using $\left (\id : \mathcal{A}(\mathbb{N}) \to \mathcal{V}(\mathbb{N})\right ) \equivW \lim$, we can transform such an enumeration to its characteristic function $\chi \in \Cantor$ (i.e.~$\chi_n = 1 \Leftrightarrow \overline{B_n} \cap A = \emptyset$.

Now if for some open $U \in \mathcal{O}(\mathbf{X})$ we have $U \cap A \neq \emptyset$, then there is some $B_n$ with $\overline{B_n} \subseteq U$ and $\overline{B_n} \cap A \neq \emptyset$. Given $\chi$, we can effectively search for such a candidate, thus, $A \in \mathcal{V}(\mathbf{X})$ is computable from $\chi$.
\end{proof}
\end{proposition}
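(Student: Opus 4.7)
The plan is to invoke $\lim$ only once, in its incarnation as $\id : \mathcal{V}(\mathbb{N}) \to \mathcal{A}(\mathbb{N})$ from Theorem \ref{theo:vstein}, in order to upgrade certain enumerable data about $A$ to its complement being enumerable, and thereby to the full characteristic function of a suitable countable piece of information. The first preparatory step is to use local compactness to fix a computable enumeration $(B_n)_{n \in \mathbb{N}}$ of basic open balls whose closures $\overline{B_n}$ are compact and form a basis; this is standard for locally compact computable metric spaces, and the crucial consequence I need is that the predicate $\overline{B_n} \subseteq V$ is semi-decidable in $V \in \mathcal{O}(\mathbf{X})$, by the usual finite-subcover argument.

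Given $A \in \mathcal{A}(\mathbf{X})$, the open set $\mathbf{X} \setminus A$ is available, and I can therefore semi-decide $\overline{B_n} \cap A = \emptyset$ uniformly in $n$. Thus the set $S := \{n : \overline{B_n} \cap A = \emptyset\}$ is computable as an overt subset of $\mathbb{N}$. Applying the reduction $\id : \mathcal{V}(\mathbb{N}) \to \mathcal{A}(\mathbb{N})$ supplied by $\lim$ gives $S$ also as a closed subset of $\mathbb{N}$; combining overt and closed information in the discrete space $\mathbb{N}$ yields the characteristic function $\chi \in \Cantor$ of $S$.

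It remains to recover $A$ as an overt subset of $\mathbf{X}$ from $\chi$. For $U \in \mathcal{O}(\mathbf{X})$ I claim $U \cap A \neq \emptyset$ iff there exists $n$ with $\overline{B_n} \subseteq U$ and $\chi_n = 0$. The $(\Leftarrow)$ direction is immediate. For $(\Rightarrow)$, pick $x \in U \cap A$; by local compactness and the fact that $(B_n)$ is a basis, some $B_n$ satisfies $x \in B_n$ and $\overline{B_n} \subseteq U$, so $x \in \overline{B_n} \cap A$ witnesses $\chi_n = 0$. Since both $\overline{B_n} \subseteq U$ and the equality $\chi_n = 0$ are semi-decidable given $\chi$ and $U$, I can semi-decide the whole disjunction over $n$, which is exactly what is required to certify $A \in \mathcal{V}(\mathbf{X})$.

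The only point that requires genuine care is the effective availability of the refined basis of balls with compact closures and the corresponding semi-decidability of $\overline{B_n} \subseteq U$; once these are in hand, the argument is essentially a bookkeeping exercise. Everything else — the passage between overt and closed information on $\mathbb{N}$, and the combination of semi-decidable predicates — is already accounted for by the cited theorem and the algebra of $\mathcal{O}(\mathbf{X})$.
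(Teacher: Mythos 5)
Your proposal is correct and follows essentially the same route as the paper's proof: semi-decide $\overline{B_n} \cap A = \emptyset$ using compact closures, apply one instance of $\lim$ via v.~Stein's theorem to obtain the characteristic function $\chi$ of this index set, and then recover $A \in \mathcal{V}(\mathbf{X})$ by searching for some $n$ with $\overline{B_n} \subseteq U$ and $\chi_n = 0$. The only (cosmetic) difference is that you invoke $\left(\id : \mathcal{V}(\mathbb{N}) \to \mathcal{A}(\mathbb{N})\right)$, which is arguably the more natural direction for passing from an enumeration to a characteristic function, whereas the paper cites $\left(\id : \mathcal{A}(\mathbb{N}) \to \mathcal{V}(\mathbb{N})\right)$; both are $\equivW \lim$, so nothing hinges on this.
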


\begin{corollary}
Let $\mathbf{X}$ be an infinite locally compact computable metric space. Then relative to some oracle:
\[\left (\id : \mathcal{A}(\mathbf{X}) \to \mathcal{V}(\mathbf{X}) \right ) \equivW \lim\]
\end{corollary}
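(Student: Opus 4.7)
The plan is straightforward: this corollary combines the two preceding propositions, so I would simply observe that each direction of the equivalence has already been established under hypotheses that are implied by the assumptions here.

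First, for the direction $\left(\id : \mathcal{A}(\mathbf{X}) \to \mathcal{V}(\mathbf{X})\right) \leqW \lim$, I would appeal directly to the immediately preceding proposition on locally compact computable metric spaces. Since $\mathbf{X}$ is locally compact and a computable metric space by assumption, the reduction is unconditional and no oracle is needed.

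For the converse direction $\lim \leqW \left(\id : \mathcal{A}(\mathbf{X}) \to \mathcal{V}(\mathbf{X})\right)$, I would invoke the previous corollary, which tells us that for any infinite computable metric space this reduction holds relative to some oracle. The existence of the computable well-separated sequence $(a_n)$ with radii $2^{-r_n}$ required by the underlying proposition is what forces the relativization: in an infinite metric space one can always find a sequence that is uniformly separated in the sense required (by extracting a suitable subsequence from the dense sequence), but there is no guarantee that the choice can be made computably, so the sequence is instead obtained relative to an oracle coding its construction.

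Finally, I would combine both reductions: taking the oracle supplied by the second direction, both $\leqW$ and $\geqW$ hold with respect to that oracle, yielding $\equivW$. The main conceptual content here lies in the two preceding propositions; the only obstacle in the present statement is bookkeeping about oracles, namely noting that the first direction contributes no oracle so the oracle required for the full equivalence is exactly the one obtained from the infinite-space corollary.
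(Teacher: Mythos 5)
Your proposal is correct and matches the paper's (implicit) argument exactly: the corollary is stated without proof precisely because it is the conjunction of the preceding proposition on locally compact spaces (giving $\leqW\lim$ with no oracle) and the preceding corollary for infinite computable metric spaces (giving $\geqW\lim$ relative to an oracle). Your bookkeeping about which direction contributes the oracle is also the right observation.
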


\begin{proposition}
Let $\mathbf{X}$ admit a computable sequence $(a_n)_{n \in \mathbb{N}}$ such that $\forall n a_n \notin \overline{\{a_i \mid i \neq n\}}$. Then:
\[\lim \leqW \left (\id : \mathcal{V}(\mathbf{X}) \to \mathcal{A}(\mathbf{X}) \right )\]
\begin{proof}
As the (closure of the) image of an overt set under a continuous function is overt, the map $A \mapsto \overline\{a_i \mid i \in A\} : \mathcal{V}(\mathbb{N}) \to \mathcal{V}(\mathbf{X})$ is computable. By assumption on $(a_n)_{n \in \mathbb{N}}$, we now find that $n \notin A \Leftrightarrow a_n \notin \overline{\{a_i \mid i \in A\}}$. Thus, we have a reduction: \[\left (\id : \mathcal{V}(\mathbb{N}) \to \mathcal{A}(\mathbb{N}) \right ) \leqW \left (\id : \mathcal{V}(\mathbf{X}) \to \mathcal{A}(\mathbf{X}) \right )\]
By \name{v Stein}'s result (Theorem \ref{theo:vstein}), this is equivalent to our claim.
\end{proof}
\end{proposition}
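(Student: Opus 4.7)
The plan is to imitate the approach of the immediately preceding proposition: first invoke Theorem \ref{theo:vstein} to reduce the claim to establishing
\[
\left(\id : \mathcal{V}(\mathbb{N}) \to \mathcal{A}(\mathbb{N})\right) \leqW \left(\id : \mathcal{V}(\mathbf{X}) \to \mathcal{A}(\mathbf{X})\right),
\]
and then exhibit a Weihrauch reduction using the sequence $(a_n)_{n \in \mathbb{N}}$ to transport overt structure on $\mathbb{N}$ into overt structure on $\mathbf{X}$, and to pull closed structure on $\mathbf{X}$ back to closed structure on $\mathbb{N}$.

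For the pre-processing, given $A \in \mathcal{V}(\mathbb{N})$, I would compute the overt set $B := \overline{\{a_i \mid i \in A\}} \in \mathcal{V}(\mathbf{X})$. This is computable: the sequence $n \mapsto a_n : \mathbb{N} \to \mathbf{X}$ is computable by hypothesis, and the closure of the continuous image of an overt set is overt, uniformly and computably so. Invoking the oracle $\id : \mathcal{V}(\mathbf{X}) \to \mathcal{A}(\mathbf{X})$ then delivers $B$ as a closed subset of $\mathbf{X}$, so that $\mathbf{X} \setminus B$ is available as an open set. For the post-processing, I would recover $A$ as an element of $\mathcal{A}(\mathbb{N})$ by enumerating its complement: for each $n$ in parallel, semi-decide whether $a_n \in \mathbf{X} \setminus B$, and enumerate $n$ into $\mathbb{N} \setminus A$ once this is confirmed.

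The correctness of the post-processing turns entirely on the equivalence $n \notin A \Leftrightarrow a_n \notin B$. The direction $(\Leftarrow)$ is immediate, since if $n \in A$ then $a_n \in \{a_i \mid i \in A\} \subseteq B$ by construction. The direction $(\Rightarrow)$ is where the isolation hypothesis is essential: if $n \notin A$ then $\{a_i \mid i \in A\} \subseteq \{a_i \mid i \neq n\}$, so $B \subseteq \overline{\{a_i \mid i \neq n\}}$, and the assumption $a_n \notin \overline{\{a_i \mid i \neq n\}}$ gives $a_n \notin B$. This is the only point at which the hypothesis on $(a_n)_{n \in \mathbb{N}}$ is used, and it is the main conceptual content of the argument; everything else amounts to routine bookkeeping with the computable operations on overt and closed sets set up in the background section, followed by an appeal to \name{v Stein}'s theorem.
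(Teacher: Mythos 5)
Your proposal is correct and follows exactly the route of the paper's own proof: push $A \in \mathcal{V}(\mathbb{N})$ forward to the overt set $\overline{\{a_i \mid i \in A\}}$, use the isolation hypothesis to justify $n \notin A \Leftrightarrow a_n \notin \overline{\{a_i \mid i \in A\}}$, and conclude via Theorem \ref{theo:vstein}. You merely spell out both directions of the key equivalence and the semi-decision post-processing in more detail than the paper does; there is nothing to add.
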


\begin{proposition}
Let $\mathbf{X}$ be computably countably based. Then:
\[\left (\id : \mathcal{V}(\mathbf{X}) \to \mathcal{A}(\mathbf{X}) \right ) \leqW \lim\]
\begin{proof}
Let $(U_n)_{n \in\mathbb{N}}$ be a computable countable basis. Given $A \in \mathcal{V}(\mathbf{X})$, we may compute $L := \{n \in \mathbb{N} \mid U_n \cap A \neq \emptyset\} \in \mathcal{O}(\mathbb{N}) \equiv \mathcal{V}(\mathbb{N})$. Using $\left (\id : \mathcal{V}(\mathbb{N}) \to \mathcal{A}(\mathbb{N}) \right )$, we find $L \in \mathcal{A}(\mathbb{N})$, or equivalently, $L^C \in \mathcal{O}(\mathbb{N})$. Now we may compute $\left (\bigcup_{i \in L^C} U_i \right ) \in \mathcal{O}(\mathbf{X})$, and will find this to be equivalent to $A \in \mathcal{A}(\mathbf{X})$.
\end{proof}
\end{proposition}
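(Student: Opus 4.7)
The plan is to reduce to the analogous statement on $\mathbb{N}$, which by Theorem \ref{theo:vstein} satisfies $(\id : \mathcal{V}(\mathbb{N}) \to \mathcal{A}(\mathbb{N})) \equivW \lim$. So it suffices to exhibit a computable pre-processing that turns an overt subset of $\mathbf{X}$ into an overt subset of $\mathbb{N}$, plus a computable post-processing that turns a closed subset of $\mathbb{N}$ into a closed-set name for the original set. The single call to $\lim$ is sandwiched between these.

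First I would fix a computable enumeration $(U_n)_{n \in \mathbb{N}}$ of a countable basis, as provided by the assumption. Given $A \in \mathcal{V}(\mathbf{X})$, the defining information about $A$ is exactly the family $\{U \in \mathcal{O}(\mathbf{X}) \mid U \cap A \neq \emptyset\}$; in particular one can semi-decide ``$U_n \cap A \neq \emptyset$'' uniformly in $n$, so $L := \{n \in \mathbb{N} \mid U_n \cap A \neq \emptyset\}$ is computable as an element of $\mathcal{O}(\mathbb{N}) \equiv \mathcal{V}(\mathbb{N})$. This is the pre-processing. I then invoke the oracle $(\id : \mathcal{V}(\mathbb{N}) \to \mathcal{A}(\mathbb{N}))$ on $L$; this is the single use of $\lim$, and it produces $L$ as a closed subset of $\mathbb{N}$, equivalently its complement $L^C$ as an element of $\mathcal{O}(\mathbb{N})$. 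The post-processing is then to output the open set $V := \bigcup_{n \in L^C} U_n \in \mathcal{O}(\mathbf{X})$ as the complement of our closed set; countable unions of basic opens indexed by an element of $\mathcal{O}(\mathbb{N})$ are computable essentially by definition.

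The remaining work is to verify that the closed set $\mathbf{X} \setminus V$ really is the canonical closed representative of the overt set $A$. Since the canonization in $\mathcal{V}(\mathbf{X})$ is topological closure, this representative is $\overline{A}$, so I need $V = \mathbf{X} \setminus \overline{A}$. This is immediate from countable basedness: a point $x$ lies outside $\overline{A}$ iff some open neighbourhood of $x$ misses $A$, iff some basic $U_n$ with $x \in U_n$ satisfies $U_n \cap A = \emptyset$, iff $x \in U_n$ for some $n \in L^C$. I do not expect any serious obstacle beyond this bookkeeping; the reduction is forced by the shape of the overt-to-closed conversion on $\mathbb{N}$ and the fact that the countable basis faithfully transports overtness information in both directions.
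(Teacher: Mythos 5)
Your proposal is correct and follows essentially the same route as the paper's own proof: compute $L = \{n \mid U_n \cap A \neq \emptyset\}$ as an overt/open subset of $\mathbb{N}$, apply the $\lim$-equivalent conversion $\id : \mathcal{V}(\mathbb{N}) \to \mathcal{A}(\mathbb{N})$, and output $\bigcup_{n \in L^C} U_n$ as the complement of the closed set. The only difference is that you spell out the final verification that this union equals $\mathbf{X} \setminus \overline{A}$, which the paper leaves implicit.
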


\begin{corollary}
$\left (\id : \mathcal{A}(\uint) \to \mathcal{V}(\uint) \right ) \equivW \left (\id : \mathcal{V}(\uint) \to \mathcal{A}(\uint) \right ) \equivW \lim$
\end{corollary}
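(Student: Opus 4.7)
The corollary should follow immediately by applying the four preceding propositions to the space $\uint$, so the main task is just to verify that $\uint$ satisfies each of their hypotheses in an effective way.

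First I would dispatch the two upper-bound reductions. The space $\uint$ is a locally compact computable metric space, so the second proposition gives $\left (\id : \mathcal{A}(\uint) \to \mathcal{V}(\uint) \right ) \leqW \lim$ directly. Similarly, $\uint$ is computably countably based (e.g.~by the rational open intervals intersected with $\uint$), so the fourth proposition gives $\left (\id : \mathcal{V}(\uint) \to \mathcal{A}(\uint) \right ) \leqW \lim$.

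For the lower bound $\lim \leqW \left (\id : \mathcal{A}(\uint) \to \mathcal{V}(\uint) \right )$ from the first proposition, I need a computable sequence $(a_n)_{n \in \mathbb{N}}$ in $\uint$ together with a computable modulus of separation $(r_n)_{n \in \mathbb{N}}$. Taking $a_n := 2^{-2n}$ and $r_n := 2n+2$ works: the nearest neighbour of $a_n$ is $a_{n+1}$, at distance $2^{-2n} - 2^{-2n-2} = 3 \cdot 2^{-2n-2} > 2^{-r_n}$, and all further points lie yet further away. For the final lower bound $\lim \leqW \left (\id : \mathcal{V}(\uint) \to \mathcal{A}(\uint) \right )$ via the third proposition, I can reuse the same sequence: each $a_n = 2^{-2n}$ is isolated from the rest, since the only accumulation point of $\{a_i \mid i \neq n\}$ is $0 \neq a_n$, so $a_n \notin \overline{\{a_i \mid i \neq n\}}$.

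There is no real obstacle here; the work lies entirely in choosing the witnessing sequences. Combining the four reductions yields the claimed chain of equivalences.
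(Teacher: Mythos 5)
Your proposal is correct and matches the paper's (implicit) argument: the corollary is stated without proof precisely because it is the conjunction of the four preceding propositions applied to $\uint$, and your verification of the hypotheses — including the explicit witnessing sequence $a_n = 2^{-2n}$ with separation modulus $r_n = 2n+2$, which does satisfy $d(a_n,a_m) \geq 3\cdot 2^{-2n-2} > 2^{-r_n}$ and the isolation condition $a_n \notin \overline{\{a_i \mid i \neq n\}}$ — is accurate.
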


\subsection{Hausdorff dimension and the Frostman lemma}
\label{subsec:introhausdorff}
The Frostman lemma essentially states that having positive Hausdorff dimension is equivalent to admitting a measure that is far from being atomic -- and \emph{being far from atomic} is given a quantitative interpretation and exactly tied to the Hausdorff dimension. We introduce the Hausdorff dimension only as a property of closed subsets of $\uint$ here:

\begin{definition}
\label{def:hausdorff}
Given some $A \in \mathcal{A}(\uint)$, we define its Hausdorff dimension $\dim_\mathcal{H}(A)$ as:
\[\begin{array}{ccl} \dim_\mathcal{H}(A) & := & \inf \{d \geq 0 \mid \inf \{\sum_{i \in \mathbb{N}} r_i^d \mid \exists (x_i)_{i \in \mathbb{N}} \ A \subseteq \bigcup_{i \in \mathbb{N}} B(x_i,r_i) \ \} = 0\} \\ & = & \sup \{d \geq 0 \mid \lim_{\delta \to 0} \inf \{\sum_{i \in \mathbb{N}} r_i^d \mid \exists (x_i)_{i \in \mathbb{N}} \ A \subseteq \bigcup_{i \in \mathbb{N}} B(x_i,r_i) \wedge \forall i \ r_i < \delta\ \} = \infty\}\end{array}\]
\end{definition}

We find $\dim_\mathcal{H}(A) \in \uint$ for all $A \in \mathcal{A}(\uint)$, $\dim_\mathcal{H}(A) = 0$ for any countable $A$ and $\dim_\mathcal{H}(A) = 1$ whenever $\lambda(A) > 0$ where $\lambda$ denotes the Lebesgue measure. A way to construction sets of given Hausdorff dimension is provided in \cite{fouche}: Given a sequence $(d_i)_{i \in \mathbb{N}}$ of non-negative reals, we define a family $([a_w,b_w])_{w \in \{0,1\}^*}$ of intervals indexed by $\{0,1\}^*$ by $[a_\varepsilon,b_\varepsilon] = \uint$, $a_{w0} = a_w$, $b_{w0} = a_w + d_{|w|}^{-1}(b_w - a_w)$, $a_{w1} = a_w +
(1 - d_{|w|}^{-1})(b_w - a_w)$, $b_{w1} = b_w$. Then define $C := \bigcap_{n \in \mathbb{N}} \bigcup_{w \in \{0,1\}^n} [a_w, b_w]$, and obtain $\dim_\mathcal{H}(C) = \liminf_{n \in \mathbb{N}} \frac{\ln 2}{\ln d_n}$.

While having positive Hausdorff dimension (and in particular having Hausdorff dimension 1) is an indicator of a set being large, it is a rather weak indicator as the following shows:

\begin{theorem}[\name{Shmerkin} \cite{shmerkin}]
\label{theo:shmerkin}
There is a computable surjection $\vartheta : \uint \to \uint$ such that $\forall x \in \uint \ \dim_\mathcal{H}(\vartheta^{-1}(\{x\})) = 1$.
\begin{proof}
Using the construction above, one can find a subset $C \subseteq \uint$ with $\dim_\mathcal{H}(C) = 1$ such that $C$ is computably isomorphic to $\Cantor$. Let $\gamma : C \to \Cantor$ be the canonic computable isomorphism, and $\delta_2 : \Cantor \to \uint$ the usual binary representation. Consider the computable function $\theta : \Cantor \to \Cantor$ defined via $\theta(p)(n) = p(n^2)$. We obtain $\vartheta$ by extrapolating $\delta_2 \circ \theta \circ \gamma$ in a computable way to $\uint$.

The claim for $\vartheta$ now follows from $\dim_\mathcal{H}( \left (\theta \circ \gamma \right )^{-1}(\{p\})) = 1$ for any $p \in \Cantor$. This can be shown using the mass distribution principle (e.g.~\cite{falconer}). For this, we define a measure $\mu_p$ on $C$ using the intervals $[a_w,b_w]$ occurring in the construction of $C$. Start with $\mu_p(\uint) = 1$. If $|w| = k^2$ for some $k \in \mathbb{N}$, then $\mu_p([a_{w}, b_{w}]) = \mu_p([a_{w_{<|w|}},b_{w_{<|w|}}])$ if $w \prec p$ and $\mu_p([a_{w}, b_{w}]) = 0$ otherwise. If $|w| + 1$ is not a perfect square, then $\mu_p([a_{w0},b_{w0}]) = \mu_p([a_{w1},b_{w1}]) = \frac{1}{2}\mu_p([a_{w},b_{w}])$. This yields indeed a measure, and we find $\mu_p( \left (\theta \circ \gamma \right )^{-1}(\{p\})) = 1$. To invoke the mass distribution principle, we further need that $\lim_{r \to 0} \frac{\log \mu_p(B(x,r))}{\log r} = 1$ for all $x \in \operatorname{supp}(\mu_p)$.
\end{proof}
\end{theorem}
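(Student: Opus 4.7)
The plan is to build $\vartheta$ as a computable extension of a composition $\delta_2 \circ \theta \circ \gamma$, where $\gamma$ computably identifies a ``thin'' Cantor-like subset $C \subseteq \uint$ with $\Cantor$, $\theta$ is a highly non-injective self-map of $\Cantor$, and $\delta_2$ is the usual binary decoding of $\Cantor$ onto $\uint$. First I would apply the construction recalled from \cite{fouche} with a computable sequence $(d_n)$ with $d_n \to 2$ slowly enough that $\liminf_{n} \ln 2 / \ln d_n = 1$. This yields a computable compact set $C$ of Hausdorff dimension $1$ that is in a natural computable bijection $\gamma : C \to \Cantor$ via the tree of nested intervals $[a_w, b_w]$. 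Since $d_n > 2$ strictly, the gaps between sibling intervals are non-degenerate, so $\gamma$ is a homeomorphism.

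Next I would set $\theta(p)(n) := p(n^2)$, so that $\theta$ reads off only the bits at perfect-square positions; it is continuous, computable, and surjective, with each fibre $\theta^{-1}(\{q\})$ in bijection with $\Cantor$ via the non-square positions. The composition $\delta_2 \circ \theta \circ \gamma : C \to \uint$ is a computable surjection, and I extend it to a computable $\vartheta : \uint \to \uint$ by piecewise-linear interpolation across the gaps $\uint \setminus C$, matching the existing values of $\delta_2 \circ \theta \circ \gamma$ on the endpoints of each gap. Then for every $x \in \uint$ there is at least one $p \in \delta_2^{-1}(\{x\})$ with $\gamma^{-1}(\theta^{-1}(\{p\})) \subseteq \vartheta^{-1}(\{x\})$, so by monotonicity of Hausdorff dimension it is enough to show that $\dim_\mathcal{H}\bigl((\theta \circ \gamma)^{-1}(\{p\})\bigr) = 1$ uniformly in $p \in \Cantor$.

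To obtain the lower bound, I would use the mass distribution principle (see e.g.\ \cite{falconer}): construct a probability measure $\mu_p$ supported on $F_p := (\theta \circ \gamma)^{-1}(\{p\})$ satisfying $\mu_p(B(x,r)) \leq r^{1-o(1)}$ as $r \to 0$, uniformly in $x \in \operatorname{supp}(\mu_p)$. The measure is defined on the cylinder intervals $[a_w,b_w]$: when $|w|+1$ is not a perfect square, assign equal halves to the two children; when $|w|+1$ is a perfect square, push all mass into the child whose label matches the corresponding bit of $p$. A comparison of $-\log \mu_p([a_w,b_w])$ (controlled by the number of non-square levels, which is $n - \lfloor \sqrt{n} \rfloor$) with $-\log(b_w - a_w)$ (controlled by $\sum_{k<n} \log d_k$, which is $n \log 2 + o(n)$ since $d_k \to 2$) yields a ratio tending to $1$.

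The main obstacle I anticipate is the passage from cylinder estimates to estimates on arbitrary balls $B(x,r)$, since $r$ may not correspond to a node depth of the tree. One needs to cover such a ball by a bounded number of cylinders of comparable diameter, which forces care in the choice of $(d_n)$: the spacing between sibling intervals $b_{w0}$ and $a_{w1}$ must decay commensurately with $b_w - a_w$, which is automatic once $d_n \to 2$ is chosen slowly but cannot be taken for granted. Once this geometric lemma is in hand, the Frostman-style mass distribution principle delivers $\dim_\mathcal{H}(F_p) \geq 1$, and the trivial upper bound $\dim_\mathcal{H}(F_p) \leq \dim_\mathcal{H}(C) = 1$ closes the argument.
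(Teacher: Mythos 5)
Your proposal follows essentially the same route as the paper's proof: the same Cantor-like set $C$ of dimension $1$ built from a sequence $d_n \to 2$, the same maps $\gamma$, $\theta(p)(n) = p(n^2)$ and $\delta_2$ with a computable extension across the gaps, and the same mass distribution argument via the measure $\mu_p$ that splits evenly at non-square levels and concentrates on the bit of $p$ at square levels. Your additional remarks (the explicit count $n - \lfloor\sqrt{n}\rfloor$ of non-square levels and the cylinder-versus-ball issue in applying the mass distribution principle) only flesh out details the paper leaves implicit, so the argument is correct as proposed.
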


We can now introduce Frostman measures, and state the Frostman lemma:

\begin{definition}
An $s$-Frostman measure (on $\uint$) is a non-zero Radon measure $\mu$ such that for any $x, r \in \uint$ we find $\mu(B(x,r)) \leq r^s$.
\end{definition}

\begin{lemma}[\name{Frostman} \cite{frostman}]
\[\dim_\mathcal{H}(A) = \sup_{s \in \uint} \{\exists \mu \mid \operatorname{supp}(\mu) \subseteq A \wedge \mu \textnormal{ is an $s$-Frostman measure}\}\]
\end{lemma}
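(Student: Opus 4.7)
The statement is naturally an equality between two quantities, and I would prove each inequality separately.

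For the easy direction ($\geq$), I would use the mass distribution principle directly from Definition \ref{def:hausdorff}. Given an $s$-Frostman measure $\mu$ with $\operatorname{supp}(\mu) \subseteq A$, for any cover $A \subseteq \bigcup_{i \in \mathbb{N}} B(x_i, r_i)$ one has $0 < \mu(\operatorname{supp}(\mu)) \leq \sum_i \mu(B(x_i, r_i)) \leq \sum_i r_i^s$, so the infimum of $s$-sums over countable ball covers of $A$ is bounded below by $\mu(\operatorname{supp}(\mu)) > 0$, hence $\dim_\mathcal{H}(A) \geq s$.

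For the harder direction, fix any $s < \dim_\mathcal{H}(A)$ and construct a non-zero measure $\mu$ supported on $A$ with $\mu(B(x,r)) \leq C r^s$ for an absolute constant $C$; rescaling by $1/C$ and letting $s$ approach $\dim_\mathcal{H}(A)$ from below then realises the supremum. Since $A \in \mathcal{A}(\uint)$ is closed in $\uint$ it is compact. The construction is dyadic: at each level $n$, assign initial weight $2^{-ns}$ to every dyadic interval $I$ of length $2^{-n}$ meeting $A$, then sweep upward through levels $k = n-1, n-2, \ldots, 0$, and whenever the accumulated mass of a dyadic interval $J$ of length $2^{-k}$ exceeds $2^{-ks}$, cap it at $2^{-ks}$ by proportionally rescaling the masses of all descendants of $J$. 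This produces a finite Borel measure $\mu_n$ on $\uint$ satisfying $\mu_n(J) \leq |J|^s$ for every dyadic interval $J$. The crucial step is a uniform lower bound $\mu_n(\uint) \geq c > 0$ independent of $n$: this follows because a dyadic cover of $A$ with total $s$-weight smaller than $\mu_n(\uint)$ would contradict the positivity of the $s$-dimensional Hausdorff content of $A$, which is guaranteed by our choice $s < \dim_\mathcal{H}(A)$.

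Since the $\mu_n$ are supported in the compact set $\uint$ and have mass in $[c,1]$, by Prokhorov's theorem a subsequence converges weakly to some measure $\mu$ with $\mu(\uint) \geq c > 0$. The dyadic bounds $\mu_n(J) \leq |J|^s$ pass to $\mu$ for each fixed dyadic $J$ (taking a slightly larger open interval and using weak convergence on open sets), and the standard comparison between arbitrary intervals and dyadic intervals upgrades this to $\mu(B(x,r)) \leq C r^s$ with $C$ absolute. Because each $\mu_n$ is supported within the $2^{-n}$-neighbourhood of $A$ and $A$ is closed, $\operatorname{supp}(\mu) \subseteq A$. The main obstacle in this plan is the uniform lower bound on $\mu_n(\uint)$, which is the only place where the hypothesis $s < \dim_\mathcal{H}(A)$ is genuinely used; the dyadic capping procedure and the weak-$*$ limit are essentially bookkeeping once that lower bound is in hand.
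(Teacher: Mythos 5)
Your argument is the classical textbook proof of Frostman's lemma and is essentially correct; note that the paper itself does not prove this statement (it cites Frostman) but, in the section on $\operatorname{Frost}$ and $\operatorname{StrictFrost}$, deliberately takes a different route via the min-cut/max-flow theorem on the dyadic tree with vertex capacities $2^{-s|v|}$ for intervals meeting $A$. In that formulation a cut is exactly a dyadic cover, so positivity of the $s$-dimensional Hausdorff content gives a non-zero admissible flow, and the flow directly induces the Frostman measure; the payoff is that the construction becomes a Weihrauch reduction to $\operatorname{NonZeroFlow}$ (resp.\ $\operatorname{MaxFlow}$), which is what the paper needs for its effective classification. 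Your dyadic capping procedure followed by a weak-$*$ limit proves the same classical fact but is intrinsically non-effective (Prokhorov's theorem and subsequence extraction), so it would not serve the paper's purpose, though as a proof of the lemma as stated it is fine. One spot is thinner than it should be: the uniform lower bound $\mu_n(\uint)\geq c$. As written, your sentence about ``a dyadic cover with total $s$-weight smaller than $\mu_n(\uint)$'' has the logic inverted; the correct argument is that the maximal \emph{saturated} dyadic intervals (those capped at weight $|J|^s$, together with the level-$n$ intervals carrying their initial weight) are disjoint, cover $A$, and satisfy $\sum_i |J_i|^s = \mu_n(\uint)$, whence $\mu_n(\uint)\geq \mathcal{H}^s_\infty(A) > 0$; the positivity of the content for $s<\dim_\mathcal{H}(A)$ uses the standard equivalence $\mathcal{H}^s(A)>0 \Leftrightarrow \mathcal{H}^s_\infty(A)>0$. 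With that step spelled out, and with the rescaling by $1/C$ to meet the paper's exact definition $\mu(B(x,r))\leq r^s$, the proof is complete.
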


We will in particular investigate the Weihrauch degree of the following maps:
\begin{definition}
\label{def:frostmanmaps}
Let $\operatorname{Frost} : \subseteq \mathcal{A}(\uint) \times \uint \mto \mathcal{M}(\uint)$ be defined via $\mu \in \operatorname{Frost}(A,s)$ iff $\mu$ is an $s$-Frostman measure with $\operatorname{supp}(\mu) \subseteq A$. Let $\operatorname{StrictFrost} : \subseteq \mathcal{A}(\uint) \times \uint \mto \mathcal{M}(\uint)$ be defined via $\mu \in \operatorname{Frost}(A,s)$ iff $\mu$ is an $s$-Frostman measure with $\operatorname{supp}(\mu) = A$.
\end{definition}

\section{Measures and support}
\label{sec:genericmeasures}
We shall begin by investigating how a measure and its support are related. We show that the support is fundamentally an overt set, rather than a closed set; and that both obtaining the support of a given measure as a closed set, and constructing a measure with support as a given closed set are equivalent to the $\lim$-operator. The measures constructed here will generally fail to be non-atomic.

\begin{theorem}
\label{theo:suppovert}
$\operatorname{supp} : \mathcal{M}(\mathbf{X}) \to \mathcal{V}(\mathbf{X})$ is computable. If $\mathbf{X}$ is a computable metric space, it has a computable multivalued inverse.
\begin{proof}
Note that an open set $U$ intersects $\operatorname{supp}(\mu)$ iff $\mu(U) > 0$. It is easy to see that that $\mathalpha{>0} : \mathbb{R}_< \to \mathbb{S}$ is computable. Taking into consideration the definitions of $\mathcal{M}$ and $\mathcal{V}$, we see that $f \mapsto (U \mapsto \mathalpha{>0}(f(U)))$ is a computable realizer of $\operatorname{supp}$.

For the reverse we shall construct a probability measure $\mu$ given a non-empty overt set $A \in\mathcal{V}(\mathbf{X})$ such that $\operatorname{supp}(\mu) = A$. Let $(x_i)_{i \in \mathbb{N}}$ be a computable dense sequence in $\mathbf{X}$. We begin by associating numbers $c_{i,k} \in \mathbb{R}_<$ to the basic open balls $B(x_i,k 2^{-k})$. First, we test for any $B(x_i,2^{-1})$ if $B(x_i,2^{-1}) \cap A \neq \emptyset$, producing some infinite sequence $B(x_{l_1},2^{-1}), B(x_{l_2},2^{-1}), \ldots$ (with repetitions) of all those balls intersecting the set. Then we set $c_{i,1} = \sum_{\{j \mid l_j = i\}} 2^{-j+1}$.

In the next round, we test for all $B(x_i,2^{-2})$ if they intersect $A$, and again obtain an infinite sequence $B(x_{l'_1},2^{-2}), B(x_{l'_2},2^{-2}), \ldots$ of balls doing so, potentially with repetitions. For any $i \in\mathbb{N}$, let $\overline{l}_i$ be the least $j$ such that $B(x_{l'_j},2^{-2}) \subseteq B(x_i,2^{-1})$, provided that this exists, and $i$ otherwise. Now we set $c_{i,2} = \sum_{\{j \mid \overline{l}_j = i\}} c_{i,1} + \sum_{\{j \mid l'_j = i\}} 2^{-j+2}$. We proceed to construct the remaining numbers in this pattern. Note that we find $\sum_{i \in \mathbb{N}} c_{i,k} = \frac{1}{2} + \ldots + \frac{1}{2^k}$.

Now our measure is defined by $\mu(B) = \sup_{k \in \mathbb{N}} \sum_{\{i \mid B(x_i,2^{-k}) \subseteq B\}} c_{i,k}$ for any open ball $B$, and then extended in the usual manner.
\end{proof}
\end{theorem}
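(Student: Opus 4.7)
The forward direction is essentially automatic from the definitions. An overt set $C \in \mathcal{V}(\mathbf{X})$ is given by semi-decidability of $U \cap C \neq \emptyset$ for open $U$, and by definition of support one has $U \cap \operatorname{supp}(\mu) \neq \emptyset$ iff $\mu(U) > 0$. Since $\mu(U) \in \mathbb{R}_<$ and strict positivity $\mathalpha{>0} : \mathbb{R}_< \to \mathbb{S}$ is computable, the composition realises $\operatorname{supp}$ as a map into $\mathcal{V}(\mathbf{X})$.

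For the harder reverse direction, I would fix a computable dense sequence $(x_i)_{i \in \mathbb{N}}$ in $\mathbf{X}$ together with a nonempty $A \in \mathcal{V}(\mathbf{X})$, and produce a probability measure $\mu$ by specifying its values on basic balls (as permitted by the earlier proposition). The scheme: simultaneously semi-decide, for each pair $(i, k)$, whether $B(x_i, 2^{-k}) \cap A \neq \emptyset$, enumerating the positive instances; allocate to each discovery a computable positive weight drawn from a scale-$k$ mass budget totalling $2^{-k}$, arranged so that the grand total over all scales and discoveries equals $1$; and declare $\mu(B)$ for a basic ball $B$ to be the supremum (in $\mathbb{R}_<$) of finite sums of weights of discoveries whose witnessing ball is contained in $B$.

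The main obstacle is making this scheme simultaneously \emph{exact} for the support and \emph{consistent} as a measure. The inclusion $\operatorname{supp}(\mu) \subseteq \overline{A}$ comes for free, since balls disjoint from $A$ are never credited. For the reverse inclusion, every open $U$ meeting $A$ must eventually reveal some sub-ball $B(x_i, 2^{-k}) \subseteq U$ intersecting $A$, which is then credited with positive weight; allotting a fresh nonzero budget at every scale $k$ prevents such a ball from being ``drained dry'' by finer refinements. The trickier point is $\sigma$-additivity: because the basic balls overlap, there is no canonical parent--child relationship, so some tie-breaking convention (say, each newly discovered sub-ball inherits weight from the first-discovered intersecting parent containing it) is needed to redistribute parent mass among descendants without double-counting. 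With that bookkeeping in place, the values on basic balls form a premeasure on the generated algebra, and the standard Carathéodory extension yields the desired $\mu$ with $\operatorname{supp}(\mu) = A$.
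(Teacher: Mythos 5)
Your proposal is correct and follows essentially the same route as the paper: the forward direction via computability of $\mathalpha{>0} : \mathbb{R}_< \to \mathbb{S}$ applied to $\mu(U)$, and the inverse via scale-by-scale enumeration of basic balls meeting $A$, a fresh geometric mass budget of $2^{-k}$ at each scale $k$ (guaranteeing every ball meeting $A$ retains positive measure), a first-discovery rule for passing mass from a ball to a sub-ball at the next scale, and defining $\mu$ on basic balls as a supremum in $\mathbb{R}_<$. The only point to tighten is the redistribution convention: the paper has each \emph{parent} hand its entire accumulated mass to its first-discovered child (which automatically avoids double-counting), and the supremum is taken over scales of the single-scale sums rather than over mixed collections of discoveries; your sketch leaves both of these as acknowledged bookkeeping to be fixed, but the underlying construction is the same.
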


\begin{corollary}
Consider $\operatorname{supp}^\mathcal{A}_\mathbf{X} : \mathcal{M}(\mathbf{X}) \to \mathcal{A}(\mathbf{X})$ and $\left(\operatorname{supp}^\mathcal{A}_\mathbf{X}\right)^{-1} : \mathcal{A}(\mathbf{X}) \mto \mathcal{M}(\mathbf{X})$. If $\mathbf{X}$ is an infinite locally compact computable Polish space, we find $\operatorname{supp}^\mathcal{A}_\mathbf{X} \equivW \left(\operatorname{supp}^\mathcal{A}_\mathbf{X}\right)^{-1} \equivW \lim$ relative to the Halting problem. Also, $\operatorname{supp}^\mathcal{A}_\uint \equivW \left(\operatorname{supp}^\mathcal{A}_\uint\right)^{-1} \equivW \lim$
\end{corollary}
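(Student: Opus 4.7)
The strategy is to decompose $\operatorname{supp}^{\mathcal{A}}_{\mathbf{X}}$ and its multivalued inverse into a computable piece provided by Theorem~\ref{theo:suppovert} and an instance of $\id : \mathcal{V}(\mathbf{X}) \to \mathcal{A}(\mathbf{X})$ or $\id : \mathcal{A}(\mathbf{X}) \to \mathcal{V}(\mathbf{X})$, whose Weihrauch degrees have already been pinned down in Section~\ref{subsec:weihrauch}.

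I would first establish $\operatorname{supp}^{\mathcal{A}}_{\mathbf{X}} \equivW \bigl(\id : \mathcal{V}(\mathbf{X}) \to \mathcal{A}(\mathbf{X})\bigr)$. The direction $\leqW$ is immediate: Theorem~\ref{theo:suppovert} gives a computable $\operatorname{supp} : \mathcal{M}(\mathbf{X}) \to \mathcal{V}(\mathbf{X})$, and $\operatorname{supp}^{\mathcal{A}}_{\mathbf{X}}$ is obtained from it by one postcomposition with the identity $\mathcal{V} \to \mathcal{A}$. For the converse direction I would use the computable multivalued inverse supplied by the same theorem: given $A \in \mathcal{V}(\mathbf{X})$, compute some $\mu \in \mathcal{M}(\mathbf{X})$ with $\operatorname{supp}(\mu) = A$, apply $\operatorname{supp}^{\mathcal{A}}_{\mathbf{X}}$, and read the result as a name of $A \in \mathcal{A}(\mathbf{X})$. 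The equivalence $(\operatorname{supp}^{\mathcal{A}}_{\mathbf{X}})^{-1} \equivW \bigl(\id : \mathcal{A}(\mathbf{X}) \to \mathcal{V}(\mathbf{X})\bigr)$ is handled symmetrically: precomposing with $\id : \mathcal{A} \to \mathcal{V}$ lets us feed the input into the computable multivalued inverse from Theorem~\ref{theo:suppovert}, and conversely a single application of $(\operatorname{supp}^{\mathcal{A}}_{\mathbf{X}})^{-1}$ followed by the computable forward map $\operatorname{supp}$ realizes $\id : \mathcal{A}(\mathbf{X}) \to \mathcal{V}(\mathbf{X})$.

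Having reduced both claims to the identity maps, the last step is to cite the previously established classifications. For $\mathbf{X} = \uint$ the final corollary of Section~\ref{subsec:weihrauch} gives $\id : \mathcal{A}(\uint) \to \mathcal{V}(\uint) \equivW \id : \mathcal{V}(\uint) \to \mathcal{A}(\uint) \equivW \lim$ unconditionally, yielding the second statement. For an infinite locally compact computable Polish space, the upper bound $\id : \mathcal{A}(\mathbf{X}) \to \mathcal{V}(\mathbf{X}) \leqW \lim$ is unconditional, while the matching lower bound is obtained from the proposition that requires a computable separated sequence $(a_n, r_n)$ with $d(a_n, a_m) > 2^{-r_n}$ for $n \neq m$. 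Such a sequence is arithmetically definable in any computable dense sequence --- one thins the dense sequence greedily by checking, via the Halting problem, which candidates are bounded away from those already chosen --- so the reverse reduction holds relative to $\emptyset'$. The main (and essentially only) obstacle is this relativization bookkeeping for the non-$\uint$ case; the reductions for $\operatorname{supp}^{\mathcal{A}}$ itself are routine consequences of Theorem~\ref{theo:suppovert}.
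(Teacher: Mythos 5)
Your decomposition is exactly the intended one: Theorem~\ref{theo:suppovert} shows $\operatorname{supp}:\mathcal{M}(\mathbf{X})\to\mathcal{V}(\mathbf{X})$ is computable with computable multivalued inverse, whence $\operatorname{supp}^\mathcal{A}_\mathbf{X} \equivW \left(\id : \mathcal{V}(\mathbf{X}) \to \mathcal{A}(\mathbf{X})\right)$ and $\left(\operatorname{supp}^\mathcal{A}_\mathbf{X}\right)^{-1} \equivW \left(\id : \mathcal{A}(\mathbf{X}) \to \mathcal{V}(\mathbf{X})\right)$, and the $\uint$ case then follows unconditionally from the last corollary of the background section. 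This is the paper's (implicit) proof, and that part of your argument is complete.

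The one place where you go beyond what the paper's supporting results actually state is the claim that the oracle for the general case can be taken to be the Halting problem; the relevant corollary in the paper only asserts the lower bound $\lim \leqW \left(\id : \mathcal{A}(\mathbf{X}) \to \mathcal{V}(\mathbf{X})\right)$ \emph{relative to some oracle}, and the existence of a \emph{computable} separated sequence is explicitly flagged as open. Your greedy thinning sketch does not obviously run within $\emptyset'$: the predicate $d(x_n,x_m)>2^{-k}$ is indeed $\Sigma^0_1$ and hence $\emptyset'$-decidable, but greedily committing to a point $a_i$ together with an exclusion radius $2^{-r_i}$ can strand the construction --- if $a_i$ happens to be an accumulation point, the ball $B(a_i,2^{-r_i})$ may contain all but finitely many remaining candidates, and deciding whether infinitely many candidates survive outside the current exclusion balls is a $\Pi^0_2$ question, beyond the Halting problem. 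So either the radii must be chosen by a subtler case analysis (distinguishing whether the space contains a uniformly discrete infinite subset or an accumulation point of the dense sequence), or the conclusion should be weakened to "relative to some oracle" in line with the corollary being cited. A smaller omission: the general case also needs the lower bound $\lim \leqW \left(\id : \mathcal{V}(\mathbf{X}) \to \mathcal{A}(\mathbf{X})\right)$, which rests on a sequence with $a_n \notin \overline{\{a_i \mid i \neq n\}}$; any separated sequence of the kind you construct serves for this as well, but you should say so.
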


If we just demand that the (non-zero) measure to be constructed is supported by the given (non-empty) closed set, the resulting operator $\operatorname{ConstructMeasure}_\mathbf{X} :\subseteq \mathcal{A}(\mathbf{X}) \mto \mathcal{M}(\mathbf{X})$ is strictly simpler for many spaces:
\begin{theorem}
Let $\mathbf{X}$ be a computable Polish space. Then $\operatorname{ConstructMeasure}_\mathbf{X} \equivW \C_\mathbf{X}$.
\begin{proof}
For $\operatorname{ConstructMeasure}_\mathbf{X} \leqW \C_\mathbf{X}$, use $\C_\mathbf{X}$ to pick a point $x$ in $A$, and then compute the point measure $\mu_x$, which is non-zero and satisfies $\operatorname{supp}(\mu_x) = \{x\} \subseteq A$.

For the other direction, we need to show that given a non-zero measure $\mu$ supported by $A$ we can compute some point $x \in A$. We do this by searching for a basic open ball $B(x_1,2^{-1})$ with $\mu(B(x_1,2^{-1})) > 0$, which we will find eventually. Then we search for some $x_2 \in B(x_1,2^{-1})$ such that $\mu(B(x_2,2^{-2})) > 0$, which also will eventually be detected. We continue to produce a fast Cauchy sequence, of which we can compute the limit $x$. Now $x \in A$ is easy to see.
\end{proof}
\end{theorem}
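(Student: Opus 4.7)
The plan is to prove the equivalence by establishing both reductions separately. For $\operatorname{ConstructMeasure}_\mathbf{X} \leqW \C_\mathbf{X}$, the idea is entirely straightforward: given a non-empty closed set $A$, invoke $\C_\mathbf{X}$ once to obtain a point $x \in A$, then output the Dirac measure $\pi_x$. Since we already noted earlier in the paper that $x \mapsto \pi_x : \mathbf{X} \to \mathcal{P}(\mathbf{X})$ is computable, and since $\operatorname{supp}(\pi_x) = \{x\} \subseteq A$, this $\pi_x$ is a valid output of $\operatorname{ConstructMeasure}_\mathbf{X}(A)$.

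For the harder direction $\C_\mathbf{X} \leqW \operatorname{ConstructMeasure}_\mathbf{X}$, the plan is to take a non-zero measure $\mu$ supported on $A$ and effectively locate a point of $A$ by tracking where the mass lives. Fix a computable dense sequence $(x_i)_{i \in \mathbb{N}}$ witnessing that $\mathbf{X}$ is computable Polish. The key observation is that, because $\mu : \mathcal{O}(\mathbf{X}) \to \mathbb{R}_<$ and $\mathalpha{>0} : \mathbb{R}_< \to \mathbb{S}$ are both computable, the predicate $\mu(U) > 0$ is semidecidable uniformly in the open set $U$. Since $\mu \neq 0$, exhaustive search through the countably many basic balls of radius $2^{-1}$ must eventually produce an index $i_1$ with $\mu(B(x_{i_1}, 2^{-1})) > 0$.

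The construction then iterates: having fixed $B(x_{i_k}, 2^{-k})$ with positive mass, I search in parallel through all $i$ such that $B(x_i, 2^{-(k+1)})$ is contained in $B(x_{i_k}, 2^{-k})$ (a recognizable inclusion using rational bounds on distances), testing $\mu(B(x_i, 2^{-(k+1)})) > 0$. Because $\mu$ restricted to $B(x_{i_k}, 2^{-k})$ is still non-zero and the smaller balls cover this ball, at least one such $i_{k+1}$ exists and will be found. This produces a fast Cauchy sequence $(x_{i_k})_{k \in \mathbb{N}}$, whose limit $x$ is computable from the names of the indices. Finally, every ball in the chain meets $\operatorname{supp}(\mu) \subseteq A$, so $x$ is a limit of points of $\operatorname{supp}(\mu)$; since $\operatorname{supp}(\mu)$ is closed and contained in the closed set $A$, we get $x \in A$ as required.

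The main subtlety I anticipate is bookkeeping for the nested-ball search: one must take the containment $B(x_i, 2^{-(k+1)}) \subseteq B(x_{i_k}, 2^{-k})$ to be an open predicate on the rational centre and rational radius, so that the parallel search stays computable, rather than the closed predicate one might naively write down. A safe workaround is to enforce $d(x_i, x_{i_k}) < 2^{-k} - 2^{-(k+1)}$, which is semidecidable and already implies the inclusion. Once this detail is handled, each of the two reductions uses only a single oracle call, so the reductions are genuine Weihrauch reductions rather than reductions in a weaker sense.
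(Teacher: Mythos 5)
Your overall strategy coincides with the paper's: for one direction, a single call to $\C_\mathbf{X}$ followed by the Dirac measure $\pi_x$; for the other, a search for a shrinking chain of basic open balls of positive $\mu$-measure, using that $\mathalpha{>0} : \mathbb{R}_< \to \mathbb{S}$ makes $\mu(U)>0$ semidecidable. The first direction and the concluding argument (each ball in the chain meets $\operatorname{supp}(\mu) \subseteq A$, which is closed, so the limit lies in $A$) are fine. However, the refinement you introduce in the iterative step opens a genuine gap. You restrict the stage-$(k+1)$ search to indices $i$ with $d(x_i, x_{i_k}) < 2^{-(k+1)}$ (so as to force $B(x_i,2^{-(k+1)}) \subseteq B(x_{i_k},2^{-k})$), and you justify termination by claiming that these smaller balls cover $B(x_{i_k},2^{-k})$. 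That covering claim is false for a general computable Polish space. In Cantor or Baire space with the standard ultrametric, every ball $B(x_i,2^{-(k+1)})$ with $d(x_i,x_{i_k})<2^{-(k+1)}$ is literally equal to $B(x_{i_k},2^{-(k+1)})$, which is a proper subset of $B(x_{i_k},2^{-k})$ and may carry zero mass even though the larger ball has positive mass (take $\mu$ to be a point mass at some $y$ with $2^{-(k+1)} \leq d(y,x_{i_k}) < 2^{-k}$). In that situation your parallel search never halts. The same failure occurs in non-geodesic metric spaces generally; the covering claim is safe in $\uint$ or $\mathbb{R}^n$, but the theorem is stated for arbitrary computable Polish spaces.

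The repair is to weaken rather than strengthen the side condition, which is what the paper's proof does: at stage $k+1$ require only that the new centre lie in the old ball, i.e. $d(x_{i_{k+1}}, x_{i_k}) < 2^{-k}$, with no containment of balls at all. This condition is still semidecidable, and density of $(x_i)_{i\in\mathbb{N}}$ guarantees that the balls $B(x_i, 2^{-(k+1)})$ whose centres lie in the open set $B(x_{i_k},2^{-k})$ really do cover it, so by countable subadditivity one of them has positive measure and the search terminates. The resulting sequence satisfies $d(x_{i_k},x_{i_{k+1}})<2^{-k}$, hence is Cauchy with an explicit modulus, and the rest of your argument goes through unchanged.
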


\begin{corollary}
$\left (\operatorname{ConstructMeasure}_\uint :\subseteq \mathcal{A}(\uint) \mto \mathcal{M}(\uint) \right ) \leW \left ( \operatorname{supp}^{-1}_\uint : \mathcal{A}(\uint) \mto \mathcal{M}(\uint) \right )$
\end{corollary}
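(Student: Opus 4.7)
The plan is to combine the two equivalences just established: the preceding theorem gives $\operatorname{ConstructMeasure}_\uint \equivW \C_\uint$, and the corollary just above gives $\operatorname{supp}^{-1}_\uint \equivW \lim$. Rewriting both sides of the claim via these equivalences, the corollary reduces to the assertion $\C_\uint \leW \lim$ in the Weihrauch lattice.

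For the non-strict direction $\C_\uint \leqW \lim$, I would invoke the chain $\C_\uint \leqW \C_\mathbb{N} \times \C_\uint \leqW \lim$, where the first step is the trivial projection reduction and the second is one of the reductions catalogued in Section~\ref{subsec:weihrauch} (in fact recorded there as a strict reduction $\C_\mathbb{N} \times \C_\uint \leW \lim$).

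For the strict part I would argue by contradiction using exactly that recorded strictness. Suppose $\C_\uint \equivW \lim$; then $\C_\mathbb{N} \times \C_\uint \leW \lim$ becomes $\C_\mathbb{N} \times \C_\uint \leW \C_\uint$. Combined with the trivial projection $\C_\uint \leqW \C_\mathbb{N} \times \C_\uint$, this forces $\C_\uint \leW \C_\uint$, which is absurd. Hence $\C_\uint \not\equivW \lim$, and the reduction is strict.

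I do not expect any genuine obstacle: the corollary is pure bookkeeping over results already assembled in the excerpt, and the only substantive input beyond the two equivalences just proved is the known placement of $\C_\uint$ strictly below $\lim$.
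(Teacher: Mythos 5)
Your proof is correct and follows exactly the route the paper intends: the corollary is stated without its own proof as an immediate consequence of $\operatorname{ConstructMeasure}_\uint \equivW \C_\uint$ (the preceding theorem), $\operatorname{supp}^{-1}_\uint \equivW \lim$ (the preceding corollary), and the strict placement of $\C_\uint$ below $\lim$ recorded in the background section. Your derivation of that strictness from $\C_\mathbb{N} \times \C_\uint \leW \lim$ together with the trivial projection reduction $\C_\uint \leqW \C_\mathbb{N} \times \C_\uint$ is valid, so nothing is missing.
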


\section{Non-atomic measures}
The picture painted in Section \ref{sec:genericmeasures} of the constructivity (or lack thereof) of constructing measures crucially depends on the option of resulting measures having atoms, i.e.~single points carrying positive measure. In the present section we first introduce the notion of flows on infinite trees as a technical tool (which could be of some interest in its own right). We then proceed to investigate the role of overtness, which drastically differs from the results above. Considering some Weihrauch degrees related to Hausdorff dimension then leads up to the Frostman lemma.
\subsection{Non-atomic measures on $\uint$ and flows}
\label{subsec:flows}
We consider assignments of non-negative real numbers to the edges of a full infinite binary tree, i.e.~the space $\left (\mathbb{R}_0^+\right)^{\{0,1\}^*}$. Such an assignment $f$ is called a \emph{flow}, if $\forall v \in \{0,1\}^* \ f(v) = f(v0) + f(v1)$. We extend $\leq$ to $\left (\mathbb{R}_0^+\right)^{\{0,1\}^*}$ in a point-wise fashion.
\begin{definition}
The multivalued map $\operatorname{MaxFlow} : \left (\mathbb{R}_0^+\right)^{\{0,1\}^*} \mto \left (\mathbb{R}_0^+\right)^{\{0,1\}^*}$ is defined via $g \in \operatorname{MaxFlow}(f)$ iff $g \leq f$, $g$ is a flow and $g(\epsilon)$ is maximal under these conditions.
\end{definition}

\begin{theorem}
$\operatorname{MaxFlow} \equivW \lim$
\end{theorem}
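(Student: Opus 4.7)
The plan is to prove both reductions by characterizing the maximum flow value at each node via antichain capacities and relating this to the computability of infima of decreasing computable sequences. Throughout, I work with $M_n(v) := \sum_{|u|=n,\ v \preceq u} f(u)$, the total $f$-capacity of the level-$(|v|+n)$ antichain below $v$, and the associated function $G(v) := \inf_{n} M_n(v)$.

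\textbf{Upper bound $\operatorname{MaxFlow} \leqW \lim$.} The first task is to show that $G(v)$ equals $\sup\{h(v) : h \leq f \textnormal{ is a flow}\}$, and that the supremum is attained. The inequality $h(v) \leq M_n(v)$ is immediate by iterating flow conservation down to level $|v|+n$. The converse uses that the set of flows $h \leq f$ is a closed subset of the compact product $\prod_u [0,f(u)]$, so by extracting a pointwise-convergent subsequence of finite-depth maximum flows (each the solution of a finite LP) one obtains an actual flow $h^*$ achieving $h^*(v) = G(v)$ for every $v$. Since $M_n(v)$ is a finite sum of computable reals, uniformly in $n$ and $v$, and $n \mapsto M_n(v)$ is non-increasing, the function $G$ is computable from $f$ as an element of $\mathcal{C}(\{0,1\}^*, \mathbb{R}_>)$. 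By Theorem \ref{theo:vstein} (and the symmetric version for $\mathbb{R}_>$), a single invocation of $\lim$ lifts $G$ to a real-valued function. From $G$ an explicit max flow $g$ is constructed recursively: set $g(\epsilon) := G(\epsilon)$, and if $g(v)$ is already defined, put $g(v0) := \min\{G(v0), g(v)\}$ and $g(v1) := g(v) - g(v0)$. The crucial coherence property $g(v) \leq G(v0) + G(v1)$ holds because $\{v0, v1\}$ is an antichain cutting the subtree rooted at $v$, giving $G(v) \leq G(v0) + G(v1)$; combined with the inductive invariant $g(v) \leq G(v)$, this forces $g(v1) \in [0, G(v1)]$. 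The resulting $g$ is a flow with $g \leq f$ and $g(\epsilon) = G(\epsilon)$, hence $g \in \operatorname{MaxFlow}(f)$.

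\textbf{Lower bound $\lim \leqW \operatorname{MaxFlow}$.} By Theorem \ref{theo:vstein} it suffices to reduce $\id : \mathbb{R}_> \to \mathbb{R}$. Given $x \in \mathbb{R}_>$ witnessed by a computable decreasing sequence $q_n \searrow x$ of non-negative rationals, define $f(w) := q_{|w|} \cdot 2^{-|w|}$, which is uniformly computable. Any flow $h \leq f$ satisfies $h(\epsilon) = \sum_{|w|=n} h(w) \leq \sum_{|w|=n} f(w) = q_n$, so $h(\epsilon) \leq x$. Conversely, $h(w) := x \cdot 2^{-|w|}$ is itself a flow with $h \leq f$ (since $x \leq q_{|w|}$) and $h(\epsilon) = x$, so the max flow value is exactly $x$. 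Hence for every $g \in \operatorname{MaxFlow}(f)$ one reads $x = g(\epsilon) \in \mathbb{R}$ directly off the output.

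\textbf{Main obstacle.} The principal technical point is securing the max-flow/min-cut duality $G(v) = \sup\{h(v) : h \leq f \textnormal{ is a flow}\}$ for the infinite binary tree, i.e.~that this supremum is both equal to the infimum over level cuts and attained by some feasible flow. The compactness/Tychonoff argument sketched above is the natural route, but requires some care in choosing a convergent subsequence of finite-depth optima in a uniform way. Once this is done, the upper bound is essentially bookkeeping plus one use of $\lim$, and the lower bound is an elementary calculation on a tree with a carefully chosen capacity pattern.
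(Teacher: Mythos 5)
Your lower bound is correct: spreading $q_{|w|}$ uniformly over level $|w|$ works just as well as the paper's gadget (which instead sets $f(1^n)=q_n$ and $f(v)=0$ off the branch $1^n$), and the rest of that half is routine. The gap is in the upper bound, and it is exactly the point you flag as the "main obstacle": the identity $G(v)=\sup\{h(v): h\leq f \text{ is a flow}\}$ is false, because you take the infimum only over \emph{level} cuts. Take $f(\epsilon)=10$, $f(0)=1$, $f(1)=100$, $f(10)=f(11)=1$, $f(00)=f(01)=100$, and $f(u)=100$ for all $|u|\geq 3$. Then $M_0(\epsilon)=10$, $M_1(\epsilon)=101$, $M_2(\epsilon)=202$, and $M_n(\epsilon)\geq 202$ thereafter, so $G(\epsilon)=10$; but $\{0,10,11\}$ is a cut, so every flow $h\leq f$ has $h(\epsilon)=h(0)+h(10)+h(11)\leq 3$. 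The same example kills the two auxiliary claims you lean on: $n\mapsto M_n(v)$ is \emph{not} non-increasing (there is no constraint $f(u0)+f(u1)\leq f(u)$ on the input), and the coherence $G(v)\leq G(v0)+G(v1)$ fails ($G(0)=1$, $G(1)=2$, $G(\epsilon)=10$), so your recursion produces $g(1)=9>G(1)$ and then cannot keep $g\leq f$ below the node $1$. Note also that $\inf_n(M_n(v0)+M_n(v1))\geq G(v0)+G(v1)$ is the wrong inequality direction for coherence even when the level-cut picture is otherwise benign.

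The repair is local: replace $G(v)$ by $F(v):=\inf\{\sum_{u\in C}f(u) : C \text{ a finite antichain below } v \text{ meeting every branch through } v\}$. By compactness of $\Cantor$ every cut contains a finite subcut, so this is a countable infimum of finite sums and still computable into $\mathbb{R}_>$; it satisfies $F(v)\leq f(v)$ and $F(v)\leq F(v0)+F(v1)$ (concatenate cuts), and your Tychonoff argument on $\prod_u[0,f(u)]$ applied to finite-depth optima shows it equals the attained maximum flow value. With this $F$ your recursion and the single parallel application of $\lim$ (using $\widehat{\lim}\equivW\lim$) go through. For comparison, the paper avoids min-cut/max-flow duality altogether: it iterates the local relaxation $a_0^v=f(v)$, $a_{n+1}^v=\min\{a_n^v,a_n^{v0}+a_n^{v1}\}$, observes that $v\mapsto\lim_i a_i^v$ is already a maximum flow, and concludes by $\widehat{\lim}\equivW\lim$ --- which is shorter precisely because it never needs to identify the optimal value with a cut quantity.
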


\begin{proof}
To show $\lim \leqW \operatorname{MaxFlow}$, we prove $\id_{\mathbb{R}_>, \mathbb{R}}|_{\{x \in \mathbb{R}_> \mid x \geq 0\}} \leqW \operatorname{MaxFlow}$ instead. Let the input to $\id_{\mathbb{R}_>, \mathbb{R}}|_{\{x \in \mathbb{R}_> \mid x \geq 0\}}$ be the decreasing sequence $(q_i)_{i \in \mathbb{N}}$ of non-negative rationals. Now we construct an assignment via $f(1^n) = q_n$ and $f(v) = 0$ for $v \neq 1^{|v|}$. Now if $g \in \operatorname{MaxFlow}(f)$, then $g(\varepsilon) = \inf_{i \in \mathbb{N}} q_i = \lim_{i \to \infty} q_i$.

For the other direction, we inductively define a decreasing sequence of real numbers for each edge of the tree by setting $a_0^v = f(v)$ and $a_{n+1}^v = \min \{a_n^v, a_n^{v0} + a_n^{v_1}\}$. Then $\left(v \mapsto \lim_{i \to \infty} a_i^v \right ) \in \operatorname{MaxFlow}(f)$, so using $\lim$ countably times in parallel suffices to find a valid max flow. As stated above, this is equivalent to using $\lim$ just once.
\end{proof}

\begin{definition}
The multivalued map $\operatorname{NonZeroFlow} :\subseteq \left (\mathbb{R}_0^+\right)^{\{0,1\}^*} \mto \left (\mathbb{R}_0^+\right)^{\{0,1\}^*}$ is defined via $g \in \operatorname{NonZeroFlow}(f)$ iff $g \leq f$, $g$ is a flow and $g(\epsilon) > 0$.
\end{definition}

\begin{theorem}
$\operatorname{NonZeroFlow} \equivW \C_\mathbb{N} \times \C_\Cantor$.
\begin{proof}
To see that $\operatorname{NonZeroFlow} \leqW \C_\mathbb{N} \times \C_\Cantor$ we use a non-deterministic algorithm following \cite{paulybrattka}. We guess some number $k \in \mathbb{N}$ together with an assignment $g : \{0,1\}^* \to [0;2^{-k}]$ where $g(\varepsilon) = 2^{-k}$ (note that the latter is an element of a computably compact computable metric space). If $g$ is not a flow with $g \leq f$ where $f$ is the input to $\operatorname{NonZeroFlow}$, we will detect this eventually.

For the other direction, we may prove $\textrm{WKL} \times \operatorname{UpperBound} \leqW \operatorname{NonZeroFlow}$ instead. Thus, we are given an infinite binary tree $T$ and a monotone and bounded sequence of natural numbers $(n_i)_{i \in \mathbb{N}}$. Let $\lambda_n^T := |\{v \in T \mid |v| = n\}| - 1$. We define an assignment $f : \{0,1\}^* \to \uint$ via $f(v) = 2^{-n_{|v|}-\lambda_{|v|}^T}$ if $v \in T$ and $f(v) = 0$ otherwise. Any non-zero flow $g$ smaller than $f$ then computes both an infinite path through $T$ (just go down some path carrying positive flow) and an upper bound for $(n_i)_{i \in \mathbb{N}}$ (in form of $N$ s.t.~$g(\varepsilon) > 2^{-N}$).
\end{proof}
\end{theorem}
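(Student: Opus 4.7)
The plan is to prove each reduction separately, interpreting $\C_\mathbb{N} \times \C_\Cantor$ via the non-deterministic computation paradigm of \cite{paulybrattka}: a problem reduces to $\C_\mathbb{N} \times \C_\Cantor$ precisely when one may guess a pair consisting of a natural number and an element of a computably compact computable metric space, subject to a closed (co-recognisable) correctness condition.

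For $\operatorname{NonZeroFlow} \leqW \C_\mathbb{N} \times \C_\Cantor$, I would use the $\C_\mathbb{N}$-component to guess a natural number $k$ intended as a lower bound certificate $g(\varepsilon) = 2^{-k}$, and the $\C_\Cantor$-component to guess the values of $g$. Since any flow satisfies $g(v) \leq g(\varepsilon) = 2^{-k}$ for every $v$, the candidate assignments live in the computably compact product $[0,2^{-k}]^{\{0,1\}^*}$, on which $\C$ reduces to $\C_\Cantor$. The required constraints on $g$ (that $g(\varepsilon) = 2^{-k}$, that $g(v) = g(v0)+g(v1)$ for all $v$, and that $g(v) \leq f(v)$ for all $v$) are each closed conditions whose violation is detectable in finite time from the standard real representations of $f$ and $g$. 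The set of valid pairs is non-empty as soon as some non-zero flow exists: from any witness $g'$ with $g'(\varepsilon) = c > 0$ we may rescale to $g = (2^{-k}/c)\,g'$ for any $k$ with $2^{-k} \leq c$.

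For the converse, I would reduce $\textrm{WKL} \times \operatorname{UpperBound}$ to $\operatorname{NonZeroFlow}$, using $\textrm{WKL} \equivW \C_\Cantor$ and $\operatorname{UpperBound} \equivW \C_\mathbb{N}$, where $\operatorname{UpperBound}$ requests an upper bound for a monotone bounded sequence of naturals. Given an infinite binary tree $T$ and such a sequence $(n_i)_{i \in \mathbb{N}}$, let $\lambda^T_n := |\{v \in T \mid |v| = n\}| - 1$ and define $f(v) = 2^{-n_{|v|} - \lambda^T_{|v|}}$ for $v \in T$ and $f(v) = 0$ otherwise; both quantities are computable from the given data. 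Any $g \in \operatorname{NonZeroFlow}(f)$ is supported on $T$, so repeatedly descending to a child of positive $g$-value extracts an infinite path through $T$. Summing the flow equation across level $n$ gives
\[ g(\varepsilon) \;=\; \sum_{v \in T,\,|v|=n} g(v) \;\leq\; (\lambda^T_n+1)\,2^{-n_n - \lambda^T_n} \;\leq\; 2^{-n_n}, \]
using the elementary inequality $(m+1)2^{-m} \leq 1$ for $m \geq 0$. Hence any rational $N$ with $g(\varepsilon) > 2^{-N}$, which we can effectively locate since $g(\varepsilon) > 0$, satisfies $n_n < N$ for every $n$, yielding the desired upper bound.

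The main obstacle is calibrating the weights in $f$ for the second direction: they must decay quickly enough across each level to translate $g(\varepsilon) > 0$ into a genuine bound on $(n_i)$, yet not so fast that no non-zero flow exists when $T$ is infinite. The factor $2^{-\lambda^T_n}$ is what balances these demands, compensating for the growth of $|T \cap \{0,1\}^n|$ so that the aggregate available flow at each level stays of order $2^{-n_n}$, while any fixed infinite branch of $T$ still carries a uniformly positive allowance.
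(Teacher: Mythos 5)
Your first direction is exactly the paper's: guess $k$ and an assignment $g \in [0,2^{-k}]^{\{0,1\}^*}$ with $g(\varepsilon)=2^{-k}$, reject on the co-recognisable failure of the flow and domination conditions; your remarks on compactness and non-emptiness of the guess set are correct and fill in details the paper leaves implicit. Your second direction is also literally the paper's construction, down to the definition of $\lambda_n^T$ and of $f$.

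There is, however, a genuine problem with that second direction, and your own displayed inequality exposes it. You correctly derive $g(\varepsilon) \leq (\lambda_n^T+1)\,2^{-n_n-\lambda_n^T}$ for every $n$. But if $T$ has unboundedly many vertices per level (e.g.\ the full binary tree, where $\lambda_n^T = 2^n-1$), then $(\lambda_n^T+1)2^{-\lambda_n^T} \to 0$, so the only flow $g \leq f$ has $g(\varepsilon)=0$: the constructed $f$ is not in the domain of $\operatorname{NonZeroFlow}$ at all, and the reduction collapses precisely on the instances of $\textrm{WKL}$ that carry its strength. Your closing justification --- that ``any fixed infinite branch of $T$ still carries a uniformly positive allowance'' --- is false for the same reason: along any branch $(v_n)$ one has $f(v_n) = 2^{-n_n-\lambda_n^T} \to 0$ whenever $\lambda_n^T \to \infty$, and no rerouting across branches can help, since summing the flow equation over level $n$ bounds $g(\varepsilon)$ by the total level capacity. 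This defect is inherited from the paper's own proof, so you have not misunderstood the intended argument; but since you added an explicit (and incorrect) justification for exactly the missing step, it should be flagged. A repair has to recalibrate the weights so that the level sums remain of order $2^{-n_n}$ \emph{and} the min cut stays bounded away from zero --- for instance weights of the form $2^{-n_n}/|\{v\in T: |v|=n\}|$ together with an argument that the resulting tree still supports a non-zero flow, or a construction that encodes the $\textrm{WKL}$-instance and the $\operatorname{UpperBound}$-instance so that every non-zero subflow is forced to witness both.
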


Say that a flow $f$ is \emph{concentrated}, iff for all $v \in \{0,1\}^*$ either $f(v) \geq 2^{-2|v|-1}$ or $f(v) = 0$. Let the multivalued map $\operatorname{ConcentrateFlow}$ map a non-zero flow $f$ to any non-zero concentrated flow $g$ with $f(\varepsilon)g \leq f$.

\begin{proposition}
\label{prop:concentrateflow}
$\operatorname{ConcentrateFlow}$ is computable.
\begin{proof}
We may normalize the input flow to $f(\varepsilon) = 1$. We define $g$ iteratively, starting with $g(\varepsilon) = \frac{1}{2}$. We want to ensure the invariant $g(v) > 0 \Rightarrow \left (g(v) + 2^{-2|v|-1} \leq f(v) \right ) \wedge \left (2^{-2|v|-1} \leq g(v) \right )$ throughout the process -- any flow constructed this way clearly is a valid answer. The invariant holds at the initial step.

For the continuation step, we can computably select a true case among $f(v0) \geq 3*2^{-2|v|-3}$ or $f(v1) \geq 3*2^{-2|v|-3}$ or $f(v0) \geq 2^{-2|v|-2} \wedge f(v1) \geq 2^{-2|v|-2}$. In the first case, set $g(v0) = g(v)$ and $g(v1) = 0$. In the second case, $g(v0) = 0$ and $g(v1) = g(v)$. In the third case, $g(0) = \min \{f(v0) - 2^{-2|v|-3}, g(v) - 2^{-2|v|-3}\}$ and $g(v1) =g(v) - g(v0)$.
\end{proof}
\end{proposition}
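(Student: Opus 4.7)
The plan is to build $g$ top-down along the binary tree while maintaining an invariant that simultaneously enforces concentratedness and the bound $f(\varepsilon) g \leq f$. After normalizing to $f(\varepsilon) = 1$ (which is legitimate since $f \not\equiv 0$ forces $f(\varepsilon) > 0$ by the flow equation), I initialize $g(\varepsilon) = 1/2$ and maintain the invariant that for every $v$, either $g(v) = 0$, or both $2^{-2|v|-1} \leq g(v)$ and $g(v) + 2^{-2|v|-1} \leq f(v)$. The lower half of the invariant delivers concentratedness, and the upper half gives both $g \leq f$ and the slack needed to continue the recursion downward.

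At a vertex $v$ with $g(v) > 0$, I need to split $g(v) = g(v0) + g(v1)$ with both children satisfying the invariant at depth $|v|+1$. I would dovetail over three semi-decidable predicates: (i) $f(v0) > g(v) + 2^{-2|v|-3}$, in which case set $g(v0) = g(v)$ and $g(v1) = 0$; (ii) the symmetric condition for $v1$; (iii) both $f(v0) > 2^{-2|v|-2}$ and $f(v1) > 2^{-2|v|-2}$, in which case set $g(v0) = \min\{f(v0) - 2^{-2|v|-3},\, g(v) - 2^{-2|v|-3}\}$ and $g(v1) = g(v) - g(v0)$. That at least one predicate must hold follows from $f(v) = f(v0) + f(v1) \geq g(v) + 2^{-2|v|-1}$: if $f(v1) \leq 2^{-2|v|-2}$, then $f(v0) \geq g(v) + 2^{-2|v|-2} > g(v) + 2^{-2|v|-3}$, triggering (i), and symmetrically, with (iii) capturing the remaining case. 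Hence the dovetailed search always terminates.

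The main obstacle is the bookkeeping to verify the invariant at depth $|v|+1$ in each branch. Cases (i) and (ii) are immediate from their threshold conditions. For (iii) one must check separately that $g(v0), g(v1) \geq 2^{-2|v|-3}$, and that $g(vi) + 2^{-2|v|-3} \leq f(vi)$ for both $i \in \{0,1\}$; these follow from the two arguments of the $\min$ combined with $g(v) \geq 2^{-2|v|-1}$ and $f(v) = f(v0) + f(v1)$, with a short case split according to which argument achieves the minimum. Since $g(\varepsilon) = 1/2 > 0$ the constructed $g$ is non-zero, and the entire procedure is uniformly computable in $f$.
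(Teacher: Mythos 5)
Your construction is correct and follows the same strategy as the paper's own proof: the identical invariant $g(v)>0 \Rightarrow \left(g(v)+2^{-2|v|-1}\leq f(v)\right) \wedge \left(2^{-2|v|-1}\leq g(v)\right)$, the same three-way case distinction at each vertex, and the same assignment in the ``split'' case. The one substantive difference lies in the guards for the first two cases: the paper tests $f(v0)\geq 3\cdot 2^{-2|v|-3}$ before routing all of $g(v)$ into the left child, whereas you test $f(v0)>g(v)+2^{-2|v|-3}$. Your version is the one that actually closes the induction: since the invariant only bounds $g(v)$ by $f(v)-2^{-2|v|-1}$, the paper's threshold does not by itself guarantee even $g(v)\leq f(v0)$ (consider $f(\varepsilon)=1$, $f(0)=f(1)=\tfrac{1}{2}$, $g(\varepsilon)=\tfrac{1}{2}$, where the paper's first case fires but assigning $g(0)=\tfrac{1}{2}$ breaks the invariant at $0$), while your guard is precisely the second conjunct of the invariant at $v0$. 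Your exhaustiveness argument (if $f(v1)\leq 2^{-2|v|-2}$ then the flow equation forces $f(v0)\geq g(v)+2^{-2|v|-2}>g(v)+2^{-2|v|-3}$, and symmetrically) and the verification of the third case by splitting on which argument of the $\min$ is attained both check out, so your write-up can be read as a repaired version of the paper's argument rather than a different one.
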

The reason for interests in flows is a connection between flows and (non-atomic) measures on $\uint$. Given $v \in \{0,1\}^*$, let $\mathfrak{D}_v$ be the associated closed dyadic interval. Given some flow $v$, we can define a measure $\mu$ by setting $\mu(\mathfrak{D}_v^\circ) = f(v)$ and then extending it in the usual way. On the other hand, if $\mu$ is a non-atomic measure on $\uint$ with known value $\mu(\uint)$, then $f(v) := \mu(\mathfrak{D}_v^\circ)$ defines a flow.
\subsection{Overtness and non-atomic measures}
It is clear that isolated points cannot be part of the support of an isolated measure. Thus, when constructing measures supported by given sets, we either need to consider only perfect sets, or be satisfied if the support is included in the set, rather than demanding equality. In the latter situation, overtness becomes useless as demonstrated next:

Given some closed set $A$, let $A^*$ be the largest perfect set contained in $A$. Alternatively, let $A^*$ be the set resulting from $A$ after removing the isolated points $\alpha$ times for some countable ordinal $\alpha$, after which no isolated points remain.

\begin{definition}
Define $\textrm{PerfectCore} : \mathcal{A}(\mathbb{R}) \mto \mathcal{A}(\mathbb{R}) \wedge \mathcal{V}(\mathbb{R})$ by $B \in \textrm{PerfectCore}(A)$ if $B \supseteq A$ and $B^* = A^*$.
\end{definition}

\begin{proposition}
\label{prop:perfectcore}
$\textrm{PerfectCore}$ is computable.
\begin{proof}
We assume that we have an enumeration of all basic open intervals whose closure disjoint is with the input $A$, and that we have to decide for each basic open interval whether it intersects the constructed output $B$. Basic open intervals are assumed to be finite, hence have compact closure. We will decide on these answers in some order, and in particular deal with smaller intervals only after those containing it. In the following, let $I$ always the interval we currently need to decide on. We further maintain a set $X$ of rational numbers, which at any stage of the construction will be finite, and initially is empty.
\begin{enumerate}
\item If $I \cap X \neq \emptyset$, then $I \cap B \neq \emptyset$.
\item If we have already learned that $\overline{I} \cap A = \emptyset$, and additionally $I \cap X = \emptyset$, then we decide that $I \cap B = \emptyset$.
\item If $I \cap X = \emptyset$, and it is consistent with the information we have read about $A$ so far that $I \cap A \neq \emptyset$, then we decide that $I \cap B \neq \emptyset$ and keep monitoring $I$ in the following.
\item If $I$ is an interval we monitor due to $(3.)$, and we do learn that $\overline{I} \cap A = \emptyset$, then at this particular time there is still some basic open interval $L \subseteq I$ which is disjoint with any interval $\overline{J}$ for which we already had decided that $\overline{J} \cap B = \emptyset$, due to compactness. We can effectively find such an $L$. Let $x$ be the midpoint of $L$ and add it to the set $X$. Let $L_1$ and $L_2$ be the remaining open halves of $L$. We also decide that $L_1 \cap B = \emptyset$ and $L_2 \cap \emptyset$.
\end{enumerate}
These rules ensure that the decisions made for the individual intervals are actually consistent, and thus the construction actually yields some closed and overt set $B$, which incidentally is of the form $B = A \cup X$ with $X \cap A = \emptyset$. The removal of $L_1$ and $L_2$ in Item $(4.)$ ensures that any $x \in X$ is isolated in $B$, hence $B \in \textrm{PerfectCore}(A)$.
\end{proof}
\end{proposition}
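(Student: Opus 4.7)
The plan is to build $B$ as a closed-and-overt set by making step-by-step decisions about basic open intervals. The input $A \in \mathcal{A}(\mathbb{R})$ is presented by an enumeration of basic open intervals $I$ with $\overline{I} \cap A = \emptyset$, while the output must supply both such an enumeration for $B$ and a semi-decision of $I \cap B \neq \emptyset$ for each basic open $I$. The guiding idea is to commit optimistically to $I \cap B \neq \emptyset$ whenever we are not yet forced to the contrary, and to pay for wrong guesses later by inserting fresh isolated points into $B$. I would process the basic open intervals in an effective order in which each interval comes after every larger interval containing it (so that parent decisions are already in place when a child is considered), and maintain alongside this a finite set $X \subseteq \mathbb{Q}$ of rational points already promised to $B$ as isolated points, initially empty.

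The decision rules for the current interval $I$ would be: (i) if $I$ meets some $x \in X$, declare $I \cap B \neq \emptyset$; (ii) otherwise, if we have already read $\overline{I} \cap A = \emptyset$ from the input, declare $I \cap B = \emptyset$; (iii) otherwise, declare $I \cap B \neq \emptyset$ and place $I$ on a monitoring list. If a monitored $I$ later reveals $\overline{I} \cap A = \emptyset$, we honour the earlier commitment by inserting a new isolated point: we search for a basic open sub-interval $L \subseteq I$ disjoint from all intervals currently committed to the complement of $B$, add the midpoint $x$ of $L$ to $X$ (thereby confirming $x \in B$), and commit the two open halves of $L \setminus \{x\}$ to the complement of $B$, which makes $x$ isolated in $B$ and stops $L$ from causing further trouble.

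For correctness I would argue that $B \supseteq A$ since no interval meeting $A$ is ever assigned to the complement (rule (ii) requires $\overline{I} \cap A = \emptyset$), and then that $B = A \cup X$ with each point of $X$ lying at positive distance from $A$ (because $\overline{I} \cap A = \emptyset$ when $x$ is added) and surrounded by an excluded open neighbourhood $L \setminus \{x\}$; hence every $x \in X$ is isolated in $B$ and no point of $A$ is newly isolated by the presence of $X$. One pass of the Cantor--Bendixson derivative therefore removes exactly $X$, giving $B' = A$, and further derivatives match those of $A$, so $B^* = A^*$.

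The main obstacle is Item~(4) of the construction: when a monitored $I$ turns out to satisfy $\overline{I} \cap A = \emptyset$, we must be able to effectively find an uncommitted basic sub-interval $L \subseteq I$. At this finite stage only finitely many sub-intervals of $I$ have been committed to the complement, each justified either by rule~(ii) or by a prior insertion. Since $\overline{I}$ is a compact interval which is not covered by any finite subcollection of such previously committed sub-intervals (otherwise the same finite certificate would already have driven $I$ itself into the complement), a search through the basic opens must produce such an $L$. Verifying this non-covering invariant along the construction is the delicate bookkeeping step; granting it, all other parts of the argument are routine.
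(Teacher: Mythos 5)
Your construction is essentially identical to the paper's proof: the same optimistic commitment to $I \cap B \neq \emptyset$ whenever $A$ has not yet been excluded, the same finite set $X$ of inserted isolated rational midpoints with their two excluded open half-neighbourhoods, and the same compactness argument for effectively finding the uncommitted sub-interval $L$ in the repair step. The only (harmless) imprecision is your intermediate claim that the first Cantor--Bendixson derivative yields $B' = A$ --- points of $A$ that are isolated in $B$ are removed as well --- but this claim is not needed: since $B = A \cup X$ with $X \cap A = \emptyset$ and every $x \in X$ isolated in $B$, no perfect subset of $B$ can meet $X$, so the largest perfect subsets of $B$ and $A$ coincide, which is exactly what $\textrm{PerfectCore}$ requires.
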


However, in some situations overtness also can be obtained \emph{for free} in the support. We shall call a measure $\nu$ on a suitable metric space $C$-\emph{concentrated}, if for any dyadic open ball $B$ either $\nu(B) > Cr^2$ or $\nu(B) = 0$ for a constant $C > 0$. Define the map $\operatorname{Concentrate} :\subseteq \mathcal{M}(\mathbf{X}) \mto \mathcal{M}(\mathbf{X}) \times \mathbb{N}$ by $\mu \in \dom(\operatorname{Concentrate})$ iff $\mu$ is non-atomic and non-zero. Then let $(\nu, k) \in \operatorname{Concentrate}(\mu)$ if $\nu \leq \mu$, $\nu$ is non-atomic, non-zero and $2^{-k}$-concentrated. Let $\operatorname{ConcentratedSupport} : \subseteq \mathcal{M}(\mathbf{X}) \times \mathbb{N} \to \mathcal{A}(\mathbf{X}) \wedge \mathcal{V}(\mathbf{X})$ be defined via $(\nu,k) \in \dom(\operatorname{ConcentratedSupport})$ iff $\nu$ is $2^{-k}$-concentrated, and $\operatorname{ConcentratedSupport}(\nu,k) = \operatorname{supp}(\nu)$.

\begin{proposition}
\label{prop:concentrate}
Set $\mathbf{X} := \uint$. Then $\operatorname{Concentrate}$ is computable.
\begin{proof}
Let $\mu$ be the input. We search for some $k \in \mathbb{N}$, such that $\mu(\uint) \geq 2^{-k}$. Now we can construct a flow $f$ from $\mu$ such that $f(\varepsilon) = 2^{-k}$ and $f(v) \leq \mu(\mathfrak{D}_v^\circ)$ for any $v \in \{0,1\}^*$. From this flow we obtain a concentrated flow by Proposition \ref{prop:concentrateflow}, which then yields the desired concentrated measure.
\end{proof}
\end{proposition}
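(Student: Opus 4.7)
The plan is to execute the three-step recipe sketched by the authors: locate a safe exponent $k$, realize $\mu$ as a computable rational-valued flow $f$, feed $f$ into $\operatorname{ConcentrateFlow}$ from Proposition \ref{prop:concentrateflow}, and convert the resulting concentrated flow back into a measure $\nu$.

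For the first step, since $\mu \neq 0$ and $\mu(\uint) \in \mathbb{R}_<$, the predicate ``$2^{-k} < \mu(\uint)$'' is semi-decidable uniformly in $k$, so a dovetailed search produces some $k \in \mathbb{N}$ with $2^{-k} < \mu(\uint)$. This $k$ will serve both as the initial flow value and as the concentration exponent of the final output.

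The main step is the flow construction. I build $f : \{0,1\}^* \to \mathbb{Q}_{\geq 0}$ inductively, maintaining the invariant $f(v) < \mu(\mathfrak{D}_v^\circ)$. Start with $f(\varepsilon) := 2^{-k}$. Given $f(v)$, dovetail a search over pairs of non-negative rationals $(p,q)$ with $p + q = f(v)$ for one that passes both semi-decidable tests $p < \mu(\mathfrak{D}_{v0}^\circ)$ and $q < \mu(\mathfrak{D}_{v1}^\circ)$, then set $f(v0) := p$, $f(v1) := q$. Non-atomicity of $\mu$ yields $\mu(\mathfrak{D}_{v0}^\circ) + \mu(\mathfrak{D}_{v1}^\circ) = \mu(\mathfrak{D}_v^\circ) > f(v)$, so the set of valid real splits is a non-empty open region, which by density contains a rational pair; the search therefore terminates. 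This is the main obstacle: it is the only place where both non-atomicity of $\mu$ and the strict slack in the invariant are essential, and it is what makes the construction uniformly computable from the $\mathbb{R}_<$-data of $\mu$.

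Finally, apply Proposition \ref{prop:concentrateflow} to $f$ to obtain a concentrated flow $g$ with $f(\varepsilon) g \leq f$, and set $h := f(\varepsilon) g = 2^{-k} g$. Then $h$ is a flow satisfying $h \leq f$ pointwise, $h(\varepsilon) = 2^{-k-1}$, and $h(v) \geq 2^{-k-2|v|-1}$ whenever $h(v) > 0$. Define $\nu$ by $\nu(\mathfrak{D}_v^\circ) := h(v)$ and extend canonically as in Subsection \ref{subsec:flows}. Then $\nu \leq \mu$ on every dyadic open interval and hence on all Borel sets; in particular $\nu(\{x\}) \leq \mu(\{x\}) = 0$ for every $x$, so $\nu$ is non-atomic; and $\nu(\uint) = 2^{-k-1} > 0$, so $\nu$ is non-zero. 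For any dyadic open ball $B$ at depth $|v|$ of radius $r = 2^{-|v|-1}$, either $\nu(B) = 0$, or $\nu(B) = h(v) \geq 2^{-k-2|v|-1} = 2^{1-k} r^2 > 2^{-k} r^2$, so $\nu$ is $2^{-k}$-concentrated. The algorithm outputs $(\nu, k)$.
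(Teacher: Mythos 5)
Your overall architecture (find $k$, build a dominated flow, apply Proposition \ref{prop:concentrateflow}, convert back to a measure) is exactly the paper's intended route, and the first and third steps are fine. The gap is in the step you yourself single out as the main obstacle: the splitting search does not terminate. You search for non-negative rationals $p+q=f(v)$ passing the semi-decidable tests $p<\mu(\mathfrak{D}_{v0}^\circ)$ and $q<\mu(\mathfrak{D}_{v1}^\circ)$, and justify termination by claiming the set of valid real splits is a non-empty open region. That claim fails whenever one of the two children has measure zero: if $\mu(\mathfrak{D}_{v0}^\circ)=0$ then no $p\geq 0$ satisfies $p<\mu(\mathfrak{D}_{v0}^\circ)$, so the search diverges. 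This is not an edge case but the generic situation for the measures the proposition is about --- e.g.\ for $\mu$ equal to Lebesgue measure restricted to $[0,\tfrac12]$, already at the root you would need $q<\mu((\tfrac12,1))=0$ with $q\geq 0$. Relatedly, your invariant $f(v)<\mu(\mathfrak{D}_v^\circ)$ is strict and hence forces every vertex in the support of $f$, including those assigned the value $0$, to sit over an interval of positive $\mu$-measure, which cannot be maintained.

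The repair is small but necessary: weaken the invariant to ``$f(v)=0$ or $f(v)<\mu(\mathfrak{D}_v^\circ)$'' and make the acceptance test for a candidate pair the semi-decidable condition ``($p=0$ or $p<\mu(\mathfrak{D}_{v0}^\circ)$) and ($q=0$ or $q<\mu(\mathfrak{D}_{v1}^\circ)$)'' (the disjunct $p=0$ is decidable for a given rational). Under the weakened invariant a valid rational pair always exists: if $f(v)=0$ take $(0,0)$; if one child has measure zero, put everything on the other child (legitimate since by non-atomicity $\mu(\mathfrak{D}_{v0}^\circ)+\mu(\mathfrak{D}_{v1}^\circ)=\mu(\mathfrak{D}_v^\circ)>f(v)$); otherwise your open-region argument applies. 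The rest of your write-up --- the use of non-atomicity at the midpoints, the bookkeeping showing $h=2^{-k}g$ is $2^{-k}$-concentrated, and the verification that $\nu\leq\mu$ forces $\nu$ to be non-atomic --- is correct and supplies details the paper leaves implicit.
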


\begin{proposition}
\label{prop:concentratedsupport}
Let $\mathbf{X}$ be a computably compact computable metric space. Then $\operatorname{ConcentratedSupport}$ is computable.
\begin{proof}
The $\mathcal{V}$-name is computed as in Theorem \ref{theo:suppovert}. As we work in computably compact space, and are dealing with non-atomic measures, we can actually compute the measure of open balls as real numbers. Then to find the $\mathcal{A}$-name, note that given a concentrated measure $\nu$ for any dyadic open ball $B$ we have that $\nu(B) < Cr^2 \Leftrightarrow B \cap \operatorname{supp}(\nu)$. The left hand side is recognizable, and then the right hand side produces the desired result.
\end{proof}
\end{proposition}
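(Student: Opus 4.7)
I would construct the required $\mathcal{A}(\mathbf{X}) \wedge \mathcal{V}(\mathbf{X})$-name as a pair. The overt component is immediate from Theorem~\ref{theo:suppovert}, which already delivers $\operatorname{supp} : \mathcal{M}(\mathbf{X}) \to \mathcal{V}(\mathbf{X})$ as computable without needing the concentration parameter $k$ or the compactness of $\mathbf{X}$; so this direction costs nothing new.

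For the closed component, it suffices to semi-decide $B \cap \operatorname{supp}(\nu) = \emptyset$ for each dyadic open ball $B$ of radius $r$. Here the $2^{-k}$-concentration hypothesis buys a dichotomy: either $\nu(B) = 0$ or $\nu(B) > 2^{-k} r^{2}$. Combined with the standard equivalence $\nu(B) > 0 \Leftrightarrow B \cap \operatorname{supp}(\nu) \neq \emptyset$, this gives
\[
B \cap \operatorname{supp}(\nu) = \emptyset \;\Longleftrightarrow\; \nu(B) < 2^{-k} r^{2},
\]
so everything reduces to showing that the strict inequality $\nu(B) < 2^{-k} r^{2}$ is semi-decidable. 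This would follow at once from $\nu(B)$ being a computable real rather than merely lower semi-computable.

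The main technical step is therefore promoting $\nu(B)$ from $\mathbb{R}_<$ to $\mathbb{R}$. My approach is to exploit computable compactness of $\mathbf{X}$ together with the non-atomicity implicit in concentration, and to write $\nu(B) = \nu(\mathbf{X}) - \nu(\mathbf{X} \setminus \overline{B})$. The set $\mathbf{X} \setminus \overline{B}$ is open, so $\nu(\mathbf{X} \setminus \overline{B}) \in \mathbb{R}_<$ directly; it remains to pin down $\nu(\mathbf{X})$ as a computable real. For this I would partition $\mathbf{X}$ into a finite family of dyadic cells using computable compactness and invoke the concentration dichotomy at a fine enough scale: each cell has measure $0$ or $\geq 2^{-k} r^{2}$, and the ``$>0$'' case is semi-decidable from below, leaving only finitely many cells of nonzero mass whose measures can be determined to any required precision.

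The main obstacle I anticipate is handling the $\nu$-measure of the boundary $\partial B$, needed to validate $\nu(B) = \nu(\overline{B})$: non-atomicity alone is not quite enough for general computable metric spaces, so I would either restrict the semi-decision to a cofinal sub-basis of dyadic balls whose boundaries are guaranteed $\nu$-null, or absorb the possible boundary contributions into the strict gap supplied by concentration. It is precisely the interplay between the concentration constant $2^{-k}$ and the compactness-driven finite enumeration that should make this work.
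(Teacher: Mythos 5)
Your overall route is the same as the paper's: compute the $\mathcal{V}$-name via Theorem~\ref{theo:suppovert}, and for the $\mathcal{A}$-name use the concentration gap to turn ``$B \cap \operatorname{supp}(\nu) = \emptyset$'' into the semi-decidable condition ``$\nu(B) < 2^{-k}r^2$'', which requires upgrading $\nu(B)$ from $\mathbb{R}_<$ to $\mathbb{R}$. The paper simply asserts this upgrade (``we can actually compute the measure of open balls as real numbers''); you try to prove it, and that is where your argument breaks. Your plan is to obtain $\nu(\mathbf{X})$ as a computable real by partitioning $\mathbf{X}$ into finitely many cells and using the concentration dichotomy, ``leaving only finitely many cells of nonzero mass whose measures can be determined to any required precision.'' That last clause is exactly the statement you are trying to establish, and nothing in the preceding steps delivers it: semi-deciding $\nu(C) > 0$ from below, and knowing that positive cells have mass at least $2^{-k}\rho^2$, only ever produces \emph{lower} bounds, and summing lower bounds over the cells gives a lower bound on $\nu(\mathbf{X})$, which you already had from the $\mathcal{M}(\mathbf{X})$-name. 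No upper bound on $\nu(\mathbf{X})$ materializes. Indeed none can: for $c \geq 1$ the measure $c\lambda$ on $\uint$ is non-atomic and $1$-concentrated, yet an $\mathcal{M}(\uint)$-name of $c\lambda$ reveals $c = \nu(\uint)$ only as an element of $\mathbb{R}_<$, so the total mass of a concentrated measure on a computably compact space is genuinely not computable from the stated input.

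What actually makes the upgrade work in the situations where this proposition is applied is that the total mass is \emph{known}: the concentrated measures produced by Proposition~\ref{prop:concentrate} come from a flow with prescribed root value $f(\varepsilon) = 2^{-k}$, and the corollary starts from a probability measure. Once $\nu(\mathbf{X}) \in \mathbb{R}$ is given, your identity $\nu(B) = \nu(\mathbf{X}) - \nu(\mathbf{X} \setminus \overline{B})$ does the job, since $\nu(\mathbf{X}\setminus\overline{B}) \in \mathbb{R}_<$ yields $\nu(\overline{B}) \in \mathbb{R}_>$, and combining with the lower approximation of $\nu(B)$ gives a two-sided name provided $\nu(\partial B) = 0$. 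Your worry about the boundary is well placed but secondary: for dyadic intervals in $\uint$ the boundary is two points and non-atomicity suffices, while in a general metric space one uses the standard fact that for fixed centre only countably many radii can carry a sphere of positive mass, so one works with a sub-basis of balls with null boundaries (your first proposed fix). So the repair is: either add ``$\nu(\mathbf{X})$ given as an element of $\mathbb{R}$'' to the input of $\operatorname{ConcentratedSupport}$ (harmless for all uses in the paper), or take the measure together with the underlying flow. Without some such additional datum, the finite-partition step of your proposal cannot be completed.
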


\begin{corollary}
Given a non-atomic probability measure $\mu$ on $\uint$, we can compute a set $A \in \mathcal{V}(\uint) \wedge \mathcal{A}(\uint)$ such that $A \subseteq \operatorname{supp}(\mu)$ and $\mu(A) \geq \frac{1}{2}$.
\end{corollary}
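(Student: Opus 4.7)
The plan is to chain the two preceding propositions. First I would apply $\operatorname{Concentrate}$ (Proposition \ref{prop:concentrate}) to the input $\mu$. Since $\mu$ is a probability measure, $\mu(\uint) = 1 \geq 2^{0}$, so the search for $k$ in that proof can be taken to terminate already at $k = 0$; this produces a flow $f$ with $f(\varepsilon) = 1$ and $f(v) \leq \mu(\mathfrak{D}_v^\circ)$ throughout the tree. Feeding $f$ into $\operatorname{ConcentrateFlow}$ (Proposition \ref{prop:concentrateflow}) yields a concentrated flow $g$ with $g(\varepsilon) = \tfrac{1}{2}$, as can be read off directly from the initialisation step in that proof. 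The associated measure $\nu$ is then non-atomic, non-zero, $2^{0}$-concentrated, dominated by $\mu$, and satisfies $\nu(\uint) = \tfrac{1}{2}$.

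Next, since $\uint$ is a computably compact computable metric space, I would feed the pair $(\nu, 0)$ into $\operatorname{ConcentratedSupport}$ and thereby obtain $A := \operatorname{supp}(\nu)$ as an element of $\mathcal{A}(\uint) \wedge \mathcal{V}(\uint)$ via Proposition \ref{prop:concentratedsupport}. The inclusion $A \subseteq \operatorname{supp}(\mu)$ is immediate from $\nu \leq \mu$: any open $U$ disjoint from $\operatorname{supp}(\mu)$ has $\mu(U) = 0$, hence $\nu(U) = 0$, so $U \cap A = \emptyset$. Because $A$ is closed, the inequality $\nu \leq \mu$ on open sets extends to $\nu(A) \leq \mu(A)$ by regularity, and therefore $\mu(A) \geq \nu(A) = \nu(\uint) = \tfrac{1}{2}$.

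The only point that requires any real care is tracking the quantitative bound $\nu(\uint) \geq \tfrac{1}{2}$ through the construction of $\operatorname{Concentrate}$; once that observation is made, the corollary is a direct composition of Propositions \ref{prop:concentrate} and \ref{prop:concentratedsupport}, with no further measure-theoretic work required.
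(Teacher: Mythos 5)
Your proposal is correct and is essentially the paper's intended argument: the corollary is stated as a direct consequence of Propositions \ref{prop:concentrate} and \ref{prop:concentratedsupport}, obtained by composing $\operatorname{Concentrate}$ with $\operatorname{ConcentratedSupport}$ exactly as you do. Your additional bookkeeping --- that the probability-measure promise lets one take $k=0$ so the concentrated flow starts at $g(\varepsilon)=\tfrac{1}{2}$, and that $\nu \leq \mu$ on open sets extends to closed sets by outer regularity to give $\mu(A) \geq \nu(\uint) = \tfrac{1}{2}$ --- is precisely the quantitative tracking the paper leaves implicit.
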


\subsection{Some Weihrauch degrees related to Hausdorff dimension}
As mentioned in Subsection \ref{subsec:weihrauch}, it is often possible to \emph{calibrate} closed choice principles to the strength of a particular theorem by restricted them to a (related) class of closed sets. As such, it seems reasonable to investigate closed choice for subsets of $\uint$ with positive Hausdorff dimension (or even Hausdorff dimension $1$). A straight-forward yet important consequence of \name{Shmerkin}'s construction (Theorem \ref{theo:shmerkin}) is that this does not yield any new Weihrauch degrees:

\begin{definition}
Let $\textrm{HC}_\uint$ be the restriction of $\C_\uint$ to $\{A \in \mathcal{A}(\uint) \mid \dim_\mathcal{H}(A) > 0\}$. Let $\textrm{H}_1\textrm{C}_\uint$ be the restriction of $\C_\uint$ to $\{A \in \mathcal{A}(\uint) \mid \dim_\mathcal{H}(A) = 1\}$.
\end{definition}

\begin{corollary}
$\textrm{HC}_\uint \equivW \textrm{H}_1\textrm{C}_\uint \equivW \C_\uint$.
\end{corollary}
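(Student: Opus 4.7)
The trivial chain $\textrm{H}_1\textrm{C}_\uint \leqW \textrm{HC}_\uint \leqW \C_\uint$ holds by definition, since both $\textrm{H}_1\textrm{C}_\uint$ and $\textrm{HC}_\uint$ are restrictions of $\C_\uint$ to subclasses of its domain. The content of the corollary is therefore the reverse reduction $\C_\uint \leqW \textrm{H}_1\textrm{C}_\uint$, and my plan is to obtain this as an immediate application of Shmerkin's Theorem \ref{theo:shmerkin}.

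Let $\vartheta : \uint \to \uint$ be the computable surjection from Theorem \ref{theo:shmerkin} satisfying $\dim_\mathcal{H}(\vartheta^{-1}(\{x\})) = 1$ for every $x \in \uint$. Given an input $A \in \mathcal{A}(\uint)$ to $\C_\uint$, I would set $B := \vartheta^{-1}(A)$. The map $A \mapsto \vartheta^{-1}(A) : \mathcal{A}(\uint) \to \mathcal{A}(\uint)$ is computable because $\vartheta^{-1} : \mathcal{O}(\uint) \to \mathcal{O}(\uint)$ is computable for any computable $\vartheta$ (precompose an $\mathcal{O}(\uint)$-name with $\vartheta$ at the level of characteristic functions into $\mathbb{S}$), and the closed-set representation is obtained by complementation. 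Thus $B$ is computable from $A$.

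The three properties needed for $B$ to lie in the domain of $\textrm{H}_1\textrm{C}_\uint$ are now immediate: $B$ is non-empty because $\vartheta$ is surjective, so any $x \in A$ admits at least one preimage; $B$ is closed because $\vartheta$ is continuous; and $\dim_\mathcal{H}(B) = 1$ because for any chosen $x \in A$ one has $\vartheta^{-1}(\{x\}) \subseteq B$, so by monotonicity of Hausdorff dimension $1 = \dim_\mathcal{H}(\vartheta^{-1}(\{x\})) \leq \dim_\mathcal{H}(B) \leq 1$. Finally, if $y \in B$ is returned by $\textrm{H}_1\textrm{C}_\uint$, then $\vartheta(y) \in A$ is computable from $y$ and is a valid output for $\C_\uint(A)$.

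There is no real obstacle here; the only thing to double-check is that the computation $\vartheta^{-1}$ at the level of $\mathcal{A}(\uint)$ goes through, and that the monotonicity inequality $\dim_\mathcal{H}(C) \leq \dim_\mathcal{H}(B)$ for $C \subseteq B$ follows from Definition \ref{def:hausdorff}, both of which are routine. Combining the two directions yields $\textrm{HC}_\uint \equivW \textrm{H}_1\textrm{C}_\uint \equivW \C_\uint$.
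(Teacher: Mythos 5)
Your argument is correct and is exactly the route the paper intends: the paper presents this corollary as "a straight-forward yet important consequence of Shmerkin's construction," and pulling back the input set along the computable surjection $\vartheta$ so that it contains a fiber of Hausdorff dimension $1$ is precisely that consequence. All the details you flag (computability of $\vartheta^{-1}$ on $\mathcal{A}(\uint)$, non-emptiness via surjectivity, monotonicity of $\dim_\mathcal{H}$) check out.
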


Before continuing with our investigation of measures, we shall classify the Hausdorff dimension itself. Note that our result does make use of the fact that we have defined Hausdorff dimension only for subsets of $\uint$ -- the result generalizes directly to \emph{compact} subsets of a metric space though.

\begin{theorem}
$\dim_\mathcal{H} \equivW \lim \star \lim$.
\end{theorem}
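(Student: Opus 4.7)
The plan is to establish both inequalities separately.

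For $\dim_\mathcal{H} \leqW \lim \star \lim$, I would begin by noting that for $A \in \mathcal{A}(\uint)$ (automatically compact) the pre-measure $\mathcal{H}^d_\delta(A)$ is uniformly computable in $(A, d, \delta)$ as an element of $\mathbb{R}_>$: enumerate finite rational covers $(B(x_i, r_i))_i$ with $r_i < \delta$, semi-decide $A \subseteq \bigcup_i B(x_i, r_i)$ (a $\Sigma^0_1$ task by compactness), and report the running infimum of $\sum_i r_i^d$. Consequently ``$\mathcal{H}^d_\delta(A) < c$'' is $\Sigma^0_1$ uniformly in its rational parameters, so ``$\mathcal{H}^d(A) = 0$'' is $\Pi^0_2$ and ``$\mathcal{H}^d(A) > 0$'' is $\Sigma^0_2$. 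Unfolding Definition \ref{def:hausdorff} then shows ``$\dim_\mathcal{H}(A) > s$'' is $\Sigma^0_2$ while ``$\dim_\mathcal{H}(A) < s$'' is $\Sigma^0_3$. Producing $\dim_\mathcal{H}(A) \in \mathbb{R}$ therefore amounts to a uniform $\Sigma^0_3$-decision, which is the characteristic strength of $\lim \star \lim$: one application of $\lim$ supplies the $\mathbb{R}_<$-representation at the $\Sigma^0_2$-level, a sequentially composed second application supplies the $\mathbb{R}_>$-representation at the $\Sigma^0_3$-level, and the two combine computably into a real.

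For $\lim \star \lim \leqW \dim_\mathcal{H}$, I would adapt the Cantor-like construction of \cite{fouche} recalled after Definition \ref{def:hausdorff} to accept real-valued scales. Given a computable sequence $(x_n)_{n \in \mathbb{N}}$ of reals in $(0, 1]$ presented as fast Cauchy sequences, set $d_n := 2^{1/x_n} \in [2, \infty)$. The set $C = \bigcap_n \bigcup_{|w|=n} [a_w, b_w]$ is then computable in $\mathcal{A}(\uint)$: its complement decomposes as a union of open gap intervals $(b_{w0}, a_{w1})$, and for rational $p < q$ the condition $b_{w0} < p \wedge q < a_{w1}$ is a $\Sigma^0_1$ test on the computable real endpoints, so rational sub-intervals of the gaps can be enumerated. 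The formula from \cite{fouche} then yields $\dim_\mathcal{H}(C) = \liminf_n (\log 2)/(\log d_n) = \liminf_n x_n$, so computing $\liminf$ on computable real sequences in $(0, 1]$ reduces to $\dim_\mathcal{H}$.

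The reduction is completed by verifying that $\liminf : \mathcal{C}(\mathbb{N}, (0, 1]) \to \mathbb{R}$ is a Weihrauch representative of $\lim \star \lim$: writing each $x_n$ as a limit $\lim_k \varphi(n, k)$ of rationals, $\liminf_n x_n$ rearranges as $\lim_n \inf_{k \geq n} \lim_j \varphi(k, j)$, an iterated two-step limit built on Theorem \ref{theo:vstein}'s jump characterization $\id : \mathbb{R}_< \to \mathbb{R} \equivW \lim$. The main obstacle is precisely this last equivalence: ensuring that $\liminf$ on computable real sequences attains the full $\lim \star \lim$ degree rather than merely $\lim$, which demands a genuinely $\Sigma^0_3$-hard source problem and a uniform encoding into the target sequence.
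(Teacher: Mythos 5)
Your upper bound is sound and is essentially the paper's: the family of achievable cover--sums is computable from $(d,A)$ without any use of $\lim$ (compactness of $A$ makes ``$A\subseteq\bigcup_i B(x_i,r_i)$'' semi-decidable for rational covers), after which ``$\mathcal{H}^d(A)=0$'' becomes a uniformly $\Pi^0_2$ predicate in the rational parameter $d$. The paper decides all of these predicates in parallel via $\widehat{\lpo\star\lim}\equivW\lim\star\lim$ and reads off the infimum; you instead assemble the two one-sided names of $\dim_\mathcal{H}(A)$ from the $\Sigma^0_2$ lower cut (one $\lim$) and the $\Sigma^0_3$ upper cut ($\lim\star\lim$). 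Both packagings work, and the uniformity you need is exactly what your computability claim for $\mathcal{H}^d_\delta(A)\in\mathbb{R}_>$ provides.

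The lower bound, however, has a genuine gap, and it sits exactly where you flag it. Routing everything through $\liminf$ of a single real sequence is legitimate only if you prove $\lim\star\lim\leqW\liminf$, and that claim \emph{is} the hardness direction---nothing in your write-up supplies it. The difficulty is a synchronization problem: encoding countably many $\Sigma^0_2/\Pi^0_2$ answers into the ternary digits of $\liminf_m y_m$ requires that, for every $K$, the ``digit $0$'' witnesses for all relevant indices $n<K$ occur \emph{simultaneously} at infinitely many stages $m$, because the supremum in $\liminf_m y_m=\sup_N\inf_{m\geq N}y_m$ ranges only over tails. The naive guess $y_m=\sum_n 2\cdot 3^{-n-1}g_n(m)$ therefore fails, and one needs a block/marker construction (or an appeal to the known classification of $\limsup$ and $\liminf$, cf.~\cite{gherardi4}). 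The paper sidesteps the issue entirely by using $\sup_j\inf_k q_{\langle j,k\rangle}$ with an \emph{unrestricted} outer index: there the ternary encoding of $F_2$ goes through with no synchronization, the inner infimum is absorbed into the Cantor-set construction of Subsection \ref{subsec:introhausdorff}, and the outer supremum is realized geometrically by the computable rescaled union $\{0\}\cup\bigcup_i A_i\prec[2^{-2i-2},2^{-2i-1}]$ together with countable stability of Hausdorff dimension. So either import $\liminf\equivW\lim\star\lim$ with a proof or citation, or switch to the paper's two-layer $\sup$--$\inf$ decomposition; as written, the reduction $\lim\star\lim\leqW\dim_\mathcal{H}$ is not established.
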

\begin{proof}
First, we show $\dim_\mathcal{H} \leqW \lim \star \lim$, split over the following two lemmata.

\begin{lemma}
$(d, A) \mapsto \{\sum_{i \in \mathbb{N}} r_i^d \mid \exists (x_i)_{i \in \mathbb{N}} \ A \subseteq \bigcup_{i \in \mathbb{N}} B(x_i,r_i) \ \} : \uint \times \mathcal{A}(\uint) \to \mathcal{O}(\mathbb{R}_<)$ is computable.
\begin{proof}
Given $d \in \mathbb{R}$ and $A \in \mathcal{K}(\uint)$ (using that $\uint$ is compact), we can compute $\{((r_0, r_1, \ldots), (x_0, x_1, \ldots)) \mid A \subseteq \bigcup_{i \in \mathbb{N}} B(x_i,r_i)\} \in \mathcal{O}(\uint^\mathbb{N} \times \uint^\mathbb{N})$. As $\uint^\mathbb{N}$ is computably overt, we can move to $\{(r_0, r_1, \ldots) \mid \exists (x_0, x_1, \ldots) A \subseteq \bigcup_{i \in \mathbb{N}} B(x_i,r_i)\} \in \mathcal{O}(\uint^\mathbb{N})$. For $d > 0$, pointwise exponentiation is an open map, thus we compute $\{(r_0^d, r_1^d, \ldots) \mid \exists (x_0, x_1, \ldots) A \subseteq \bigcup_{i \in \mathbb{N}} B(x_i,r_i)\} \in \mathcal{O}(\uint^\mathbb{N})$. Then we use Lemma \ref{lemma:opensum} to obtain $\{\sum_{i \in \mathbb{N}} r_i^d \mid \exists (x_i)_{i \in \mathbb{N}} \ A \subseteq \bigcup_{i \in \mathbb{N}} B(x_i,r_i) \ \} \in \mathcal{O}(\mathbb{R}_<)$.
\end{proof}
\end{lemma}

\begin{lemma}
$\left ( U_{(\cdot)} \mapsto \inf \{d \in \uint \mid \inf U_d = 0\} : \mathcal{C}(\uint, \mathcal{O}(\mathbb{R}_<)) \to \mathbb{R} \right ) \leqW \lim \star \lim$
\begin{proof}
We find that $\inf U_d = 0$ iff $\forall k \in \mathbb{N} \ 2^{-k} \in U_d$. We can compute $U_d \mapsto \{k \in \mathbb{N} \mid 2^{-k} \in U_d\} : \mathcal{O}(\mathbb{R}_<) \to \mathcal{O}(\mathbb{N})$. Now $\left ( (=\mathbb{N}?) : \mathcal{O}(\mathbb{N}) \to \{0,1\} \right ) \leqW \lpo \star \lim$, as follows from arguments in \cite{paulymaster,mylatz}. Since $\widehat{\lpo \star \lim} \equivW \lim \star \lim$, we can decide whether $\inf U_d = 0$ for all rational $d$ in parallel, and this suffices to obtain $\inf \{d \in \uint \mid \inf U_d = 0\} \in \mathbb{R}$.
\end{proof}
\end{lemma}

\begin{corollary}
$\left (A \mapsto \inf \{d \geq 0 \mid \inf \{\sum_{i \in \mathbb{N}} r_i^d \mid \exists (x_i)_{i \in \mathbb{N}} \ A \subseteq \bigcup_{i \in \mathbb{N}} B(x_i,r_i) \ \} = 0\} : \mathcal{A}(\uint) \to \mathbb{R} \right ) \leq_W \lim \star \lim$.
\end{corollary}

For the other direction, first note that a standard argument establishes:
\begin{lemma}
$\left ((q_i)_{i \in \mathbb{N}} \mapsto \sup_{j \in \mathbb{N}} \inf_{k \in \mathbb{N}} q_{\langle j,k\rangle} :\subseteq \mathcal{C}(\mathbb{N}, \mathbb{Q}) \to \mathbb{R} \right) \equivW \lim \star \lim$.
\begin{proof}
For the $\leqW$-direction, note that $\inf \equivW \sup \equivW \lim \equivW \widehat{\lim}$, e.g.~\cite{stein,brattka2}.

For the $\geqW$-direction, we employ the function $F_2 : \Cantor \to \Cantor$ defined via $F_2(x)(n) = 0$ iff $\forall i \exists k \ \ x(\langle n, i, k\rangle) = 0$ and $F_2(x)(n) = 1$ otherwise. This function was studied in \cite{stein,brattka}\footnote{In the previous literature, the function $F_2$ was denoted by $C_2$. We avoid this notation in order to prevent confusion with choice for the discrete two element space. Instead, our notation is inspired by \cite{nobrega}.}, and in \cite{gherardi4}, it is shown that $F_2 \equivW \lim \star \lim$.

Given $x \in \Cantor$, let $q_{\langle \langle n, j\rangle, k\rangle} := \sum_{l \leq n} 2*3^{-l} x(\langle l, j, k\rangle)$. Then $\sup_j \inf_k q_{\langle \langle n, j\rangle, k\rangle} = \sum_{\{l \leq n \mid \neg \forall j \exists k x(\langle l, j, k\rangle) = 0\}} 2*3^{-l}$, and subsequently, $\sup_{\langle n,j\rangle} \inf_k q_{\langle \langle n,j\rangle, k\rangle} = \sum_{\{l \in \mathbb{N} \mid \neg \forall j \exists k x(\langle l, j, k\rangle) = 0\}} 2*3^{-l}$. From such a member of the usual ternary Cantor set, the corresponding point in $\Cantor$ can be computed. This point then is $F_2(x)$.
\end{proof}
\end{lemma}

Given some closed set $A \in \mathcal{A}(\uint)$ and some interval $[a, b]$, let $A \prec [a,b]$ be the rescaling of $A$ into $[a,b]$. Not only is this a computable operation, but even $(A_i)_{i \in \mathbb{N}} \mapsto \{0\} \cup \bigcup_{i \in \mathbb{N}} A_i \prec [2^{-2i-2},2^{-2i-1}] : \mathcal{C}(\mathbb{N},\mathcal{A}(\uint)) \to \mathcal{A}(\uint)$ is computable. Moreover, we have:
\begin{lemma}
$\sup_{i \in \mathbb{N}} \dim_\mathcal{H}(A_i) = \dim_\mathcal{H}(\{0\} \cup \bigcup_{i \in \mathbb{N}} A_i \prec [2^{-2i-2},2^{-2i-1}])$
\end{lemma}

Combining this with the construction of sets of given Hausdorff dimension in Subsection \ref{subsec:introhausdorff} (which takes care of the inner $\inf$ \emph{for free}) and the preceding lemma, we obtain the claim.
\end{proof}

\subsection{The Frostman Lemma}
We now finally direct our attention to the maps $\operatorname{Frost}$ and $\operatorname{StrictFrost}$ introduced in Definition \ref{def:frostmanmaps}. One can prove the Frostman lemma via the min-cut/max-flow theorem, and the construction of the flows involved (if done in the right way) yields Weihrauch reductions to $\operatorname{NonZeroFlow}$ and $\operatorname{MaxFlow}$ respectively.

Given a closed set $A \in \mathcal{A}(\uint)$ and some $s \in \uint$ we construct an infinite binary tree with capacities. As in Subsection \ref{subsec:flows}, we assume that the vertices of the tree are labeled with (closed) dyadic intervals in the canonic way. As $\uint$ is compact, we can semidecide if $A \cap I = \emptyset$ for a closed interval, and we run all these tests simultaneously. When it comes to assigning a weight to a vertex of depths $n$, we choose $2^{-sn}$ if no interval containing the label of the vertex has been identified as disjoint with $A$ yet. If we have proof that some superset (and thus the label itself) is disjoint from $A$, we assign the weight $0$. Any cut through this tree gives an upper bound for the $s$-dimensional Hausdorff content of $A$, and so the min-cut/max-flow theorem implies that there is a non-zero flow compatible with the tree. Any such flow then induces an $s$-Frostman measure using the construction from Subsection \ref{subsec:flows}. This proves:

\begin{lemma}
$\operatorname{Frost} \leqW \operatorname{NonZeroFlow}$ and $\operatorname{StrictFrost} \leqW \operatorname{MaxFlow}$.
\end{lemma}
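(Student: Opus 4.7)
The plan is to formalize the author's outline as an explicit Weihrauch reduction. The computable map $H$ takes an input $(A, s) \in \mathcal{A}(\uint) \times \uint$ to a capacity assignment $f_{A, s}\colon\{0, 1\}^* \to \mathbb{R}_0^+$ defined by $f_{A, s}(v) = 2^{-s|v|}$ whenever no dyadic interval containing $\mathfrak{D}_v$ has so far been enumerated as disjoint from $A$, and $f_{A, s}(v) = 0$ otherwise. The enumeration underlying the $\mathcal{A}(\uint)$-name makes $f_{A, s}$ computable in $(A, s)$, and the post-processing map $K$ will convert a flow $g$ returned by $\operatorname{NonZeroFlow}$ (resp.\ $\operatorname{MaxFlow}$) into the measure $\mu$ given by $\mu(\mathfrak{D}_v^\circ) = g(v)$ via the flow-to-measure construction of Subsection \ref{subsec:flows}.

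For $\operatorname{Frost}$ I first need to know that $f_{A, s} \in \dom(\operatorname{NonZeroFlow})$ whenever $(A, s) \in \dom(\operatorname{Frost})$. The existence of an $s$-Frostman measure on $A$ forces the $s$-dimensional Hausdorff content of $A$ to be positive (Frostman's lemma); any finite cut in the tree corresponds to a covering of $A$ by dyadic intervals with total weight $\sum 2^{-s|v|}$ at least this content, so every cut has positive capacity and the min-cut/max-flow theorem on infinite trees yields a positive admissible flow. On the output side, $g \leq f_{A, s}$ ensures $g(v) > 0$ only when $\mathfrak{D}_v$ meets $A$, so $\operatorname{supp}(\mu) \subseteq A$. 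A routine covering argument gives $\mu(B(x, r)) \leq C r^s$ for a uniform constant $C$ (since any ball is covered by $O(1)$ dyadic intervals of length at most $2r$); rescaling $\mu$ by $1/C$ (still nonzero since $g(\varepsilon) > 0$) yields the required $s$-Frostman measure.

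For $\operatorname{StrictFrost}$ the map $H$ is unchanged, but $\operatorname{MaxFlow}$ replaces $\operatorname{NonZeroFlow}$, and I additionally need $\operatorname{supp}(\mu) = A$. The extra ingredient is that maximality should force $g(v) > 0$ at every \emph{live} vertex $v$ (i.e.\ every $v$ with $\mathfrak{D}_v \cap A \neq \emptyset$): the subtree rooted at $v$ has positive max flow by the content argument applied to the compact set $A \cap \mathfrak{D}_v$, and any max flow on the whole tree must then carry positive flow through $v$, since otherwise one could reroute flow through $v$ and gain capacity without violating conservation.

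The main obstacle I foresee lies in this last sentence: a generic max flow need not be unique, and the plain conservation argument does not immediately pin down the value of $g(v)$ from $g(\varepsilon)$ being maximal. To push the argument through uniformly (as the Weihrauch reduction requires across all realizers of $\operatorname{MaxFlow}$), I expect one needs either to perturb the capacities by computable positive corrections $\varepsilon_v$ forcing every max flow to put positive weight on every live branch, or to postprocess an arbitrary max flow by a redistribution step into a canonical one with full support. Making either route rigorous is the technical heart of the StrictFrost reduction.
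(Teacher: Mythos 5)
Your reduction for $\operatorname{Frost}$ is exactly the paper's: the same capacities $f_{A,s}(v)=2^{-s|v|}$ on the not-yet-excluded dyadic vertices, the observation that every cut of the tree dominates the $s$-dimensional Hausdorff content of $A$ so that min-cut/max-flow yields a non-zero admissible flow, and the flow-to-measure construction of Subsection \ref{subsec:flows}. The constant $C$ and the rescaling step you insert (balls versus dyadic intervals) is a detail the paper suppresses but is harmless. This half is fine.

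Your worry about $\operatorname{StrictFrost}$ is justified, and it concerns a step the paper's own sketch also leaves implicit: an arbitrary element of $\operatorname{MaxFlow}(f_{A,s})$ need \emph{not} carry positive flow through every live vertex. Concretely, take $s=\tfrac12$ and $A=[0,\tfrac14]\cup B\cup[\tfrac12,1]$ with $B\subseteq(\tfrac14,\tfrac12)$ a Cantor set of positive but tiny $\tfrac12$-content; the global max-flow value is $1$ (the root is the bottleneck), a max flow may push $2^{-1/2}$ into the right subtree and only $1-2^{-1/2}<\tfrac12$ into the left one, and since the subtree at $00$ alone absorbs $\tfrac12$, the vertex $01$ containing $B$ may legitimately receive flow $0$; the induced measure then misses $B$ although $(A,\tfrac12)\in\dom(\operatorname{StrictFrost})$. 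Of your two proposed repairs, perturbation is problematic (adding positive capacity at dead vertices destroys $\operatorname{supp}(\mu)\subseteq A$, and you cannot tell which vertices are dead), but redistribution works and stays in the claimed degree: apply $\operatorname{MaxFlow}$ in parallel to all subtrees (using $\widehat{\operatorname{MaxFlow}}\equivW\widehat{\lim}\equivW\lim\equivW\operatorname{MaxFlow}$) to obtain the max-flow values $m(v)$ of the subtrees as honest real numbers, then split top-down via $g(\varepsilon):=m(\varepsilon)$, $g(v0):=\min\bigl(m(v0),\,g(v)-\tfrac12\min(g(v),m(v1))\bigr)$ and $g(v1):=g(v)-g(v0)$. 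One checks inductively that this is a flow with $g\leq f_{A,s}$ and that $g(v)>0$ whenever $g$ is positive at the parent and $m(v)>0$; and on $\dom(\operatorname{StrictFrost})$ the Frostman lemma applied locally to $A\cap\mathfrak{D}_v$ gives $m(v)>0$ for every $v$ with $\mathfrak{D}_v^\circ\cap A\neq\emptyset$, which is exactly what full support requires. So your plan is the right one; the missing piece is this explicit redistribution, which you should carry out rather than leave as an announced obstacle.
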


The converse directions are obtained from the following in conjunction with preceding results:

\begin{lemma}
$\C_\mathbb{N} \times \textrm{H}_1\textrm{C}_\uint \leqW \operatorname{Frost}$.
\begin{proof}
As before, we use the rescaling operation\footnote{This was defined as follows: Given some closed set $A \in \mathcal{A}(\uint)$ and some interval $[a, b]$, let $A \prec [a,b]$ be the rescaling of $A$ into $[a,b]$. This is a computable map.} $A \prec [a,b]$. Given a non-empty closed set $A \in \mathcal{A}(\mathbb{N})$ and another closed subset $B \in \mathcal{A}(\uint)$ with $\dim_\mathcal{H}(B) = 1$, we may compute the set $\{0\} \cup \bigcup_{n \in A} B \prec [2^{-2i-2}, 2^{-2i-1}] \in \mathcal{A}(\uint)$, and note that this set again has Hausdorff dimension $1$. Moreover, we make use of $\C_\mathbb{N} \equivW \textrm{UC}_\mathbb{N}$, i.e.~for closed choice on $\mathbb{N}$ we may safely assume that the closed set is a singleton. In this case, the support of any Frostman-measure on the set is inside $B \prec [2^{-2n-2},2^{-2n-1}]$ for the unique $n \in \mathbb{N}$ with $A = \{n\}$. From this, we can compute a point in $B \prec [2^{-2n-2},2^{-2n-1}]$, which in turn allows us to find both a point in $B$ as well as identify $n$.
\end{proof}
\end{lemma}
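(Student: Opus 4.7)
The plan is to reuse the template that worked for the preceding reductions in this section: apply the rescaling trick to package an $\textrm{H}_1\textrm{C}_\uint$-instance and a $\C_\mathbb{N}$-instance together, feed the result to $\operatorname{Frost}$, and then recover both answers from the returned measure.

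First I would invoke $\C_\mathbb{N} \equivW \textrm{UC}_\mathbb{N}$ so that the $\C_\mathbb{N}$-input may be assumed to be a singleton $A = \{n\}$. Given also $B \in \mathcal{A}(\uint)$ with $\dim_\mathcal{H}(B) = 1$, I would build the computable closed set $C := \{0\} \cup \bigcup_{m \in A} B \prec [2^{-2m-2}, 2^{-2m-1}]$. Since $C$ contains a bi-Lipschitz copy of $B$ and Hausdorff dimension is preserved under such maps, $\dim_\mathcal{H}(C) = 1$; hence by the Frostman lemma an $s$-Frostman measure on $C$ exists for every $s < 1$. I would fix one such $s$, say $s = \tfrac{1}{2}$, and query the oracle at $(C, s)$ to obtain an $s$-Frostman measure $\mu$ with $\operatorname{supp}(\mu) \subseteq C$.

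The key observation is that $\mu$ has no atom at $0$: because $s > 0$, the Frostman bound forces $\mu(\{0\}) \leq \inf_{r > 0} r^s = 0$. Hence the non-zero mass of $\mu$ lies entirely inside the single active block $B \prec [2^{-2n-2}, 2^{-2n-1}]$. Semi-deciding $\mu(J) > 0$ in parallel over rational dyadic intervals $J$, I will eventually find some $J$ strictly contained in a unique block $(2^{-2m-2}, 2^{-2m-1})$; that $m$ must equal $n$, answering the $\C_\mathbb{N}$-part. Knowing $n$, I pull $\mu$ back by the inverse rescaling to obtain a non-zero measure on $\uint$ supported inside $B$, and then run the standard point-extraction routine from the proof of $\operatorname{ConstructMeasure}_\uint \leqW \C_\uint$ (iteratively selecting basic balls of positive measure to build a fast Cauchy sequence) to produce a point of $B$, answering the $\textrm{H}_1\textrm{C}_\uint$-part.

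I do not expect any serious obstacle; the only spot that needs a moment's thought is making sure the parallel search eventually returns a dyadic interval strictly separated from the endpoints of one of the blocks. This follows from $\mu$ being non-zero together with the non-atomicity forced by $s > 0$: by countable additivity the active block must already contain some small open dyadic sub-interval of positive $\mu$-measure strictly interior to it, and that sub-interval will be detected.
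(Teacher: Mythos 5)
Your proposal is correct and follows essentially the same route as the paper's proof: the same set $\{0\} \cup \bigcup_{m \in A} B \prec [2^{-2m-2},2^{-2m-1}]$, the same appeal to $\C_\mathbb{N} \equivW \textrm{UC}_\mathbb{N}$ to reduce to a singleton $\{n\}$, and the same key observation that an $s$-Frostman measure with $s>0$ cannot charge the isolated point $0$, so its support lies in the unique active block. The only immaterial difference is the order of recovery --- you identify $n$ first and then extract a point of $B$, whereas the paper extracts a point first and reads off $n$ from it.
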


\begin{lemma}
$\left (\id : \mathcal{A}(\mathbb{N}) \to \mathcal{V}(\mathbb{N}) \right ) \leqW \operatorname{StrictFrost}$.
\begin{proof}
Given $A \in \mathcal{A}(\mathbb{N})$, we can compute $\left (\{0\} \cup \bigcup_{n \in A} [2^{-2n-2}, 2^{-2n-1}] \right )$. We use $\operatorname{StrictFrost}$ to find a measure with this set as support, and then Theorem \ref{theo:suppovert} to obtain the set as overt set. To complete the reduction, note $n \in A \Leftrightarrow (2^{-2n-2},2^{-2n-1}) \cap \left (\{0\} \cup \bigcup_{i \in A} [2^{-2i-2}, 2^{-2i-1}] \right ) \neq \emptyset$.
\end{proof}
\end{lemma}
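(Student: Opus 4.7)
The plan is to reduce the task of enumerating $A$ to finding a Frostman measure whose support exactly encodes $A$ on a carefully engineered closed subset of $\uint$, then extract overtness via Theorem \ref{theo:suppovert}.

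First, I would define a computable map $\Phi : \mathcal{A}(\mathbb{N}) \to \mathcal{A}(\uint)$ by setting $\Phi(A) := [0, 1/4] \cup \bigcup_{n \in A} I_n$, where $I_n := [1/4 + 2^{-2n-2},\ 1/4 + 2^{-2n-1}]$. The intervals $I_n$ are pairwise disjoint and contained in $(1/4, 1/2]$, accumulating at $1/4$ from above. Computability of $\Phi$ as a map into $\mathcal{A}(\uint)$ is routine: for any basic rational open interval $J \subseteq \uint$ with $\overline{J} \subseteq (1/4, 1]$, only finitely many $I_n$ can meet $\overline{J}$, and for each such $n$ we can semidecide $n \notin A$ from the $\mathcal{A}(\mathbb{N})$-name; when all these checks succeed we enumerate $J$ into the complement of $\Phi(A)$, and we never enumerate any interval that meets $[0, 1/4]$.

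Next I would apply $\operatorname{StrictFrost}$ to the pair $(\Phi(A), s)$ for some fixed $s \in (0, 1)$, say $s = 1/2$. The crucial point is that the base $[0, 1/4]$ ensures $\Phi(A)$ is a perfect set of Hausdorff dimension exactly $1$ regardless of whether $A$ is empty, finite, or infinite, so the input is always in $\dom(\operatorname{StrictFrost})$. This yields $\mu \in \mathcal{M}(\uint)$ with $\operatorname{supp}(\mu) = \Phi(A)$, from which Theorem \ref{theo:suppovert} computably recovers $\Phi(A)$ as an element of $\mathcal{V}(\uint)$.

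The reduction is completed using the decoding equivalence
\[ n \in A \iff (1/4 + 2^{-2n-2},\ 1/4 + 2^{-2n-1}) \cap \Phi(A) \neq \emptyset. \]
Forward: if $n \in A$ then $I_n \subseteq \Phi(A)$, so the interior of $I_n$ witnesses the intersection. Reverse: if $n \notin A$, then the open test interval lies strictly above $1/4$ (so it misses $[0, 1/4]$) and is disjoint from every other $I_m$ by pairwise disjointness, hence disjoint from $\Phi(A)$. Because $\mathcal{V}(\uint)$ lets us semidecide nonempty intersection with basic open sets uniformly in $n$, this produces the desired $\mathcal{V}(\mathbb{N}) = \mathcal{O}(\mathbb{N})$-enumeration of $A$. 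The only subtle point worth highlighting is the need to enforce $\Phi(A) \in \dom(\operatorname{StrictFrost})$ uniformly — adding the fixed base interval $[0, 1/4]$ neatly handles the empty and finite cases that would otherwise produce isolated points and thereby exclude the existence of any $s$-Frostman measure with the prescribed support.
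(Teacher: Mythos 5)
Your proof is correct and follows essentially the same route as the paper: encode $A$ as a closed subset of $\uint$ built from pairwise disjoint intervals indexed by $n \in A$ together with a fixed base, apply $\operatorname{StrictFrost}$, recover the support as an overt set via Theorem \ref{theo:suppovert}, and decode membership of $n$ by testing whether the corresponding open interval meets the set. Your one deviation --- replacing the paper's single accumulation point $\{0\}$ by the base interval $[0,1/4]$ --- is a small but genuine improvement: the paper's set has $0$ as an isolated point when $A$ is finite or empty and then admits no $s$-Frostman measure (for $s>0$) with that exact support, whereas your perfect set of positive Lebesgue measure lies in $\dom(\operatorname{StrictFrost})$ uniformly for every input.
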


\begin{corollary}
\label{corr:frostclassification}
$\C_\mathbb{N} \times \C_\uint \equivW \operatorname{Frost}$.
\end{corollary}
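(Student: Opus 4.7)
The plan is to simply chain together the reductions established earlier in this subsection with the known equivalences from the background material, so the proof is essentially a bookkeeping exercise rather than a new construction.

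For the upper bound $\operatorname{Frost} \leqW \C_\mathbb{N} \times \C_\uint$, I would start from the preceding lemma $\operatorname{Frost} \leqW \operatorname{NonZeroFlow}$ and compose it with the earlier theorem $\operatorname{NonZeroFlow} \equivW \C_\mathbb{N} \times \C_\Cantor$. Since $\Cantor$ is an uncountable compact metric space, the remark in Section~\ref{subsec:weihrauch} that $\C_\mathbf{X} \equivW \C_\uint$ for any such $\mathbf{X}$ gives $\C_\Cantor \equivW \C_\uint$, and therefore $\C_\mathbb{N} \times \C_\Cantor \equivW \C_\mathbb{N} \times \C_\uint$. Chaining yields the desired $\operatorname{Frost} \leqW \C_\mathbb{N} \times \C_\uint$.

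For the lower bound $\C_\mathbb{N} \times \C_\uint \leqW \operatorname{Frost}$, I would invoke the preceding lemma $\C_\mathbb{N} \times \textrm{H}_1\textrm{C}_\uint \leqW \operatorname{Frost}$ and the corollary from the Hausdorff-dimension subsection stating $\textrm{H}_1\textrm{C}_\uint \equivW \C_\uint$. The latter, which depends ultimately on Shmerkin's construction (Theorem~\ref{theo:shmerkin}), lets us replace $\textrm{H}_1\textrm{C}_\uint$ with $\C_\uint$ without loss. Combining these, $\C_\mathbb{N} \times \C_\uint \equivW \C_\mathbb{N} \times \textrm{H}_1\textrm{C}_\uint \leqW \operatorname{Frost}$.

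Since both reductions go through, we obtain $\operatorname{Frost} \equivW \C_\mathbb{N} \times \C_\uint$. There is really no main obstacle to surmount here, as every nontrivial step has already been carried out in the preceding lemmas and corollaries; the only subtlety is to notice that both directions line up with the same target degree $\C_\mathbb{N} \times \C_\uint$, which relies on knowing that $\C_\Cantor$ and $\C_\uint$ (as well as $\textrm{H}_1\textrm{C}_\uint$ and $\C_\uint$) coincide in $\mathfrak{W}$.
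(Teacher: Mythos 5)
Your proposal is correct and matches the paper's (implicit) argument: the corollary is stated without proof precisely because it follows by chaining $\operatorname{Frost} \leqW \operatorname{NonZeroFlow} \equivW \C_\mathbb{N} \times \C_\Cantor \equivW \C_\mathbb{N} \times \C_\uint$ with $\C_\mathbb{N} \times \C_\uint \equivW \C_\mathbb{N} \times \textrm{H}_1\textrm{C}_\uint \leqW \operatorname{Frost}$, exactly as you do. The identification $\C_\Cantor \equivW \C_\uint$ is licensed by the paper's remark that $\C_\mathbf{X} \equivW \C_\uint$ for any uncountable compact metric space $\mathbf{X}$ (the later sentence about incomparability concerns $\C_\uint$ versus $\C_\mathbb{N}$), so no gap remains.
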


\begin{corollary}
$\lim \equiv_W \operatorname{StrictFrost}$.
\end{corollary}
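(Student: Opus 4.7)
The plan is to observe that both reductions are essentially assembled from results already established in the paper, so almost no new work is required.

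For the upper bound $\operatorname{StrictFrost} \leq_W \lim$, I would invoke the immediately preceding lemma giving $\operatorname{StrictFrost} \leq_W \operatorname{MaxFlow}$, together with the earlier theorem $\operatorname{MaxFlow} \equiv_W \lim$. Composing these reductions yields the desired bound. This direction hides the only real mathematical content, namely the min-cut/max-flow argument behind the Frostman lemma, but that work has already been discharged.

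For the lower bound $\lim \leq_W \operatorname{StrictFrost}$, I would combine the preceding lemma $\bigl(\id : \mathcal{A}(\mathbb{N}) \to \mathcal{V}(\mathbb{N})\bigr) \leq_W \operatorname{StrictFrost}$ with v.~Stein's theorem (Theorem \ref{theo:vstein}), which supplies $\lim \equiv_W \bigl(\id : \mathcal{A}(\mathbb{N}) \to \mathcal{V}(\mathbb{N})\bigr)$. Chaining the two gives the result.

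Putting these two directions together produces $\lim \equiv_W \operatorname{StrictFrost}$. I do not anticipate an obstacle: the corollary is a direct bookkeeping exercise once the structural lemmata about $\operatorname{MaxFlow}$, $\operatorname{StrictFrost}$ and the overt/closed representation of $\mathbb{N}$ are in place. If any care is needed, it is only in noting that the preceding lemma was stated as $\bigl(\id : \mathcal{A}(\mathbb{N}) \to \mathcal{V}(\mathbb{N})\bigr) \leq_W \operatorname{StrictFrost}$ with the reduction passing through an auxiliary application of Theorem \ref{theo:suppovert}, which is computable and hence does not disturb the Weihrauch degree.
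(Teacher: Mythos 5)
Your proposal is correct and matches the paper's intended argument exactly: the corollary is stated without an explicit proof precisely because it follows by chaining $\operatorname{StrictFrost} \leq_W \operatorname{MaxFlow} \equiv_W \lim$ with $\lim \equiv_W \bigl(\id : \mathcal{A}(\mathbb{N}) \to \mathcal{V}(\mathbb{N})\bigr) \leq_W \operatorname{StrictFrost}$, as you describe.
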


A direct consequence of Corollary \ref{corr:frostclassification} is a computable closed set with Hausdorff dimension $> 2^{-n}$ may still fail to admit a computable $2^{-n}$-Frostman measure. Proposition \ref{prop:perfectcore} then shows that requiring the set to be a computable closed and overt set does not change this. On the other hand, in \cite{fouche} a construction of computable Frostman measures on very special sets is given. We can make the relevant properties of the sets explicit in the following:

\begin{lemma}
Given $s \in \uint$, a set $A \in \mathcal{A}(\uint) \wedge \mathcal{V}(\uint)$ and $p \in \Baire$ such that for any dyadic interval $I$ we find $I \cap A = \emptyset$ or $A \cap I$ admits an $s$-Frostman measure $\mu$ with $\mu(I \cap A) \geq 2^{-p(- \log |I|)}$ we can compute an $s$-Frostman measure on $A$.
\end{lemma}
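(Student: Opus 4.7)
The plan is to construct the desired $s$-Frostman measure $\mu$ on $A$ via a flow $g$ on the infinite binary tree, with $\mu(\mathfrak{D}_v^\circ) = g(v)$ as in the correspondence from Subsection~\ref{subsec:flows}. The flow must satisfy: $g(v) \leq 2^{-s|v|}$ (capacity, ensuring $\mu(\mathfrak{D}_v) \leq |\mathfrak{D}_v|^s$ and hence the Frostman bound $\mu(B(x,r)) \leq C_s r^s$ on arbitrary balls, by a standard doubling argument absorbed by a final rescaling); $g(v) = 0$ whenever $\mathfrak{D}_v \cap A = \emptyset$ (so $\operatorname{supp}(\mu) \subseteq A$); and $g(\varepsilon) > 0$ (so $\mu$ is non-zero).

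The key observation is that under the hypothesis, and taking $s > 0$ without loss of generality (the case $s = 0$ being trivial), for every dyadic $\mathfrak{D}_v$ exactly one of ``$\mathfrak{D}_v \cap A = \emptyset$'' and ``$\mathfrak{D}_v^\circ \cap A \neq \emptyset$'' holds, and both are semi-decidable: the first from the closed data of $A$ together with compactness of $\mathfrak{D}_v$, the second from the overt data. The only potentially missing case---that $\mathfrak{D}_v \cap A$ consists solely of an endpoint of $\mathfrak{D}_v$---is ruled out by the hypothesis because an $s$-Frostman measure with $s>0$ puts no mass on singletons, contradicting the lower bound $2^{-p(|v|)} > 0$.

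I would initialise $g(\varepsilon) := 2^{-p(0)-1}$ (below both the capacity at the root and the guaranteed maximum $s$-Frostman mass $2^{-p(0)}$ on $A$), and then proceed top-down: at each node $v$ with $g(v)$ already assigned, run in parallel the four semi-decision procedures above on $\mathfrak{D}_{v0}$ and $\mathfrak{D}_{v1}$; once one side is confirmed dead, set $g(v_{\mathrm{dead}}) := 0$ and $g(v_{\mathrm{live}}) := g(v)$; once both interiors are confirmed to meet $A$ before any death information has arrived, split $g(v)$ evenly. By the previous paragraph, the procedure at each node terminates.

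The main obstacle is maintaining the capacity bound under repeated all-to-one-side splits, since such splits keep the flow constant while the capacity $2^{-s|v|}$ drops by the factor $2^s$ per level. The hypothesis limits how often this can happen: after $k$ consecutive all-to-one-side splits starting from depth $n$, the entire $s$-Frostman mass below that point is concentrated in a single depth-$(n+k)$ interval of capacity $2^{-s(n+k)}$, so the lower bound $2^{-p(n)}$ forces $k \leq p(n)/s - n$. I would absorb the resulting capacity deficit either by choosing $g(\varepsilon)$ small enough in relation to $p$, or equivalently by allowing $g$ to produce a constant-multiple of an $s$-Frostman measure, rescaled at the end. Once this bookkeeping is settled, the remaining verifications---the flow equation, support inclusion, and $g(\varepsilon) > 0$---are immediate from the construction, giving the required computable $s$-Frostman measure on $A$.
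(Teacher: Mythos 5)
Your reduction of the problem to a flow with capacities $2^{-s|v|}$, and your observation that the Frostman hypothesis (for $s>0$) rules out the case where $A$ meets a dyadic interval only in an endpoint --- so that liveness of each $\mathfrak{D}_v$ is genuinely decidable from the $\mathcal{A}\wedge\mathcal{V}$-data --- are both correct and useful. But the routing rule is where the proof lives, and the rule you propose (all flow to the surviving child when one child dies, an even split when both children are live) does not maintain the capacity bound, and the bookkeeping you sketch cannot be settled. Your bound controls only \emph{consecutive} all-to-one-side steps, and even that bound, $k \leq p(n)/s - n$, grows with $n$ when $p$ does; more importantly, the flow along a path decreases only at branching nodes (by a factor $2$) while the capacity decreases by $2^s$ at \emph{every} level, so you need branching at roughly an $s$-fraction of all levels along every live path, and the hypothesis does not provide this. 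Concretely, take $s=1$, $m_i = 2^i$, let $J_i$ be the interval of length $\tfrac{2}{3}2^{-m_{i+1}}$ centred in $[2^{-m_i-1},2^{-m_i}]$ (the factor $\tfrac13$ keeps its endpoints off the dyadic grid, so no dyadic interval meets $A$ only in a boundary point), and set $A = \{0\}\cup\bigcup_i J_i$. One checks that every dyadic interval of depth $d$ meeting $A$ overlaps some $J_i$ in Lebesgue measure at least $c\,2^{-4d}$, so the hypothesis holds with $p(d)=4d+O(1)$; yet along the path to $0$ the live tree branches only at depths $1,2,4,8,\dots$, so your flow assigns $[0,2^{-d}]$ mass of order $1/d$, exceeding the capacity $2^{-d}$ by an unbounded factor. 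No choice of $g(\varepsilon)$ and no terminal rescaling repairs this, and the resulting measure is not $1$-Frostman.

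The failure mode is that at a branching node the even split can send far too much flow into a child that can only absorb very little: the hypothesis guarantees each live child at depth $m+1$ can carry at least $2^{-p(m+1)}$, but says nothing like ``at least half of what the parent carries.'' A correct construction has to use the quantitative data $p$ in the routing itself --- for instance, never push more than the capacity $2^{-s(m+1)}$ into a child, and use the guaranteed reserve $2^{-p(m+1)}$ of each live child (together with the fact that the children's maximal Frostman masses sum to at least the parent's) to decide, computably and consistently, how to divide the incoming flow --- rather than splitting evenly. As written, your argument establishes the support condition and non-triviality but not the Frostman bound, which is the actual content of the lemma. (A smaller point: your reduction to the case $s>0$ is fine pointwise but breaks the uniformity in $s$ that the statement presumably intends, since $s=0$ cannot be computably distinguished from $s>0$.)
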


By Propositions \ref{prop:concentrate}, \ref{prop:concentratedsupport} the converse is true, too:

\begin{corollary}
Let $A$ admit a computable $s$-Frostman measure. Then there is a computable $B \in \mathcal{A}(\uint) \wedge \mathcal{V}(\uint)$ with $B \subseteq A$ and a computable sequence $p \in \Baire$ such that for any dyadic interval $I$ we find $I \cap A = \emptyset$ or $A \cap I$ admits an $s$-Frostman measure $\mu$ with $\mu(I \cap A) \geq 2^{-p(- \log |I|)}$. Moreover, if an $s$-Frostman measure $\nu$ on $A$ is given, we can effectively find $B$ and $p$.
\end{corollary}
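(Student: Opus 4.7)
The plan is to extract a well-behaved sub-measure from $\nu$ and read off its support; the two preceding propositions do essentially all the work, so this corollary amounts to careful bookkeeping and computing the explicit bound $p$.

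First I would observe that for $s > 0$, every $s$-Frostman measure is non-atomic, since $\nu(\{x\}) \leq \inf_{r > 0} \nu(B(x,r)) \leq \inf_{r > 0} r^s = 0$ (the $s = 0$ case is trivial, as any point measure works). Hence $\nu$ lies in $\dom(\operatorname{Concentrate})$, and Proposition~\ref{prop:concentrate} uniformly produces a pair $(\nu', k)$ with $\nu' \leq \nu$, $\nu'$ non-zero, non-atomic, and $2^{-k}$-concentrated. Since $\nu' \leq \nu$ on every open set and $\nu$ is $s$-Frostman, $\nu'$ inherits the bound $\nu'(B(x,r)) \leq \nu(B(x,r)) \leq r^s$ and so is itself an $s$-Frostman measure.

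Next I would apply Proposition~\ref{prop:concentratedsupport} to $(\nu', k)$ to obtain $B := \operatorname{supp}(\nu') \in \mathcal{A}(\uint) \wedge \mathcal{V}(\uint)$ as a computable closed-and-overt set. From $\operatorname{supp}(\nu') \subseteq \operatorname{supp}(\nu) \subseteq A$ we immediately get $B \subseteq A$.

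Finally, set $p(n) := k + 2n + 2$ and verify the estimate. For any dyadic interval $I$ of length $|I| = 2^{-n}$, the associated dyadic open ball has radius $r = 2^{-n-1}$, so by $2^{-k}$-concentration either $\nu'(I^\circ) = 0$ (whence $I \cap B = \emptyset$) or $\nu'(I^\circ) > 2^{-k} r^2 = 2^{-p(n)}$. In the latter case the restriction $\nu'|_{I}$ is an $s$-Frostman measure supported in $B \cap I$ with $\nu'|_{I}(B \cap I) = \nu'(I^\circ) \geq 2^{-p(-\log|I|)}$, as required. The entire construction is uniform in $\nu$, which yields the ``moreover'' clause. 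The only real obstacle is aligning the radius-versus-length conventions between dyadic intervals and the dyadic open balls appearing in the definition of ``concentrated''; once this constant-shuffling is pinned down, nothing further is required.
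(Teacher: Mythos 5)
Your argument is exactly the one the paper intends: the paper's entire justification for this corollary is the citation of Propositions~\ref{prop:concentrate} and~\ref{prop:concentratedsupport}, and you have correctly filled in the bookkeeping those references leave implicit (non-atomicity of $s$-Frostman measures for $s>0$, inheritance of the Frostman bound under $\nu'\leq\nu$, and the explicit bound $p(n)=k+2n+2$ coming from $2^{-k}$-concentration). There is no substantive difference from the paper's route.
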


\subsection{On computably universally measure 0 sets}
Recall that a set $A$ is called \emph{universally measure 0}, if there is no non-atomic non-zero Radon measure supported by $A$. Clearly every countable set is universally measure $0$. There are universally measure 0 sets with cardinality $2^{\aleph_0}$, however, such sets cannot be Borel. An example on the real line was constructed by \name{Sierpi\'nski} and \name{Szpilrajn-Marczeswski} \cite{sierpinski}. Later, \name{Zindulka} even found an example of a universally measure 0 set with positive Hausdorff dimension \cite{zindulka} -- thus, the requirement of compactness of $A$ in the Frostman lemma cannot be completely relaxed.

Our results show that in the computable world, the picture is very different: If we call a set $A$ \emph{computably universally measure 0}, if there is no non-atomic non-zero computable Radon measure supported by $A$, we find that there is a computable closed and computable overt set with Hausdorff dimension $1$ that is computably universally measure $0$.

A much stronger effective notion is \emph{arithmetically universally measure 0} -- a set is this, if it does not support any non-atomic non-zero Radon measure computable relative to some arithmetic degree. This notion (under a different moniker) was employed by \name{Gregoriades} in \cite{gregoriades3} in order to study computable Polish spaces up to $\Delta_1^1$-isomorphisms: A Polish space $\mathbf{X}$ is $\Delta_1^1$-isomorphic to $\Baire$ iff it is not arithmetically universally measure $0$. \name{Gregoriades} then constructed computable closed and computable overt sets with cardinality $2^{\aleph_0}$ that are arithmetically universally measure 0.

\hide{
\subsection{More general constructions of measures}
Define the operator $\operatorname{ConstructMeasure} : \subseteq \mathcal{A}(\uint) \mto \mathcal{M}(\uint)$ by $\mu \in \operatorname{ConstructMeasure}(A)$ iff $\emptyset \neq \operatorname{supp}(\mu) \subseteq A$. Define $\operatorname{ConstructMeasure-NA/HD}$ by requiring $\mu$ to be non-atomic, but simultaneously restricting $A$ to have some fixed lower bound on its Hausdorff dimension. Then we obtain $\operatorname{StrictConstructMeasure}$ and $\operatorname{StrictConstructMeasure-NA/HD}$ by requesting $\operatorname{supp}(\mu) = A$ instead of merely $\operatorname{supp}(\mu) \subseteq A$ in the respective definitions. Finally, we find $\operatorname{StrictConstructMeasure-NA/HD+}$ by extending to all sets of positive Hausdorff dimension (without known bounds).

\begin{proposition}
$\operatorname{ConstructMeasure} \equivW \C_{\Cantor}$
\end{proposition}

\begin{proposition}
$\operatorname{ConstructMeasure-NA/HD} \equivW \C_{\Cantor} \times \C_\mathbb{N}$
\end{proposition}

\begin{proposition}
$\operatorname{StrictConstructMeasure} \equivW \lim \equivW \operatorname{StrictConstructMeasure-NA/HD+} \equivW \operatorname{StrictConstructMeasure-NA/HD}$
\end{proposition}}

\bibliographystyle{eptcs}
\bibliography{../spieltheorie}

\begin{thebibliography}{10}
\providecommand{\bibitemstart}[1]{\bibitem{#1}}
\providecommand{\bibitemend}{}
\providecommand{\bibliographystart}{}
\providecommand{\bibliographyend}{}
\providecommand{\url}[1]{\texttt{#1}}
\providecommand{\urlprefix}{Available at }
\providecommand{\bibinfo}[2]{#2}
\bibliographystart

\bibitemstart{brattka13}
\bibinfo{author}{Vasco Brattka} (\bibinfo{year}{1996}):
  \emph{\bibinfo{title}{Recursive characterization of computable real-valued
  functions and relations}}.
\newblock {\sl \bibinfo{journal}{Theoretical Computer Science}}
  \bibinfo{volume}{162}, pp. \bibinfo{pages}{45--77}.
\bibitemend

\bibitemstart{brattka11}
\bibinfo{author}{Vasco Brattka} (\bibinfo{year}{1999}):
  \emph{\bibinfo{title}{Computable Invariance}}.
\newblock {\sl \bibinfo{journal}{Theoretical Computer Science}}
  \bibinfo{volume}{210}, pp. \bibinfo{pages}{3--20}.
\bibitemend

\bibitemstart{brattka}
\bibinfo{author}{Vasco Brattka} (\bibinfo{year}{2005}):
  \emph{\bibinfo{title}{Effective Borel measurability and reducibility of
  functions}}.
\newblock {\sl \bibinfo{journal}{Mathematical Logic Quarterly}}
  \bibinfo{volume}{51}(\bibinfo{number}{1}), pp. \bibinfo{pages}{19--44}.
\bibitemend

\bibitemstart{paulybrattka}
\bibinfo{author}{Vasco Brattka}, \bibinfo{author}{Matthew de~Brecht} \&
  \bibinfo{author}{Arno Pauly} (\bibinfo{year}{2012}):
  \emph{\bibinfo{title}{Closed Choice and a Uniform Low Basis Theorem}}.
\newblock {\sl \bibinfo{journal}{Annals of Pure and Applied Logic}}
  \bibinfo{volume}{163}(\bibinfo{number}{8}), pp. \bibinfo{pages}{968--1008}.
\bibitemend

\bibitemstart{brattka3}
\bibinfo{author}{Vasco Brattka} \& \bibinfo{author}{Guido Gherardi}
  (\bibinfo{year}{2011}): \emph{\bibinfo{title}{Effective Choice and
  Boundedness Principles in Computable Analysis}}.
\newblock {\sl \bibinfo{journal}{Bulletin of Symbolic Logic}}
  \bibinfo{volume}{1}, pp. \bibinfo{pages}{73 -- 117}.
\newblock \bibinfo{note}{ArXiv:0905.4685}.
\bibitemend

\bibitemstart{brattka2}
\bibinfo{author}{Vasco Brattka} \& \bibinfo{author}{Guido Gherardi}
  (\bibinfo{year}{2011}): \emph{\bibinfo{title}{Weihrauch Degrees, Omniscience
  Principles and Weak Computability}}.
\newblock {\sl \bibinfo{journal}{Journal of Symbolic Logic}}
  \bibinfo{volume}{76}, pp. \bibinfo{pages}{143 -- 176}.
\newblock \bibinfo{note}{ArXiv:0905.4679}.
\bibitemend

\bibitemstart{hoelzl}
\bibinfo{author}{Vasco Brattka}, \bibinfo{author}{Guido Gherardi} \&
  \bibinfo{author}{Rupert H\"olzl} (\bibinfo{year}{2013}).
\newblock \emph{\bibinfo{title}{Probabilistic Computability and Choice}}.
\newblock \bibinfo{howpublished}{arXiv 1312.7305}.
\newblock \urlprefix\url{http://arxiv.org/abs/1312.7305}.
\bibitemend

\bibitemstart{gherardi4}
\bibinfo{author}{Vasco Brattka}, \bibinfo{author}{Guido Gherardi} \&
  \bibinfo{author}{Alberto Marcone} (\bibinfo{year}{2012}):
  \emph{\bibinfo{title}{The {B}olzano-{W}eierstrass {T}heorem is the Jump of
  {W}eak {K}\"onig's {L}emma}}.
\newblock {\sl \bibinfo{journal}{Annals of Pure and Applied Logic}}
  \bibinfo{volume}{163}(\bibinfo{number}{6}), pp. \bibinfo{pages}{623--625}.
\newblock \bibinfo{note}{Also arXiv:1101.0792}.
\bibitemend

\bibitemstart{paulybrattka3}
\bibinfo{author}{Vasco Brattka}, \bibinfo{author}{St\'ephane Le~Roux} \&
  \bibinfo{author}{Arno Pauly} (\bibinfo{year}{2012}).
\newblock \emph{\bibinfo{title}{Connected Choice and Brouwer's Fixed Point
  Theorem}}.
\newblock \bibinfo{howpublished}{http://arxiv.org/abs/1206.4809}.
\bibitemend

\bibitemstart{paulybrattka3cie}
\bibinfo{author}{Vasco Brattka}, \bibinfo{author}{St\'ephane Le~Roux} \&
  \bibinfo{author}{Arno Pauly} (\bibinfo{year}{2012}): \emph{\bibinfo{title}{On
  the Computational Content of the {B}rouwer Fixed Point Theorem}}.
\newblock In: \bibinfo{editor}{S.Barry Cooper}, \bibinfo{editor}{Anuj Dawar} \&
  \bibinfo{editor}{Benedikt L\"owe}, editors: {\sl \bibinfo{booktitle}{How the
  World Computes}}, {\sl \bibinfo{series}{Lecture Notes in Computer Science}}
  \bibinfo{volume}{7318}, \bibinfo{publisher}{Springer Berlin Heidelberg}, pp.
  \bibinfo{pages}{56--67}.
\bibitemend

\bibitemstart{paulybrattka4}
\bibinfo{author}{Vasco Brattka} \& \bibinfo{author}{Arno Pauly}.
\newblock \emph{\bibinfo{title}{On the algebraic structure of Weihrauch
  degrees}}.
\newblock \bibinfo{howpublished}{forthcoming}.
\bibitemend

\bibitemstart{paulybrattka2}
\bibinfo{author}{Vasco Brattka} \& \bibinfo{author}{Arno Pauly}
  (\bibinfo{year}{2010}): \emph{\bibinfo{title}{Computation with Advice}}.
\newblock {\sl \bibinfo{journal}{Electronic Proceedings in Theoretical Computer
  Science}} \bibinfo{volume}{24}.
\newblock \urlprefix\url{http://arxiv.org/html/1006.0551}.
\newblock \bibinfo{note}{CCA 2010}.
\bibitemend

\bibitemstart{escardo}
\bibinfo{author}{Martin Escard{\'o}} (\bibinfo{year}{2004}):
  \emph{\bibinfo{title}{Synthetic topology of datatypes and classical spaces}}.
\newblock {\sl \bibinfo{journal}{Electronic Notes in Theoretical Computer
  Science}} \bibinfo{volume}{87}.
\bibitemend

\bibitemstart{falconer}
\bibinfo{author}{K.J. Falconer} (\bibinfo{year}{1986}):
  \emph{\bibinfo{title}{The Geometry of Fractal Sets}}.
\newblock Number~\bibinfo{number}{85} in \bibinfo{series}{Cambridge Tracts in
  Mathematics}. \bibinfo{publisher}{Cambridge University Press}.
\bibitemend

\bibitemstart{fouche}
\bibinfo{author}{Willem~L. Fouch\'e}, \bibinfo{author}{Safari Mukeru} \&
  \bibinfo{author}{George Davie}.
\newblock \emph{\bibinfo{title}{Fourier spectra of measures associated with
  algorithmically random {B}rownian motion}}.
\newblock \bibinfo{howpublished}{arXiv 1406.3715}.
\newblock \urlprefix\url{http://arxiv.org/abs/1406.3715}.
\bibitemend

\bibitemstart{frostman}
\bibinfo{author}{O.~Frostman} (\bibinfo{year}{1935}):
  \emph{\bibinfo{title}{Potentiel d'\'equilibre et capacitz'e des ensembles
  avec quelques applications \`a la th\'eorie des functions}}.
\newblock \bibinfo{type}{Mat. Sem.}, \bibinfo{institution}{Meddel. Lunds.
  Univ.}
\bibitemend

\bibitemstart{gherardi}
\bibinfo{author}{Guido Gherardi} \& \bibinfo{author}{Alberto Marcone}
  (\bibinfo{year}{2009}): \emph{\bibinfo{title}{How incomputable is the
  separable {H}ahn-{B}anach theorem?}}
\newblock {\sl \bibinfo{journal}{Notre Dame Journal of Formal Logic}}
  \bibinfo{volume}{50}(\bibinfo{number}{4}), pp. \bibinfo{pages}{393--425}.
\bibitemend

\bibitemstart{gregoriades3}
\bibinfo{author}{Vassilios Gregoriades} (\bibinfo{year}{to appear}):
  \emph{\bibinfo{title}{Classes of Polish spaces under effective Borel
  isomorphism}}.
\newblock {\sl \bibinfo{journal}{Memoirs of the American Mathematical Society}}
  .
\bibitemend

\bibitemstart{pauly-gregoriades-arxiv}
\bibinfo{author}{Vassilios Gregoriades}, \bibinfo{author}{Tam\'as Kisp\'eter}
  \& \bibinfo{author}{Arno Pauly} (\bibinfo{year}{2014}).
\newblock \emph{\bibinfo{title}{A comparison of concepts from computable
  analysis and effective descriptive set theory}}.
\newblock \bibinfo{howpublished}{arXiv:1401.3325}.
\bibitemend

\bibitemstart{paulykojiro}
\bibinfo{author}{Kojiro Higuchi} \& \bibinfo{author}{Arno Pauly}
  (\bibinfo{year}{2013}): \emph{\bibinfo{title}{The degree-structure of
  {W}eihrauch-reducibility}}.
\newblock {\sl \bibinfo{journal}{Logical Methods in Computer Science}}
  \bibinfo{volume}{9}(\bibinfo{number}{2}).
\bibitemend

\bibitemstart{hogarth}
\bibinfo{author}{Mark~L. Hogarth} (\bibinfo{year}{1992}):
  \emph{\bibinfo{title}{Does General Relativity Allow an Observer to View an
  Eternity in a Finite Time?}}
\newblock {\sl \bibinfo{journal}{Foundations of Physics Letters}}
  \bibinfo{volume}{5}, pp. \bibinfo{pages}{173--181}.
\bibitemend

\bibitemstart{nobrega}
\bibinfo{author}{Hugo de~Holanda Cunha~Nobrega} (\bibinfo{year}{2013}):
  \emph{\bibinfo{title}{Game characterizations of function classes and
  {W}eihrauch degrees}}.
\newblock \bibinfo{type}{M.{S}c. thesis}, \bibinfo{school}{University of
  Amsterdam}.
\bibitemend

\bibitemstart{hoyrup2b}
\bibinfo{author}{Mathieu Hoyrup}, \bibinfo{author}{Cristóbal Rojas} \&
  \bibinfo{author}{Klaus Weihrauch} (\bibinfo{year}{2012}):
  \emph{\bibinfo{title}{Computability of the Radon-Nikodym Derivative}}.
\newblock {\sl \bibinfo{journal}{Computability}}
  \bibinfo{volume}{1}(\bibinfo{number}{1}), pp. \bibinfo{pages}{3--13}.
\bibitemend

\bibitemstart{kreitz}
\bibinfo{author}{Christoph Kreitz} \& \bibinfo{author}{Klaus Weihrauch}
  (\bibinfo{year}{1985}): \emph{\bibinfo{title}{Theory of Representations}}.
\newblock {\sl \bibinfo{journal}{Theoretical Computer Science}}
  \bibinfo{volume}{38}, pp. \bibinfo{pages}{35--53}.
\bibitemend

\bibitemstart{paulyleroux-arxiv}
\bibinfo{author}{St\'ephane Le~Roux} \& \bibinfo{author}{Arno Pauly}
  (\bibinfo{year}{2013}).
\newblock \emph{\bibinfo{title}{Closed choice: Cardinality vs convex
  dimension}}.
\newblock \bibinfo{howpublished}{arXiv arXiv:1302.0380}.
\newblock \urlprefix\url{http://arxiv.org/abs/1302.0380}.
\bibitemend

\bibitemstart{paulyleroux-cie}
\bibinfo{author}{St\'ephane Le~Roux} \& \bibinfo{author}{Arno Pauly}
  (\bibinfo{year}{2013}): \emph{\bibinfo{title}{Closed Choice for Finite and
  for Convex Sets}}.
\newblock In: \bibinfo{editor}{Paola Bonizzoni}, \bibinfo{editor}{Vasco
  Brattka} \& \bibinfo{editor}{Benedikt L\"owe}, editors: {\sl
  \bibinfo{booktitle}{The Nature of Computation. Logic, Algorithms,
  Applications}}, {\sl \bibinfo{series}{Lecture Notes in Computer Science}}
  \bibinfo{volume}{7921}, \bibinfo{publisher}{Springer Berlin Heidelberg}, pp.
  \bibinfo{pages}{294--305}.
\newblock \urlprefix\url{http://dx.doi.org/10.1007/978-3-642-39053-1\_34}.
\bibitemend

\bibitemstart{paulyleroux3-arxiv}
\bibinfo{author}{St\'ephane Le~Roux} \& \bibinfo{author}{Arno Pauly}
  (\bibinfo{year}{2014}).
\newblock \emph{\bibinfo{title}{Weihrauch degrees of finding equilibria in
  sequential games}}.
\newblock \bibinfo{howpublished}{arXiv:1407.5587}.
\bibitemend

\bibitemstart{mylatz}
\bibinfo{author}{Uwe Mylatz} (\bibinfo{year}{1992}):
  \emph{\bibinfo{title}{Vergleich unstetiger {F}unktionen in der {A}nalysis}}.
\newblock \bibinfo{type}{Diplomarbeit}, \bibinfo{school}{Fachbereich
  Informatik, FernUniversit{\"a}t Hagen}.
\bibitemend

\bibitemstart{eike-neumann}
\bibinfo{author}{Eike Neumann}.
\newblock \emph{\bibinfo{title}{Computational problems in metric fixed point
  theory and their {W}eihrauch degrees}}.
\newblock \bibinfo{howpublished}{to appear}.
\bibitemend

\bibitemstart{paulymaster}
\bibinfo{author}{Arno Pauly} (\bibinfo{year}{2007}):
  \emph{\bibinfo{title}{Methoden zum Vergleich der Unstetigkeit von
  Funktionen}}.
\newblock \bibinfo{type}{Masters thesis}, \bibinfo{school}{FernUniversit\"{a}t
  Hagen}.
\bibitemend

\bibitemstart{paulymeasurement}
\bibinfo{author}{Arno Pauly} (\bibinfo{year}{2009}):
  \emph{\bibinfo{title}{Representing Measurement Results}}.
\newblock {\sl \bibinfo{journal}{Journal of Universal Computer Science}}
  \bibinfo{volume}{15}(\bibinfo{number}{6}), pp. \bibinfo{pages}{1280--1300}.
\bibitemend

\bibitemstart{paulyincomputabilitynashequilibria}
\bibinfo{author}{Arno Pauly} (\bibinfo{year}{2010}): \emph{\bibinfo{title}{How
  Incomputable is Finding {N}ash Equilibria?}}
\newblock {\sl \bibinfo{journal}{Journal of Universal Computer Science}}
  \bibinfo{volume}{16}(\bibinfo{number}{18}), pp. \bibinfo{pages}{2686--2710}.
\bibitemend

\bibitemstart{paulyreducibilitylattice}
\bibinfo{author}{Arno Pauly} (\bibinfo{year}{2010}): \emph{\bibinfo{title}{On
  the (semi)lattices induced by continuous reducibilities}}.
\newblock {\sl \bibinfo{journal}{Mathematical Logic Quarterly}}
  \bibinfo{volume}{56}(\bibinfo{number}{5}), pp. \bibinfo{pages}{488--502}.
\bibitemend

\bibitemstart{pauly-synthetic-arxiv}
\bibinfo{author}{Arno Pauly} (\bibinfo{year}{2012}).
\newblock \emph{\bibinfo{title}{A new introduction to the theory of represented
  spaces}}.
\newblock \bibinfo{howpublished}{http://arxiv.org/abs/1204.3763}.
\bibitemend

\bibitemstart{schroder5}
\bibinfo{author}{Matthias Schr\"oder} (\bibinfo{year}{2002}):
  \emph{\bibinfo{title}{Admissible Representations for Continuous
  Computations}}.
\newblock \bibinfo{type}{Ph.D. thesis}, \bibinfo{school}{FernUniversit\"at
  Hagen}.
\bibitemend

\bibitemstart{schroder}
\bibinfo{author}{Matthias Schr\"{o}der} (\bibinfo{year}{2002}):
  \emph{\bibinfo{title}{Extended admissibility}}.
\newblock {\sl \bibinfo{journal}{Theoretical Computer Science}}
  \bibinfo{volume}{284}(\bibinfo{number}{2}), pp. \bibinfo{pages}{519--538}.
\bibitemend

\bibitemstart{schroder2}
\bibinfo{author}{Matthias Schr\"oder} (\bibinfo{year}{2007}):
  \emph{\bibinfo{title}{Admissible Representations for Probability Measures}}.
\newblock {\sl \bibinfo{journal}{Mathematical Logic Quarterly}}
  \bibinfo{volume}{53}(\bibinfo{number}{4}), pp. \bibinfo{pages}{431--445}.
\bibitemend

\bibitemstart{shmerkin}
\bibinfo{author}{Pablo Shmerkin}.
\newblock \emph{\bibinfo{title}{Uncountable preimage of every point}}.
\newblock \bibinfo{howpublished}{MathOverflow}.
\newblock \urlprefix\url{http://mathoverflow.net/q/47633}.
\newblock \bibinfo{note}{URL:http://mathoverflow.net/q/47633 (version:
  2010-11-29)}.
\bibitemend

\bibitemstart{sierpinski}
\bibinfo{author}{W.~Sierpi\'nski} \& \bibinfo{author}{E.~Szpilrajn-Marczeswski}
  (\bibinfo{year}{1932}): \emph{\bibinfo{title}{Remarque sur le proble\'me de
  la measure}}.
\newblock {\sl \bibinfo{journal}{Fundamenta Mathematicae}}
  \bibinfo{volume}{32}, pp. \bibinfo{pages}{133--148}.
\bibitemend

\bibitemstart{stein}
\bibinfo{author}{Thorsten~von Stein} (\bibinfo{year}{1989}):
  \emph{\bibinfo{title}{Vergleich nicht konstruktiv l{\"o}sbarer Probleme in
  der Analysis}}.
\newblock \bibinfo{type}{Master's thesis}, \bibinfo{school}{Fachbereich
  Informatik, FernUniversit\"at Hagen}.
\bibitemend

\bibitemstart{weihrauchi}
\bibinfo{author}{Klaus Weihrauch} (\bibinfo{year}{1993}):
  \emph{\bibinfo{title}{Computability on computable metric spaces}}.
\newblock {\sl \bibinfo{journal}{Theoretical Computer Science}}
  \bibinfo{volume}{113}, pp. \bibinfo{pages}{191--210}.
\bibitemend

\bibitemstart{weihrauche}
\bibinfo{author}{Klaus Weihrauch} (\bibinfo{year}{1999}):
  \emph{\bibinfo{title}{Computability on the probability measures on the
  {B}orel sets of the unit interval}}.
\newblock {\sl \bibinfo{journal}{Theoretical Computer Science}}
  (\bibinfo{number}{219}), pp. \bibinfo{pages}{421--437}.
\bibitemend

\bibitemstart{ziegler7}
\bibinfo{author}{Martin Ziegler} (\bibinfo{year}{2005}):
  \emph{\bibinfo{title}{Computability and continuity on the real arithmetic
  hierarchy and the power of type-2 nondeterminism}}.
\newblock In: \bibinfo{editor}{Barry~S. Cooper}, \bibinfo{editor}{Benedikt
  L\"owe} \& \bibinfo{editor}{Leen Torenvliet}, editors: {\sl
  \bibinfo{booktitle}{Proceedings of CiE 2005}}, {\sl \bibinfo{series}{LNCS}}
  \bibinfo{volume}{3526}, \bibinfo{publisher}{Springer}, pp.
  \bibinfo{pages}{562--571}.
\bibitemend

\bibitemstart{ziegler2}
\bibinfo{author}{Martin Ziegler} (\bibinfo{year}{2007}):
  \emph{\bibinfo{title}{Real Hypercomputation and Continuity}}.
\newblock {\sl \bibinfo{journal}{Theory of Computing Systems}}
  \bibinfo{volume}{41}, pp. \bibinfo{pages}{177 -- 206}.
\bibitemend

\bibitemstart{zindulka}
\bibinfo{author}{O.~Zindulka} (\bibinfo{year}{2012}):
  \emph{\bibinfo{title}{Universal measure zero, large {H}ausdorff dimension,
  and nearly {L}ipschitz maps}}.
\newblock {\sl \bibinfo{journal}{Fundamenta Mathematicae}}
  \bibinfo{volume}{218}(\bibinfo{number}{2}), pp. \bibinfo{pages}{95--119}.
\bibitemend

\bibliographyend
\end{thebibliography}
\end{document}